\numberwithin{equation}{section}
\newtheorem{theorem}{Theorem}
\newtheorem{remark}{Remark}
\newtheorem{lemma}{Lemma}
\newtheorem{prop}{Proposition}
\begin{document}

\title[Asymptotics of a cubic sine kernel determinant]{Asymptotics of a cubic sine kernel determinant}
\author{Thomas Bothner}
\address{Department of Mathematical Sciences,
Indiana University-Purdue University Indianapolis,
402 N. Blackford St., Indianapolis, IN 46202, U.S.A.}
\email{tbothner@iupui.edu}

\author{Alexander Its}
\address{Department of Mathematical Sciences,
Indiana University-Purdue University Indianapolis,
402 N. Blackford St., Indianapolis, IN 46202, U.S.A.}
\email{itsa@math.iupui.edu}

\thanks{This work was supported in part by the National Science Foundation (NSF) Grant DMS-1001777.}
\dedicatory{To the memory of Vladimir Savelievich Buslaev}
\date{\today}

\begin{abstract}
We study the one parameter family of Fredholm determinants $\det(I-\gamma K_{\textnormal{csin}}),\gamma\in\mathbb{R}$ of an integrable Fredholm operator $K_{\textnormal{csin}}$ acting on the interval $(-s,s)$ whose kernel is a cubic generalization of the sine kernel which appears in random matrix theory. This Fredholm determinant appears  in the description of the Fermi distribution of semiclassical non-equilibrium Fermi states in condensed matter physics as well as in random matrix theory. Using the Riemann-Hilbert method, we calculate the large $s$-asymptotics of $\det\left(I-\gamma K_{\textnormal{csin}}\right)$ for all values of the real parameter $\gamma$.

\end{abstract}

\maketitle

\section{Introduction}\label{sec1}
Let us consider the real vector space $\mathcal{M}(n)\simeq\mathbb{R}^{n^2}$ of $n\times n$ Hermitian matrices $M=(M_{ij})=\overline{M}^t$ equipped with the probability distribution
\begin{equation}\label{UEmodel}
	P^{(n,N)}(M)dM = ce^{-N\textnormal{tr}V(M)}dM,\hspace{1cm} c\int\limits_{\mathcal{M}(n)}e^{-N\textnormal{tr}V(M)}dM=1.
\end{equation}
Here $dM$ denotes the Haar measure on $\mathcal{M}(n)$, $N$ is a fixed integer and the potential $V:\mathbb{R}\rightarrow\mathbb{R}$ is assumed to be real analytic satisfying the growth condition
\begin{equation}\label{potregul}
	\frac{V(x)}{\ln(x^2+1)}\rightarrow\infty\hspace{0.5cm}\ \textnormal{as}\ |x|\rightarrow\infty.
\end{equation}
The model $\big(\mathcal{M}(n),P^{(n,N)}dM\big)$ is commonly referred to as unitary matrix model and it is well known (cf. \cite{D,M}) that basic statistical quantities such as
\begin{equation*}
	E_{n,N}(s) = \textnormal{Prob}\big(M\in\mathcal{M}(n)\ \textnormal{has no eigenvalues in the interval}\ (-s,s),\ s>0\big)
\end{equation*}
can be expressed as Fredholm determinants, indeed
\begin{eqnarray*}
	E_{n,N}(s) &=& \sum_{j=0}^n\frac{(-1)^j}{j!}\int\limits_{-s}^s\cdots\int\limits_{-s}^s\det\big(K_{n,N}(x_k,x_l)_{k,l=1}^j\big)dx_1\cdots dx_j\\
	&\equiv& \det(I-K_{n,N})
\end{eqnarray*}
where $K_{n,N}$ is the finite rank operator with kernel
\begin{equation}\label{correlkernel}
	K_{n,N}(x,y)=e^{-\frac{N}{2}V(x)}e^{-\frac{N}{2}V(y)}\sum_{i=0}^{n-1}p_i(x)p_i(y),\hspace{0.5cm}\int\limits_{\mathbb{R}}p_i(x)p_j(x)e^{-NV(x)}dx=\delta_{ij}
\end{equation}
acting on $L^2\big((-s,s),d\lambda\big)$.\smallskip

The given assumption \eqref{potregul} enables one to study scaling limits $n,N\rightarrow\infty$ (compare \cite{D},\cite {DKM}). First the mean eigenvalue density $\frac{1}{n}K_{n,N}(x,x)$ has a limit, that is for all bounded continuous functions $g:\mathbb{R}\rightarrow\mathbb{R}$
\begin{equation}\label{meandens}
	\lim_{\substack{n,N\rightarrow\infty\\
	\frac{n}{N}\rightarrow 1}}\frac{1}{n}\int g(x)K_{n,N}(x,x)dx = \int\limits_{\Sigma_V}g(x)\rho_V(x)dx\equiv\int\limits_{\Sigma_V}g(x)d\mu_V(x)
\end{equation}
and the support $\Sigma_V$ of the equilibrium measure $\mu_V$ consists of a finite union of intervals. Here the density $\rho_V(x)$ is determined by the potential $V(x)$. At the same time, the local statistics of eigenvalues in the large $n,N$ limit satisfies the so-called {\it universality property}, i.e. it is determined only by the local characteristics of the eigenvalue density $\rho_V$  (see \cite{PS, BI1, CK, DKMVZ}): For instance let us choose a regular point $x^{\ast}\in\Sigma_V$, i.e. $\rho_V(x^{\ast})>0$. Then the \emph{bulk universality} states that
\begin{equation}\label{bulkregular}
	\lim_{n\rightarrow\infty}\frac{1}{n\rho_V(x^{\ast})}K_{n,n}\bigg(x^{\ast}+\frac{\lambda}{n\rho_V(x^{\ast})},x^{\ast}+\frac{\mu}{n\rho_V(x^{\ast})}\bigg) = K_{\sin}(\lambda,\mu)\equiv \frac{\sin\pi(\lambda-\mu)}{\pi(\lambda-\mu)}
\end{equation}
uniformly on compact subsets of $\mathbb{R}$, or in other words for a regular point $x^{\ast}$
\begin{equation*}
	\lim_{n\rightarrow\infty}\textnormal{Prob}\bigg(M\in\mathcal{M}(n)\ \textnormal{has no eigenvalues}\in \Big(x^{\ast}-\frac{s}{n\rho_V(x^{\ast})},x^{\ast}+\frac{s}{n\rho_V(x^{\ast})}\Big)\bigg)
\end{equation*}
\begin{equation}\label{dyson1}
	=\det(I-K_{\sin})
\end{equation}
where $K_{\sin}$ is the trace class operator on $L^2\big((-s,s);d\lambda\big)$ with kernel $K_{\sin}(\lambda,\mu)$ given in \eqref{bulkregular}. The latter equation shows that in double scaling limits the basic statistical properties of hermitian random matrices are still expressible in terms of Fredholm determinants. On the other hand, the Fredholm determinant in the right hand side of \eqref{dyson1} admits the following asymptotic expansion \cite{dyson} 
\begin{equation}\label{dyson2}
	\ln\det(I-K_{\sin}) = -\frac{(\pi s)^2}{2}-\frac{1}{4}\ln(\pi s)+\frac{1}{12}\ln 2+3\zeta'(-1)+O\big(s^{-1}\big),\hspace{0.5cm}s\rightarrow\infty,
\end{equation}
where $\zeta(z)$ is the Riemann zeta-function. This formula yields one of the most important results in random matrix theory, i.e. an explicit evaluation of {\it the large gap probability}, and its rigorous derivation was obtained in a series of papers \cite{DIKZ,E,K}.\smallskip

The focus of the current paper lies on the computation of similar expansions for a certain family of generalized sine kernels. Let $K_{\textnormal{csin}}$ denote the trace class operator on $L^2\big((-s,s);d\lambda\big)$ with kernel
\begin{equation}\label{lambda3kernel}
	K_{\textnormal{csin}}(\lambda,\mu) = \frac{\sin\big(\frac{4}{3}(\lambda^3-\mu^3)+x(\lambda-\mu)\big)}{\pi(\lambda-\mu)},\ \ x\in\mathbb{R}
\end{equation}
and $\det(I- K_{\textnormal{csin}})$ the corresponding Fredholm determinant. This Fredholm determinant appeared in \cite{BoI1} in the asymptotical analysis of the Fredholm determinant
\begin{equation*}
	\det\left(I-K_{\textnormal{PII}}\right)
\end{equation*}
which corresponds to the first bulk critical universality class in the unitary matrix model: Let us assume $\rho_V(x)$ vanishes quadratically at an interior point $x^{\ast}\in\Sigma_V$. Opposed to \eqref{bulkregular}, the scaling limit is more complicated \cite{BI2,CK}. Let $\rho_V(x^{\ast})=\rho'_V(x^{\ast})=0,\rho''_V(x^{\ast})>0$ and $n,N\rightarrow\infty$ such that
\begin{equation*}
	\lim_{n,N\rightarrow\infty}n^{2/3}\bigg(\frac{n}{N}-1\bigg)=C
\end{equation*}
exists with $C\in\mathbb{R}$. Then the \emph{critical bulk universality} guarantees existence of positive constants $c$ and $c_1$ such that
\begin{equation}\label{bulkcritical}
	\lim_{n,N\rightarrow\infty}\frac{1}{cn^{1/3}}K_{n,N}\bigg(x^{\ast}+\frac{\lambda}{cn^{1/3}},x^{\ast}+\frac{\mu}{cn^{1/3}}\bigg)=K_{\textnormal{PII}}(\lambda,\mu;x)
\end{equation}
uniformly on compact subsets of $\mathbb{R}$ where the variable $x$ is the scaling parameter defined by the relation
\begin{equation*}
	\lim_{n,N\rightarrow\infty}n^{2/3}\bigg(\frac{n}{N}-1\bigg)=xc_1.
\end{equation*}
Here the limiting kernel $K_{\textnormal{PII}}(\lambda,\mu;x)$ is constructed out of the $\Psi$-function associated with a special solution of the second Painlev\'e equation, the Hastings-McLeod solution $u=u(x)$, see \cite{BoI1} for a precise definition of the Painlev\'e II kernel. The main result in \cite{BoI1} is the following analogue of the Dyson formula \eqref{dyson2} for the Painlev\'e II determinant. As $s\rightarrow\infty$
\begin{equation*}
	\ln \det(I-K_{\textnormal{PII}})=-\frac{2}{3}s^6-s^4x-\frac{1}{2}(sx)^2 - \frac{3}{4}\ln s + \int\limits_x^{\infty}(y-x)u^2(y)dy
\end{equation*}
\begin{equation}\label{theo1err}
	 -\frac{1}{6}\ln 2+3\zeta'(-1)+O\big(s^{-1}\big),
\end{equation}
and the error term in \eqref{theo1err} is uniform on any compact subset of the set 
\begin{equation}\label{exceptset0}
	\left\{x\in\mathbb{R}:\ -\infty<x<\infty\right\}.
\end{equation}
The proof of the latter expansion in \cite{BoI1} is based on a Riemann-Hilbert approach and an approximation argument which allows to derive the constant - with respect to $s$ - term
\begin{equation*}
	c_0\equiv c_0(x) = -\ln F_{TW}(x)-\frac{1}{6}\ln 2+3\zeta'(-1)
\end{equation*}
where $F_{TW}(x)$ is the celebrated Tracy-Widom distribution function,
\begin{equation*}
	F_{TW}(x) = e^{-\int_x^{\infty}(y-x)u^2(y)dy}.
\end{equation*}
In more detail, the $x$ -dependence of the constant term follows from the analysis of the
relevant differential identities. These identities though can not produce the numerical part of the
constant. This part follows from the fact that  in the large positive $x$-limit
\begin{equation*}
	K_{\textnormal{PII}}(\lambda,\mu;x)=K_{\textnormal{csin}}(\lambda,\mu;x)\left(1+O\Big(x^{-1/4}e^{-\frac{2}{3}x^{3/2}}\Big)\right),\hspace{0.5cm}x\rightarrow+\infty,\ \lambda,\mu\in(-s,s).
\end{equation*}
The last estimate, in conjunction with the Riemann-Hilbert analysis for the family $\check{K}_{\textnormal{csin}}$ with kernel
\begin{equation*}
	\check{K}_{\textnormal{csin}}(\lambda,\mu;s) = \frac{\sin\left(\frac{4}{3}t(\lambda^3-\mu^3)+x(\lambda-\mu)\right)}{\pi(\lambda-\mu)},\hspace{0.5cm} t\in[0,1]
\end{equation*}
leads to the constant term in \eqref{theo1err}. As a by-product of this analysis, \cite{BoI1} states, besides \eqref{theo1err}, the asymptotic relation,
\begin{equation}\label{theo2err}
\ln \det(I-K_{\textnormal{csin}})=-\frac{2}{3}s^6-s^4x-\frac{1}{2}(sx)^2 - \frac{3}{4}\ln s 
 -\frac{1}{6}\ln 2+3\zeta'(-1)+O\big(s^{-1}\big),
\end{equation}
valid as $s\rightarrow\infty$ and the error term in \eqref{theo2err} is uniform on any compact subset of the set \eqref{exceptset0}.\smallskip

In order to describe other spectral properties of large Hermitian matrices, one needs to study the Fredholm determinant
\begin{equation}\label{parafamily}
	\det\left(I-\gamma K_{\textnormal{PII}}\right)
\end{equation}
for the values of $\gamma$ which are different from $\gamma=1$. Such one-parameter families of determinants already appear in connection with the sine-kernel determinant, for instance in the famous Montgomery-Odlyzko conjecture \cite{Mo,O} concerning the zeros of the Riemann zeta-function, in the description of the emptiness formation probability and other correlation functions in one-dimensional impenetrable Bose gas \cite{IIK,IIKV1,IIKV2} as well as in a number of other important mathematical and theoretical physics applications.\smallskip

The analytical challenge of the determinants \eqref{parafamily} is once again the large $s$ asymptotics. In the case of the sine-kernel determinats, the result is well known (see  \cite{suleiman, MT1, MT2, BT, BB} and \cite{DIKZ} for more on the history of the question)
\begin{enumerate}
	\item[1.] As $s\rightarrow\infty$,
	\begin{equation*}
		\ln\det\left(I-\gamma K_{\sin}\right)=-4\kappa\pi s + 2\kappa^2\ln\pi s+\chi_{\sin} +O\big(s^{-1}\big)
	\end{equation*}
	uniformly on and compact subset of the set $\left\{\gamma\in\mathbb{R}:\ -\infty<\gamma<1\right\}$, where
	\begin{equation*}
		\kappa\equiv \kappa(\gamma) = -\frac{1}{2\pi}\ln(1-\gamma).
	\end{equation*}
	The constant $\chi_{\sin}\equiv \chi_{\sin}(\gamma)$ is given by the equation,
	\begin{equation}\label{chidef}
	\chi_{\sin} = 2\kappa^2 + 4\kappa^2\ln 2 - 4\int_{0}^{\gamma}\kappa(t)\frac{d}{dt}\arg\Gamma(i\kappa(t))dt,
	\end{equation}
	which was obtained by A. Budylin and V. Buslaev in \cite{BB} as a corollary of their main result in \cite{BB} -
	the asymptotics of the resolvent of the kernel $\gamma K_{\sin}(\lambda, \mu)$. Formula (\ref{chidef}) also 
	follows from the general theorem of E. Basor and H. Widom concerning the determinants of Toeplitz integral operators
	with piecewise continuous symbols \cite{BW}  . 
	\item[2.] For $\gamma$ chosen from any compact subset of the set $\left\{\gamma\in\mathbb{R}:\ 1<\gamma<\infty\right\}$, the Fredholm determinant $\det\left(I-\gamma K_{\sin}\right)$ has infinitely many zeros $\left\{s_n\right\}$ which accumulate at infinity (see \cite{suleiman, MT1, MT2}).
\end{enumerate}
The main results of the present paper are the following analogues for the cubic sine-kernel \eqref{lambda3kernel}, which together with \eqref{theo2err} state the large $s$ behavior of $\det\left(I-\gamma K_{\textnormal{csin}}\right)$ for all values of the parameter $\gamma$.
\begin{theorem}\label{theo1}
Let $K_{\textnormal{csin}}$ denote the trace class operator on $L^2\big((-s,s);d\lambda\big)$ with kernel as in \eqref{lambda3kernel}. As $s\rightarrow\infty$ 
\begin{equation}\label{theo1result}
	\ln\det(I-\gamma K_{\textnormal{csin}}) = -\kappa\bigg(\frac{16}{3}s^3+4xs\bigg)+6\kappa^2\ln s-\int\limits_x^{\infty}(y-x)u^2(y)dy +\chi +O\big(s^{-1}\big)
\end{equation}
uniformly on any compact subset of the set
\begin{equation}\label{excset1}
	\{(\gamma,x)\in\mathbb{R}^2:\ -\infty<\gamma<1,\ -\infty<x<\infty\},
\end{equation}
where
\begin{equation*}
		\kappa \equiv -i\nu(\gamma)= -\frac{1}{2\pi}\ln(1-\gamma),
\end{equation*}
\begin{equation*}
	\chi =2\kappa^2 + 8\kappa^2\ln 2 - 4\int_{0}^{\gamma}\kappa(t)\frac{d}{dt}\arg\Gamma(i\kappa(t))dt,
\end{equation*}
with the Euler gamma-function $\Gamma(z)$ and $u=u(x,\gamma)$ denotes the real-valued Ablowitz-Segur solution of the second Painlev\'e equation $u_{xx}=xu+2u^3$ corresponding to the monodromy surface
\begin{equation*}
		 \mathbb{M}=\{ (s_1,\ldots,s_6)|\ s_1=-i\gamma,s_2=0,s_3=\bar{s}_1,s_{n+3}=-s_n\}.
\end{equation*}
\end{theorem}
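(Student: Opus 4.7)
The approach combines Deift--Zhou nonlinear steepest descent for the Riemann--Hilbert problem naturally attached to $\gamma K_{\textnormal{csin}}$ with the integration of differential identities in $s$, $x$ and $\gamma$, in the spirit of \cite{BB,DIKZ,BoI1}. Since $K_{\textnormal{csin}}$ is integrable in the Its--Izergin--Korepin--Slavnov sense with phase $\theta(\lambda)=\tfrac{4}{3}\lambda^3+x\lambda$, the resolvent of $\gamma K_{\textnormal{csin}}$ is encoded in a $2\times 2$ RHP $Y=Y(\lambda;s,x,\gamma)$, normalized to $I$ at infinity with a single jump on $(-s,s)$ built from $e^{\pm i\theta(\lambda)}$. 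Standard IIKS manipulations yield differential identities for $\partial_s\ln\det(I-\gamma K_{\textnormal{csin}})$, $\partial_x\ln\det(I-\gamma K_{\textnormal{csin}})$ and $\partial_\gamma\ln\det(I-\gamma K_{\textnormal{csin}})$ whose right-hand sides are explicit functions of the boundary values of $Y$ at $\pm s$ and of the sub-leading coefficients of $Y$ at infinity.

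I would then rescale $\lambda=s\zeta$, introduce a $g$-function built from $\theta$, and open lenses off the interval $(-1,1)$, reducing the RHP to small-norm corrections outside shrinking discs around $\zeta=\pm 1$. The outer parametrix is explicit. Because $\gamma<1$, the local factorization at each endpoint is of confluent hypergeometric (Fisher--Hartwig) type with parameter $i\kappa=\frac{1}{2\pi i}\ln(1-\gamma)$, so one transplants the CHF parametrix of Budylin--Buslaev through the local conformal coordinate $\lambda\mapsto \theta(\lambda)-\theta(\pm s)$, which is a biholomorphism near $\pm s$ because $\theta'(\pm s)=4s^2+x\neq 0$ for $s$ large. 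Complementary to this, and as in the approximation argument of \cite{BoI1}, the Painlev\'e II $\Psi$-system with Stokes data $(s_1,s_2,s_3)=(-i\gamma,0,i\gamma)$ supplies, via isomonodromy, the Jimbo--Miwa tau function associated with the Ablowitz--Segur solution $u(\cdot,\gamma)$.

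Plugging these asymptotics into the three differential identities and integrating then produces the claim. The $s$-identity, integrated in $s$, recovers $-\kappa(\tfrac{16}{3}s^3+4xs)=-4\kappa\,\theta(s)$ from the outer parametrix and $6\kappa^2\ln s$ from the two CHF parametrices via the cubic coordinate $\theta$; the $x$-identity, compared with the Ablowitz--Segur tau function, produces $-\int_x^\infty(y-x)u^2(y)\,dy$; and the $\gamma$-identity, integrated from $\gamma=0$ (where $\det=1$) using Budylin--Buslaev's $\Gamma$-function evaluation of the endpoint residues, yields the constant $\chi$. The principal obstacle is the local analysis at $\lambda=\pm s$: the CHF parametrix is designed for a linear phase, and one must carefully rescale it through the cubic coordinate $\theta$ in order to pin down the sub-leading contributions $6\kappa^2\ln s$ and $8\kappa^2\ln 2$ (double the sine-kernel values $2\kappa^2$ and $4\kappa^2$) while simultaneously controlling the matching to the outer parametrix with an $O(s^{-1})$ error uniformly in $(\gamma,x)$ on compact subsets of \eqref{excset1}. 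Tracking this uniformity under the joint variation of $\gamma$ and $x$ is the delicate part; once it is in place, the three identities close the argument.
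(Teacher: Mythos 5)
Your overall architecture (IIKS representation, the three logarithmic differential identities, rescaling, lens opening, confluent hypergeometric parametrices at the endpoints with parameter $\nu=\frac{1}{2\pi i}\ln(1-\gamma)$, and integration of the $\gamma$-identity from $\gamma=0$ to fix the constant) matches the paper's. But there is a genuine gap: you claim that after opening lenses the problem reduces to ``small-norm corrections outside shrinking discs around $\zeta=\pm1$.'' This is false. The rescaled phase is $\vartheta(z)=i\bigl(\tfrac43 z^3+xz/s^2\bigr)$, which degenerates \emph{cubically} at $z=0$; the sign regions of $\mathrm{Re}\,\vartheta$ pinch at the origin, so the off-diagonal lens jumps are not exponentially small in any fixed neighborhood of $z=0$. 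One therefore needs a third local parametrix, at the origin, and this is exactly where the Ablowitz--Segur transcendent enters: the paper installs at $z=0$ the $\Psi$-function of the Painlev\'e II Lax pair with Stokes multipliers $\mp i\gamma$ (in the local variable $\zeta=sz$, so that the full phase $\tfrac43\zeta^3+x\zeta$ is reproduced), and the subleading coefficients $m_1,m_2,m_3$ of that parametrix --- built from $u$, $u_x$ and $v=u_x^2-xu^2-u^4$ --- are what propagate into $X_1,X_2,X_3$ and hence into the $x$- and $\gamma$-identities, producing $-\int_x^\infty(y-x)u^2(y)\,dy$ after an integration by parts (Proposition~6 of the paper). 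Your alternative mechanism --- invoking ``the Jimbo--Miwa tau function via isomonodromy'' and ``comparing'' the $x$-identity with it --- does not substitute for this: without the origin parametrix the small-norm argument does not close at all, and the differential identities simply contain no quantity from which $u$ could emerge.

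Two smaller corrections. First, the arithmetic of the logarithmic constants: the local CHF variable at $z=\pm1$ is $\zeta(z)=-2is^3(\vartheta(z)-\vartheta(\pm1))$, so $\zeta\sim(16s^3+4xs)(z\mp1)$ and $\ln\zeta\sim 3\ln s+4\ln 2+\ldots$; hence the coefficient of $\ln s$ is \emph{tripled} relative to the sine kernel ($2\kappa^2\mapsto 6\kappa^2$) while the $\ln 2$ term becomes $8\kappa^2\ln 2$ from the factor $16$ --- your parenthetical ``double the sine-kernel values'' is not the right bookkeeping. Second, no $g$-function is needed (or used): the bare phase $\vartheta$ already has the correct sign structure after rescaling, and the outer parametrix is just $M(z)=\bigl(\frac{z+1}{z-1}\bigr)^{-\nu\sigma_3}$. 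With the origin parametrix added and the uniformity of the error estimates in $(\gamma,x)$ tracked as you indicate, the rest of your plan does reproduce the theorem.
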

The Ablowitz-Segur solution $u=u(x,\gamma)$ \cite{SA} used in the statement of Theorem \ref{theo1} is given as the unique solution to the boundary value problem
\begin{equation}\label{Absegureq}
	u_{xx}=xu+2u^3,\hspace{0.5cm}u(x)\sim\gamma\textnormal{Ai}(x),\hspace{0.5cm}x\rightarrow+\infty,\ \gamma\neq 1.
\end{equation}
Such solutions are smooth in case $\gamma<1$, with exponentially fast decay as $x\rightarrow+\infty$ and oscillatory behavior as $x\rightarrow-\infty$, see e.g. \cite{FIKN}. On the other hand for $\gamma>1$, the solution to \eqref{Absegureq} has poles on the real axis, but is still pole-free for sufficiently large positive $x$, in fact \cite{Ber} for $(\gamma,x)$ chosen from any compact subset of the set
\begin{equation}\label{excset2}
	\left\{(\gamma,x)\in\mathbb{R}^2:\ 1<\gamma<\infty,\ x>\left(\frac{3}{2}\ln\gamma\right)^{2/3}\right\}
\end{equation}
the solution $u=u(x,\gamma)$ to \eqref{Absegureq} is pole-free. This in turn leads to our second asymptotic result.
\begin{theorem}\label{theo2}  For $(\gamma,x)$ chosen from any compact subset of the set \eqref{excset2},
the Fredholm determinant $\det\left(I-\gamma K_{\textnormal{csin}}\right)$ has infinitely many zeros $\{s_n\}$ with asymptotic distribution
\begin{equation}\label{theo2result}
	\frac{8}{3}s_n^3+2xs_n+\frac{1}{\pi}\ln(\gamma-1)\ln\big(16s_n^3+4xs_n\big)-\textnormal{arg}\frac{\Gamma(1-i\kappa)}{\Gamma(i\kappa)}\sim \frac{\pi}{2}+n\pi,\ \ n\rightarrow \infty.
\end{equation}
\end{theorem}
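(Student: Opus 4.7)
My plan is to deduce Theorem \ref{theo2} from Theorem \ref{theo1} by analytically continuing the asymptotic expansion \eqref{theo1result} across the branch point at $\gamma=1$ into the regime $\gamma>1$, and then locating real zeros from the oscillatory structure that emerges. For $\gamma>1$ the parameter $\kappa(\gamma)=-\frac{1}{2\pi}\ln(1-\gamma)$ is two-valued, with the two principal branches giving the complex conjugate values
\begin{equation*}
\kappa_\pm = -\frac{1}{2\pi}\ln(\gamma-1) \mp \frac{i}{2}.
\end{equation*}
Write $D_\pm(\gamma,s)$ for the exponentials of the two analytic continuations of the right-hand side of \eqref{theo1result} obtained by continuing $\kappa$ along paths through the upper and lower $\gamma$-half-planes. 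Since $\det(I-\gamma K_{\textnormal{csin}})$ is entire in $\gamma$ and real for real $(\gamma,s)$, Schwarz reflection forces $D_-=\overline{D_+}$, and the true determinant is recovered in the large-$s$ asymptotics as
\begin{equation*}
\det(I-\gamma K_{\textnormal{csin}}) \;=\; D_+(\gamma,s) + D_-(\gamma,s) + O(s^{-1}) \;=\; 2\,\textnormal{Re}\,D_+(\gamma,s) + O(s^{-1}).
\end{equation*}
Writing $\ln D_+ = A + iB$ with $A,B$ real, the large real zeros are characterized by $\cos B(s_n) = 0$, i.e.\ $B(s_n) = \tfrac{\pi}{2} + n\pi$ modulo $O(s_n^{-1})$.

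The first step is to justify uniform validity of the Riemann-Hilbert steepest descent analysis of \cite{BoI1} underlying Theorem \ref{theo1} under complex perturbations of $\gamma$ around the set \eqref{excset2}. The constraint $x>(\tfrac{3}{2}\ln\gamma)^{2/3}$ is precisely the pole-free threshold of \cite{Ber} for the Ablowitz-Segur transcendent $u(x,\gamma)$, ensuring that $\int_x^\infty(y-x)u^2(y)dy$ continues as a real constant and the local Painlev\'e parametrices remain uniformly invertible. Granting this, $\ln D_+$ satisfies \eqref{theo1result} with $\kappa$ replaced by $\kappa_+$ and $\chi$ replaced by its analytic continuation $\chi(\kappa_+)$, and the $O(s^{-1})$ error stays uniform on compact subsets of \eqref{excset2}.

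Substituting $\kappa_+ = \nu - \tfrac{i}{2}$ with $\nu = -\tfrac{1}{2\pi}\ln(\gamma-1)$ and extracting the imaginary part yields
\begin{equation*}
B(s) = \tfrac{8}{3}s^3 + 2xs - 6\nu\ln s + \textnormal{Im}\,\chi(\kappa_+) + O(s^{-1}).
\end{equation*}
The elementary identity $3\ln s = \ln(16s^3 + 4xs) - 4\ln 2 + O(s^{-2})$ rewrites $-6\nu\ln s$ as $\frac{1}{\pi}\ln(\gamma-1)\ln(16s^3+4xs) + 8\nu\ln 2$; the numerical remainder $8\nu\ln 2$ cancels against $\textnormal{Im}\,[8\kappa_+^2\ln 2]$ inside $\chi(\kappa_+)$. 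An integration by parts on the $\int_0^\gamma\kappa(t)\tfrac{d}{dt}\arg\Gamma(i\kappa(t))\,dt$ integral reduces it to boundary contributions that combine into $-\arg\frac{\Gamma(1-i\kappa_+)}{\Gamma(i\kappa_+)}$, and inverting $B(s_n) = \tfrac{\pi}{2} + n\pi$ by the implicit function theorem produces \eqref{theo2result}.

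The principal obstacle is the first step: propagating the uniform error estimate of the steepest descent argument through the complex continuation past the critical value $\gamma=1$, and checking that no new singularity of the parametrices appears as $\gamma$ leaves the real axis near $\gamma=1$. Once this is in place the remaining manipulations are elementary asymptotic expansion and root-finding.
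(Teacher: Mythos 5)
Your strategy is genuinely different from the paper's, and it contains a gap that is not a technicality but the entire content of the theorem. The identity
\begin{equation*}
\det\left(I-\gamma K_{\textnormal{csin}}\right)=D_+(\gamma,s)+D_-(\gamma,s)+O\big(s^{-1}\big)
\end{equation*}
for real $\gamma>1$ is asserted, not proved. Schwarz reflection only gives $D_-=\overline{D_+}$ \emph{provided} both continuations exist as valid asymptotic representations; it says nothing about the determinant being their sum. Asymptotic expansions do not in general commute with analytic continuation across a Stokes line (here $\gamma=1$), and the claim that the true asymptotics on the far side equals the sum of the two continuations is exactly the kind of connection formula that must be established case by case. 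Worse, the uniform validity of $D_+$ for complex $\gamma$ approaching the ray $(1,\infty)$ cannot be extracted from the $\gamma<1$ analysis: the matching of the edge parametrices is controlled by $\beta_r^{\pm2}(z)\zeta^{-1}=O\big(s^{-3\pm6\textnormal{Re}\,\nu}\big)$ (section \ref{sec9}), which is $o(1)$ only for $|\textnormal{Re}\,\nu|<\frac{1}{2}$. As $\gamma$ winds off the interval $(-\infty,1)$ toward a point of $(1,\infty)$, $\textnormal{Re}\,\nu=\frac{1}{2\pi}\arg(1-\gamma)$ tends to $\pm\frac{1}{2}$ and the estimates \eqref{esti3} degenerate to $O(1)$ --- precisely the failure recorded in \eqref{esti6}. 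So the ``principal obstacle'' you name at the end is not a hypothesis to be checked; it is where the proof has to live.

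The paper instead works directly at real $\gamma>1$: since $\nu=\nu_0+\frac{1}{2}$, the edge parametrices match the outer model only up to the explicit triangular factors $E_{r,l}(z)$ of \eqref{matchbadr}, \eqref{matchbadl}; factoring these out produces a solitonic RHP with first-order poles at $z=\pm1$ (\eqref{res1}, \eqref{res2}), which is then ``dressed'' by the rational multiplier $(zI+B)$ and reduced to a small-norm problem. The matrix $B$ in \eqref{Bmatrix} exists exactly when $\det\Theta\neq0$, i.e.\ when $\cos\sigma(s,x,\gamma)\neq0$ with $\sigma$ as in \eqref{sigmadef}, and the final expansion \eqref{final} carries the factor $\ln|\cos\sigma|$, from which \eqref{theo2result} is read off. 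Your formal bookkeeping of the imaginary part (the leading phase $\frac{8}{3}s^3+2xs$, the $\frac{1}{\pi}\ln(\gamma-1)\ln(16s_n^3+4xs_n)$ term, and the $\arg\Gamma$ boundary term) is consistent with \eqref{theo2result}, so the two-branch heuristic predicts the right answer; but as a proof it presupposes the very connection formula that the undressing--dressing construction of sections \ref{sec13}--\ref{sec15} is designed to establish.
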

The proofs of Theorem \ref{theo1} and \ref{theo2} are based on a Riemann-Hilbert approach. This approach (compare \cite{IIKS,DIZ}) uses the integrable form of the Fredholm operator $K_{\textnormal{csin}}$, allowing us to connect the resolvent kernel to the solution of a Riemann-Hilbert problem. The latter can be analysed rigorously via the Deift-Zhou nonlinear steepest descent method.\bigskip

We should mention that a large class of the generalized  sine-kernel determinants  has already been considered in \cite{KKMST}
(see eq.(1.6) there). In the case $\gamma < 1$ and after a proper  re-scaling, the determinant
$\det\left(I-\gamma K_{\textnormal{csin}}\right)$  can be put in the form which is  very close to the
one treated in \cite{KKMST}. However, an essential difference occurs: the fast phase function, the function  $p(\lambda)$ 
in the notations of \cite{KKMST} (see eq.(1.7)),  which appear as a result of  the re-scaling, does not satisfy 
one of the key conditions of  \cite{KKMST}; moreover, it becomes depended on the large parameter. This means that
the results of  \cite{KKMST} are not directly applicable to our case. In fact, if one formally applies the main asymptotic
formula of  \cite{KKMST} to our case, then the first two terms of our asymptotic equation (\ref{theo1result}) are reproduced
while the constant (in $s$) term is not. Most significantly, the integral term with the Painlev\'e function does not show up. 
The result of our Theorem \ref{theo2}, i.e. the asymptotic distribution (\ref{theo2result}) of the zeros of
the cubic sine-kernel determinant  can not be compared with \cite{KKMST} since the techniques in \cite{KKMST}
do not cover  the situation $\gamma\geq1$.\smallskip

We also want to point out  that the analysis of the cubic sine-kernel determinant is also of interest without the random matrix theory background: the determinant $\det\left(I-K_{\textnormal{csin}}\right)$ appears in condensed matter physics \cite{BWieg}, namely in the description of the Fermi distribution of semiclassical non-equilibrium Fermi states. In order to understand perturbations to a degenerate Fermi gas one studies the one parameter extension of determinants corresponding to the kernel \eqref{lambda3kernel}, that is
\begin{equation*}
	\det\left(I-\gamma K_{\textnormal{csin}}\right),\ \gamma\in\mathbb{R}.
\end{equation*}
Our formula \eqref{theo1result} provides therefore an exact formula for the large perturbations.\smallskip

Finally we would like to emphasize the appearance in our asymptotic results of the general Ablowitz-Segur solution of the second Painlev\'e equation and not just the usual Hastings-McLeod solution.
This fact, in particular, poses  the following  intriguing  question which Theorem \ref{theo2} does not
answer: {\it What is the large $s$ asymptotic behavior of  $\det\left(I-\gamma K_{\textnormal{csin}}\right)$ when $\gamma > 1$ and $x$ coincides with one of the poles of the corresponding
Ablowitz-Segur second Painlev\'e transcendent ?}

\bigskip

Let us finish this introduction with a brief outline for the rest of the paper. Section \ref{sec2} starts with a short review of the Riemann-Hilbert approach for the asymptotics of integrable Fredholm operators. We then apply the general framework to the Fredholm determinant $\det\left(I-\gamma K_{\textnormal{csin}}\right)$ and formulate the associated ``master'' Riemann-Hilbert problem (RHP). We also state logarithmic $s,x$ and $\gamma$ derivatives of the determinant $\det\left(I-\gamma K_{\textnormal{csin}}\right)$ and outline a derivation of an integrable system whose tau-function is represented by $\det\left(I-\gamma K_{\textnormal{csin}}\right)$. In sections \ref{sec3}-\ref{sec15}, following the Deift-Zhou roadmap, we construct the asymptotic solution of the master Riemann-Hilbert problem. At this point we will have to distinguish between $\gamma<1$ and $\gamma>1$, in fact for $\gamma>1$ we are lead to a solitonic Riemann-Hilbert problem. Using an extra ``undressing'' step, we overcome the singularities in the Riemann-Hilbert problem and derive the large zero distribution as stated in \eqref{theo2result}. The situation is similar to the one dealt with in \cite{BoI2}. The calculations of sections \ref{sec16} and \ref{sec17} provide us with the asymptotics of $\ln\det\left(I-\gamma K_{\textnormal{csin}}\right)$ including the constant term for $\gamma<1$. The large zero distribution \eqref{theo2result} will be derived in section \ref{sec18}.

\bigskip 

The authors dedicate this paper to the memory of Vladimir Savelievich Buslaev, whose pioneering works
on the asymptotic analysis of integrable systems  lie in the foundation of  their modern asymptotic theory.
\section{Riemann-Hilbert approach - setup and review}\label{sec2}

The given integral kernel \eqref{lambda3kernel} belongs to an algebra of integrable operators first introduced in \cite{IIKS}: Let $\Gamma$ be an oriented contour in the complex plane $\mathbb{C}$ such as a Jordan curve. We are interested in operators of the form $\lambda I+K$ on $L^2(\Gamma)$, where $K$ denotes an integral operator with kernel
\begin{equation}\label{IIKStheo1}
	K(\lambda,\mu) = \frac{\sum_{i=1}^Mf_i(\lambda)h_i(\mu)}{\lambda-\mu},\hspace{0.5cm}\sum_{i=1}^Mf_i(\lambda)h_i(\lambda) = 0,\ \ M\in\mathbb{Z}_{\geq 1}
\end{equation}
with functions $f_i,h_i$ which are smooth up to the boundary of $\Gamma$. Given two operators $\lambda I+K,\check{\lambda}I+\check{K}$ of this type, the composition $(\lambda I+K)(\check{\lambda}I+\check{K})$ is again of the same form, hence we have a ring. Moreover let $K^t$ denote the real adjoint of $K$, i.e.
\begin{equation*}
	K^t(\lambda,\mu) = -\frac{\sum_{i=1}^Mh_i(\lambda)f_i(\mu)}{\lambda-\mu}.
\end{equation*}
Our results are based on the following observations (see e.g. \cite{IIKS}). First an algebraic Lemma, showing that the resolvent of $I-K$ is again integrable.
\begin{lemma}\label{lemma1}
Given an operator $I-K$ on $L^2(\Gamma)$ in the previous algebra with kernel \eqref{IIKStheo1}. Suppose the inverse $(I-K)^{-1}$ exists, then $I+R = (I-K)^{-1}$ lies again in the same algebra with
\begin{equation}\label{IIKStheo2}
	R(\lambda,\mu)=\frac{\sum_{i=1}^MF_i(\lambda)H_i(\mu)}{\lambda-\mu},\hspace{0.5cm}\sum_{i=1}^MF_i(\lambda)H_i(\lambda)=0
\end{equation}
and the functions $F_i,H_i$ are given by
\begin{equation}\label{IIKStheo3}
	F_i(\lambda)=\Big((I-K)^{-1}f_i\Big)(\lambda),\hspace{0.5cm} H_i(\lambda)=\Big((I-K^t)^{-1}h_i\Big)(\lambda).
\end{equation}
\end{lemma}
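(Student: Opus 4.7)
The plan is to extract the integrable structure of $R$ directly from the two resolvent identities $R=K+KR=K+RK$, combined with the elementary decomposition $\lambda-\mu=(\lambda-\nu)+(\nu-\mu)$. The smoothness of the $f_i,h_i$ up to $\partial\Gamma$ together with the existence of $(I-K)^{-1}$ will allow us to move the inhomogeneities across the operator at will; in particular, $(I-K^{t})^{-1}=((I-K)^{-1})^{t}=I+R^{t}$, so $R^{t}h_i=H_i-h_i$ with $H_i=(I-K^{t})^{-1}h_i$.

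First I would set $Q(\lambda,\mu):=(\lambda-\mu)R(\lambda,\mu)$ and multiply the resolvent identity $R=K+KR$ by $(\lambda-\mu)$:
\begin{equation*}
Q(\lambda,\mu)=\sum_{i=1}^{M}f_i(\lambda)h_i(\mu)+\int_\Gamma(\lambda-\mu)K(\lambda,\nu)R(\nu,\mu)\,d\nu.
\end{equation*}
Inside the integral, the splitting $(\lambda-\mu)=(\lambda-\nu)+(\nu-\mu)$ converts the first piece via $(\lambda-\nu)K(\lambda,\nu)=\sum_i f_i(\lambda)h_i(\nu)$ into $\sum_i f_i(\lambda)(R^{t}h_i)(\mu)=\sum_i f_i(\lambda)\bigl(H_i(\mu)-h_i(\mu)\bigr)$, while the second piece leaves behind $\int_\Gamma K(\lambda,\nu)Q(\nu,\mu)\,d\nu$. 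Collecting terms produces the closed equation
\begin{equation*}
(I-K)Q(\cdot,\mu)=\sum_{i=1}^{M}f_i(\cdot)H_i(\mu),
\end{equation*}
in which $\mu$ is only a parameter. Applying $(I-K)^{-1}$ to both sides then gives $Q(\lambda,\mu)=\sum_i F_i(\lambda)H_i(\mu)$ with $F_i=(I-K)^{-1}f_i$, i.e.\ formulas \eqref{IIKStheo2}--\eqref{IIKStheo3}.

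The remaining item is the consistency relation $\sum_i F_i(\lambda)H_i(\lambda)=0$, without which $R(\lambda,\mu)=Q(\lambda,\mu)/(\lambda-\mu)$ would be singular on the diagonal. This is the step I expect to require the most care. My plan is to take the limit $\mu\to\lambda$ in the identity $Q(\lambda,\mu)=(\lambda-\mu)R(\lambda,\mu)$; the required regularity comes from the second resolvent identity $R=K+KR$, which together with the hypothesis $\sum_i f_i h_i\equiv 0$ shows that the Cauchy-type singularity of $K$ on the diagonal is cancelled, and that the additional contribution $KR$ is smooth since $R$, obtained by the Neumann-type series inside $(I-K)^{-1}$, inherits the same integrable structure. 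Alternatively, one may verify $\sum_i F_i H_i\equiv 0$ purely algebraically by applying the equations $F_i=f_i+KF_i$ and $H_i=h_i+K^{t}H_i$ and using $\sum_i f_i h_i=0$ to reduce the diagonal sum to $0$. Either route closes the proof and places $R$ back in the same integrable algebra as $K$.
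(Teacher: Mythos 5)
The paper itself does not prove this lemma --- it is quoted from \cite{IIKS} without proof --- so there is no in-text argument to compare against; your proposal has to stand on its own. Its main part does: multiplying $R=K+KR$ by $(\lambda-\mu)$, splitting $\lambda-\mu=(\lambda-\nu)+(\nu-\mu)$, converting $(\lambda-\nu)K(\lambda,\nu)$ into $\sum_i f_i(\lambda)h_i(\nu)$, and using $(I-K^{t})^{-1}=I+R^{t}$ correctly yields $(I-K)Q(\cdot,\mu)=\sum_i f_i\,H_i(\mu)$ and hence $Q(\lambda,\mu)=\sum_iF_i(\lambda)H_i(\mu)$. This is the standard derivation.

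The gap is in the diagonal condition $\sum_iF_i(\lambda)H_i(\lambda)=0$, exactly where you anticipated trouble. Your first route claims $KR$ is regular on the diagonal ``since $R$ inherits the same integrable structure,'' but the integrable kernel you have just derived for $R$ is $\sum_jF_j(\nu)H_j(\mu)/(\nu-\mu)$, which has a genuine Cauchy singularity at $\nu=\mu$ precisely unless $\sum_jF_jH_j=0$ --- the statement being proved --- so as written the argument is circular. The standard fix is to iterate once more: $R=K+KR=K+K^{2}+KRK$. Here $K$ and $K^{2}$ extend continuously to the diagonal (the first by the hypothesis $\sum_if_ih_i=0$ together with smoothness of $f_i,h_i$, the second as an integral of bounded kernels), while $(KRK)(\lambda,\mu)=\int_\Gamma K(\lambda,u)\,\bigl(RK(\cdot,\mu)\bigr)(u)\,du$ is bounded by $\|K(\lambda,\cdot)\|_{L^2}\,\|R\|\,\|K(\cdot,\mu)\|_{L^2}$, which uses only the operator norm of $R$ and no pointwise information about its kernel. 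Hence $R(\lambda,\mu)$ is uniformly bounded near the diagonal, $(\lambda-\mu)R(\lambda,\mu)\to 0$ as $\mu\to\lambda$, and continuity of $F_i=f_i+KF_i$ and $H_i=h_i+K^{t}H_i$ (the operators now have bounded continuous kernels) gives $\sum_iF_i(\lambda)H_i(\lambda)=0$. Your alternative ``purely algebraic'' route is only asserted: expanding $\sum_i(f_i+KF_i)(h_i+K^{t}H_i)$ kills the term $\sum_if_ih_i$ but leaves cross terms involving Cauchy transforms that do not visibly cancel, so I would not rely on it without writing it out in full.
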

Secondly an analytical Lemma, which connects integrable operators to a Riemann-Hilbert problem.
\begin{lemma}\label{lemma2}
Let $K$ be of integrable type such that $(I-K)^{-1}$ exists and let $Y=Y(z)$ denote the unique solution of the following $M\times M$ Riemann-Hilbert problem (RHP)
\begin{itemize}
	\item $Y(z)$ is analytic for $z\in\mathbb{C}\backslash \Gamma$
	\item On the contour $\Gamma$, the boundary values of the function $Y(z)$ satisfy the jump relation
	\begin{equation*}
		Y_+(z)=Y_-(z)\big(I-2\pi i f(z)h^t(z)\big),\hspace{0.5cm} z\in\Gamma
	\end{equation*}
	where $f(z)=\big(f_1(z),\ldots,f_M(z)\big)^t$ and similarly $h(z)=\big(h_1(z),\ldots,h_M(z)\big)^t$
	\item At an endpoint of the contour $\Gamma$, $Y(z)$ has no more than a logarithmic singularity
	\item As $z\rightarrow\infty$
	\begin{equation*}
		Y(z)=I+O\big(z^{-1}\big)
	\end{equation*}
\end{itemize}
Then $Y(z)$ determines the resolvent kernel via
\begin{equation}\label{IIKStheo4}
	F(z)=Y(z)f(z),\hspace{0.5cm}H(z)=\big(Y^t(z)\big)^{-1}h(z)
\end{equation}
and conversely the solution of the above RHP is expressible in terms of the function $F(z)$ using the Cauchy integral
\begin{equation}\label{IIKStheo5}
	Y(z)=I-\int\limits_{\Gamma}F(w)h^t(w)\frac{dw}{w-z}.
\end{equation}
\end{lemma}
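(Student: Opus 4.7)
The plan is to exploit the Plemelj--Sokhotski jump formula for Cauchy integrals, together with the defining algebraic identity $\sum_{i=1}^M f_i(\lambda) h_i(\lambda) = 0$, equivalently $h^t(\lambda)f(\lambda) = 0$. This orthogonality is what makes the rank-$1$ structure tractable: one has $(fh^t)^2 = f(h^tf)h^t = 0$, hence $(I-2\pi i fh^t)^{-1} = I + 2\pi i fh^t$ and $\det(I - 2\pi i fh^t) = 1 - 2\pi i h^t f = 1$. The two assertions of the lemma are essentially dual: \eqref{IIKStheo4} produces the resolvent from $Y$, while \eqref{IIKStheo5} recovers $Y$ from the resolvent data.

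For the forward direction, I would start by observing that the vector-valued boundary trace $Yf$ is single-valued on $\Gamma$: the jump gives
\begin{equation*}
Y_+(z)f(z) = Y_-(z)\bigl(I - 2\pi i f(z)h^t(z)\bigr)f(z) = Y_-(z)f(z),
\end{equation*}
so $F(z) := Y(z)f(z)$ is unambiguously defined on $\Gamma$. An analogous computation using the transpose of the inverse jump shows that $H(z) := (Y^t(z))^{-1}h(z)$ is likewise well defined. Since $Y - I$ is analytic off $\Gamma$, tends to zero at infinity, has at worst logarithmic singularities at the endpoints, and has additive jump $Y_+ - Y_- = -2\pi i Y_- f\, h^t = -2\pi i F h^t$, Cauchy's theorem combined with Liouville yields the integral representation
\begin{equation*}
Y(z) = I - \int_\Gamma \frac{F(w)h^t(w)}{w-z}\,dw.
\end{equation*}
Applying Plemelj--Sokhotski to this representation, evaluating at $z\in\Gamma$, multiplying on the right by $f(z)$, and annihilating the pointwise $\pm\pi i F h^t f$ terms by the orthogonality, one obtains
\begin{equation*}
F(z) = f(z) + \mathrm{p.v.}\!\int_\Gamma \frac{\langle f(z), h(w)\rangle}{z-w}\,F(w)\,dw,
\end{equation*}
which componentwise reads $(I-K)F_i = f_i$. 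A symmetric argument yields $(I-K^t)H_i = h_i$, establishing \eqref{IIKStheo3}.

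The converse direction amounts to a verification: given $F, H$ as in \eqref{IIKStheo3}, define $Y(z)$ by the Cauchy integral \eqref{IIKStheo5}. Analyticity off $\Gamma$ and the normalization $Y = I + O(z^{-1})$ as $z \to \infty$ are immediate, and Plemelj--Sokhotski produces the additive jump $Y_+ - Y_- = -2\pi i Fh^t$. To upgrade this to the multiplicative jump, I would evaluate $Y_- f$ by running the Plemelj--Sokhotski calculation in reverse and use the integral equation $F = f + KF$ (which is just $F = (I-K)^{-1}f$ rewritten) to identify $Y_- f = F$; the desired jump $Y_+ = Y_-(I - 2\pi i fh^t)$ then follows from $Y_+ - Y_- = -2\pi i F h^t = -2\pi i Y_- f\, h^t$. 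Logarithmic endpoint behaviour is the standard Cauchy-integral estimate for smooth densities.

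The principal technical point requiring care is uniqueness of the RHP, which is implicit in the word ``unique'' in the statement and is needed to close the loop between the two formulations. The determinant identity $\det Y \equiv 1$ noted above guarantees that $Y^{-1}$ exists as a meromorphic function solving the inverse RHP, so the ratio of any two candidate solutions is entire with normalization $I$ at infinity and at worst logarithmic endpoint singularities; the standard vanishing-lemma (removable-singularity) argument then forces this ratio to be $I$. A secondary subtlety is the regularity of $f_i, h_i$ near the endpoints of $\Gamma$ needed to legitimise the pointwise Plemelj--Sokhotski identities, but this is supplied by the smoothness hypothesis in the statement.
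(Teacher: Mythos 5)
Your argument is correct: it is the standard proof of the Its--Izergin--Korepin--Slavnov lemma, resting exactly on the nilpotency $(fh^t)^2=0$ coming from $h^tf=0$, the Plemelj--Sokhotski formulae, and a Liouville/removable-singularity argument for uniqueness. The paper itself states this lemma without proof (citing \cite{IIKS}), and your write-up reproduces the intended argument faithfully, so there is nothing to correct.
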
 
Let us use this general setup in the given situation \eqref{lambda3kernel}. We have
\begin{equation*}
	\gamma K_{\textnormal{csin}}(\lambda,\mu) = \frac{f^t(\lambda)h(\mu)}{\lambda-\mu},\ \ f(\lambda)=\sqrt{\frac{\gamma}{2\pi i}}\binom{e^{i(\frac{4}{3}\lambda^3+x\lambda)}}{e^{-i(\frac{4}{3}\lambda^3+x\lambda)}},\ \ h(\lambda)=\sqrt{\frac{\gamma}{2\pi i}}\binom{e^{-i(\frac{4}{3}\lambda^3+x\lambda)}}{-e^{i(\frac{4}{3}\lambda^3+x\lambda)}}
\end{equation*}
where $\sqrt{z}$ is defined on $\mathbb{C}\backslash(-\infty,0]$ with its branch fixed by the condition $\sqrt{z}>0$ if $z>0$. Lemma \ref{lemma2} provides us with the following $Y$-RHP 
\begin{itemize}
	\item $Y(\lambda)$ is analytic for $\lambda\in\mathbb{C}\backslash[-s,s]$
	\item Along the line segment $[-s,s]$, oriented from left to right, the following jump holds
		\begin{equation*}
			Y_+(\lambda)=Y_-(\lambda)\begin{pmatrix}
			1-\gamma & \gamma e^{2i(\frac{4}{3}\lambda^3+x\lambda)}\\
			-\gamma e^{-2i(\frac{4}{3}\lambda^3+x\lambda)} & 1+\gamma\\
			\end{pmatrix}, \hspace{0.5cm} \lambda\in[-s,s]
		\end{equation*}
	\item At the endpoints $\pm s$, $Y(\lambda)$ has logarithmic singularities, i.e.
	\begin{equation}\label{Yendpoint}
		Y(\lambda)=O\big(\ln(\lambda\mp s)\big),\ \ \lambda\rightarrow\pm s
	\end{equation}
	\item We have $Y(\lambda)\rightarrow I$ as $\lambda\rightarrow\infty$.
\end{itemize}
We can factorize the jump matrix
\begin{equation*}
	\begin{pmatrix}
			1-\gamma & \gamma e^{2i(\frac{4}{3}\lambda^3+x\lambda)}\\
			-\gamma e^{-2i(\frac{4}{3}\lambda^3+x\lambda)} & 1+\gamma\\
			\end{pmatrix}=e^{i(\frac{4}{3}\lambda^3+x\lambda)\sigma_3}\begin{pmatrix}
			1-\gamma & \gamma\\
			-\gamma & 1+\gamma\\
			\end{pmatrix}e^{-i(\frac{4}{3}\lambda^3+x\lambda)\sigma_3},
\end{equation*}
and employ a first transformation.


\section{First transformation of the RHP}\label{sec3}

We make the following substitution in the original $X$-RHP
\begin{equation}\label{fromYtoX}
	X(\lambda) = Y(\lambda)e^{i(\frac{4}{3}\lambda^3+x\lambda)\sigma_3},\ \ \lambda\in\mathbb{C}\backslash[-s,s].
\end{equation}
This leads to a RHP for the function $X(\lambda)$, our ``master'' RHP: 
\begin{itemize}
	\item $X(\lambda)$ is analytic for $\lambda\in\mathbb{C}\backslash[-s,s]$
	\item The following jump holds
	\begin{equation}\label{Xjump}
	X_+(\lambda)=X_-(\lambda)\begin{pmatrix}
		1-\gamma & \gamma\\
		-\gamma & 1+\gamma\\
		\end{pmatrix},\hspace{0.5cm}\lambda\in[-s,s]
	\end{equation}
	\item From \eqref{Yendpoint}, we deduce the following refined endpoint behavior
	\begin{equation}\label{Xendpoint}
	X(\lambda)=\check{X}(\lambda)\Bigg[I+\frac{\gamma}{2\pi i}\begin{pmatrix}
	-1 & 1\\
	-1 & 1\\
	\end{pmatrix}\ln\bigg(\frac{\lambda-s}{\lambda+s}\bigg)\Bigg]
\end{equation}
where $\check{X}(\lambda)$ is analytic at $\lambda=\pm s$ and the branch of the logarithm is fixed by the condition $-\pi<\textnormal{arg}\ \frac{\lambda-s}{\lambda+s}<\pi$.
	\item At infinity, $X(\lambda)$ is normalized as follows
\begin{equation}\label{Xasyinfinity}
	X(\lambda)=\Big(I+O\big(\lambda^{-1}\big)\Big)e^{i(\frac{4}{3}\lambda^3+x\lambda)\sigma_3},\hspace{0.5cm}\lambda\rightarrow\infty.
\end{equation}
\end{itemize}

The latter master problem will be solved asymptotically as $s\rightarrow\infty$ in the next sections by approximating its global solution with local model functions. Before we start this analysis in detail, we first connect the solution of lthe $X$-RHP to the Fredholm determinants $\det\left(I-\gamma K_{\textnormal{csin}}\right)$.

\section{Logarithmic derivatives - connection to $X$-RHP}\label{sec4}

We will express logarithmic derivatives of the determinant $\det(I-\gamma K_{\textnormal{csin}})$ in terms of the solution of the $X$-RHP. To this end recall the following classical identity, valid for any differentiable family of trace class operators \cite{S}
\begin{equation}\label{traceform}
    \frac{\partial}{\partial s}\ln\det\left(I-\gamma K_{\textnormal{csin}}\right)=-\textnormal{trace}\bigg(\big(I-\gamma K_{\textnormal{csin}}\big)^{-1}\frac{\partial}{\partial s} \left(\gamma K_{\textnormal{csin}}\right)\bigg).
\end{equation}
In our situation
\begin{equation*}
    \frac{\partial K_{\textnormal{csin}}}{\partial s}(\lambda,\mu)=K_{\textnormal{csin}}(\lambda,\mu)\big(\delta(\mu-s)+\delta(\mu+s)\big),
\end{equation*}
where, by definition
\begin{equation*}
    \int_{-s}^s\delta(w\mp s)f(w)dw = f(\pm s),
\end{equation*}
and therefore
\begin{equation*}
    -\textnormal{trace}\ \Big(\big(I-\gamma K_{\textnormal{csin}}\big)^{-1}\frac{\partial}{\partial s}\left(\gamma K_{\textnormal{csin}}\right)\Big) =
    -R(s,s)-R(-s,-s)
\end{equation*}
with $R(\lambda,\mu)$ denoting the kernel (see \eqref{IIKStheo2}) of the resolvent
$R=(I-\gamma K_{\textnormal{csin}})^{-1}\gamma K_{\textnormal{csin}}$. The latter derivative can be simplified using the given definitions
\begin{equation*}
	f_1(\lambda)=-h_2(\lambda),\hspace{0.5cm} f_2(\lambda)=h_1(\lambda)
\end{equation*} 
as well as the identity $\det Y(\lambda)\equiv 1$, a direct consequence of the unimodularity of the jump matrix $G(\lambda)$ and Liouville's theorem:
\begin{equation*}
	R(\lambda,\mu)=\frac{F_1(\lambda)H_1(\mu)+F_2(\lambda)H_2(\mu)}{\lambda-\mu}=\frac{F_1(\lambda)F_2(\mu)-F_2(\lambda)F_1(\mu)}{\lambda-\mu}.
\end{equation*}
Since $R(\lambda,\mu)$ is continuous along the diagonal $\lambda=\mu$ (see \eqref{IIKStheo2}) we obtain further
\begin{equation}\label{resolventRHP}
	R(s,s)=F_1'(s)F_2(s)-F_2'(s)F_1(s),\ \ R(-s,-s)=F_1'(-s)F_2(-s)-F_2'(-s)F_1(-s)
\end{equation}
provided $F_i$ is analytic at $\lambda=\pm s$. One way to see this is a follows. Use \eqref{fromYtoX} and \eqref{IIKStheo4}
\begin{equation*}
	F(\lambda)=X(\lambda)e^{-i(\frac{4}{3}\lambda^3+x\lambda)\sigma_3}f(\lambda)
\end{equation*}
to derive from \eqref{Xendpoint} the following local identity
\begin{eqnarray}\label{Fendpointlocal}
	F(\lambda)&=&\check{X}(\lambda)\Bigg[I+\frac{\gamma}{2\pi i}\begin{pmatrix}
	-1 & 1\\
	-1 & 1\\
	\end{pmatrix}\ln\bigg(\ln\frac{\lambda-s}{\lambda+s}\bigg)\Bigg]e^{-i(\frac{4}{3}\lambda^3+x\lambda)\sigma_3}f(\lambda)\nonumber\\
	&=&\check{X}(\lambda)\sqrt{\frac{\gamma}{2\pi i}}\binom{1}{1}
\end{eqnarray}	
valid in a neighborhood of $\lambda=\pm s$. But this proves analyticity of $F(\lambda)$ at the endpoints and as we shall see later on, \eqref{Fendpointlocal} will allow us to connect \eqref{traceform} via \eqref{resolventRHP} to the solution of the $X$-RHP. We summarize
\begin{prop}\label{prop1}
The logarithmic $s$-derivative of the given Fredholm determinant can be expressed as
\begin{eqnarray}\label{sidentity}
	\frac{d}{ds}\ln\det(I-\gamma K_{\textnormal{csin}})&=&-R(s,s)-R(-s,-s),\\
	 &&\ R(\pm s,\pm s)=F_1'(\pm s)F_2(\pm s)-F_2'(\pm s)F_1(\pm s)\nonumber
\end{eqnarray}
where the connection to the $X$-RHP is established through
\begin{equation*}
	F(\lambda)=\check{X}(\lambda)\sqrt{\frac{\gamma}{2\pi i}}\binom{1}{1},\ \ \lambda\rightarrow\pm s.
\end{equation*}
\end{prop}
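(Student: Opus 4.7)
The plan is to assemble the three claims of the proposition from the chain of identities that has already been set up in this section; no essentially new input is needed, only a careful synthesis.

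First, I would apply the classical trace formula \eqref{traceform} and differentiate the truncated kernel: since $K_{\textnormal{csin}}(\lambda,\mu)$ depends on $s$ only through the endpoints of the underlying interval $(-s,s)$, its $s$-derivative produces exactly the two delta masses $\delta(\mu \mp s)$ displayed in the text. Evaluating the resulting trace with the resolvent kernel $R=(I-\gamma K_{\textnormal{csin}})^{-1}\gamma K_{\textnormal{csin}}$ then immediately yields $\frac{d}{ds}\ln\det(I-\gamma K_{\textnormal{csin}}) = -R(s,s) - R(-s,-s)$, which is the first identity of the proposition.

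Second, Lemma \ref{lemma1} places $R$ in the integrable class with numerator functions $F_i,H_i$ as in \eqref{IIKStheo3}. Using $f_1=-h_2$, $f_2=h_1$, together with $\det Y(\lambda)\equiv 1$ (a consequence of Liouville's theorem and the unimodularity $\det G(\lambda) = 1$ of the jump in \eqref{Xjump}), one computes directly that $H_1 = F_2$ and $H_2 = -F_1$. Substituting into \eqref{IIKStheo2} gives $R(\lambda,\mu) = \frac{F_1(\lambda)F_2(\mu) - F_2(\lambda)F_1(\mu)}{\lambda-\mu}$. The identity $\sum_i F_i(\lambda)H_i(\lambda) = 0$ forces continuity of $R$ across the diagonal, so L'Hôpital's rule produces the formula $R(\pm s, \pm s) = F_1'(\pm s)F_2(\pm s) - F_2'(\pm s)F_1(\pm s)$ provided the $F_i$ are smooth enough at $\lambda = \pm s$ for the derivative to exist.

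The only step requiring genuine verification, and thus the main (minor) obstacle, is that $F_1,F_2$ are analytic at the endpoints despite the logarithmic singularity \eqref{Xendpoint} of $X(\lambda)$. I would establish this by substituting \eqref{Xendpoint} into $F(\lambda) = X(\lambda)e^{-i(\frac{4}{3}\lambda^3+x\lambda)\sigma_3}f(\lambda)$ and noting that $e^{-i(\frac{4}{3}\lambda^3+x\lambda)\sigma_3}f(\lambda) = \sqrt{\gamma/(2\pi i)}\binom{1}{1}$ is proportional to the vector $(1,1)^t$, which lies in the kernel of the matrix $\bigl(\begin{smallmatrix}-1 & 1 \\ -1 & 1\end{smallmatrix}\bigr)$ appearing in \eqref{Xendpoint}. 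Hence the potentially singular logarithmic contribution cancels identically, leaving $F(\lambda) = \check{X}(\lambda)\sqrt{\gamma/(2\pi i)}\binom{1}{1}$ in a neighborhood of $\lambda = \pm s$. Since $\check{X}(\lambda)$ is analytic at the endpoints by construction, this simultaneously delivers the third claim of the proposition and legitimizes the diagonal limit used to obtain the second.
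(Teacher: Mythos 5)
Your proposal is correct and follows essentially the same route as the paper: the trace formula with the endpoint delta functions, the reduction $H_1=F_2$, $H_2=-F_1$ via $f_1=-h_2$, $f_2=h_1$ and $\det Y\equiv 1$, and the cancellation of the logarithmic singularity because $(1,1)^t$ lies in the kernel of $\bigl(\begin{smallmatrix}-1 & 1\\ -1 & 1\end{smallmatrix}\bigr)$, which is exactly the paper's computation \eqref{Fendpointlocal}. No gaps.
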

Besides the logarithmic $s$-derivative we also differentiate with respect to $x$
\begin{equation*}
    \frac{\partial}{\partial x}\ln\det(I-\gamma K_{\textnormal{csin}})=-\textnormal{trace}\
    \bigg(\big(I-\gamma K_{\textnormal{csin}}\big)^{-1}\frac{\partial}{\partial x}\left(\gamma K_{\textnormal{csin}}\right)\bigg).
\end{equation*}
In our situation the kernel depends explicitly on $x$, indeed
\begin{equation*}
	\frac{\partial}{\partial x}\big(\gamma K_{\textnormal{csin}}(\lambda,\mu)\big)=i\big(f_1(\lambda)h_1(\mu)-f_2(\lambda)h_2(\mu)\big)
\end{equation*}
and with \eqref{IIKStheo3}
\begin{equation*}
	-\textnormal{trace}\
    \bigg(\big(I-\gamma K_{\textnormal{csin}}\big)^{-1}\frac{\partial}{\partial x}K_{\textnormal{csin}}\bigg)=-i\int\limits_{-s}^s\big(F_1(\lambda)h_1(\lambda)-F_2(\lambda)h_2(\lambda)\big)d\lambda.
\end{equation*}
On the other hand the Cauchy integral \eqref{IIKStheo4} implies
\begin{equation*}
	Y(\lambda)=I+\frac{m_1}{\lambda}+O\big(\lambda^{-2}\big),\ \ \lambda\rightarrow\infty;\ m_1=\int\limits_{-s}^sF(w)h^t(w)dw
\end{equation*}
so
\begin{equation*}
	\frac{\partial}{\partial x}\ln\det\left(I-\gamma K_{\textnormal{csin}}\right)=i\big(m_1^{22}-m_1^{11}\big),\ \ \ m_1=\big(m_1^{ij}\big)
\end{equation*}
and the connection to the $X$-RHP is established via \eqref{Xasyinfinity}. Again we summarize
\begin{prop}\label{prop2}
The logarithmic $x$-derivative of the given Fredholm determinant can be expressed as
\begin{equation}\label{xidentity}
	\frac{\partial}{\partial x}\ln\det(I-\gamma K_{\textnormal{csin}})=i\big(X_1^{22}-X_1^{11}\big)
\end{equation}
with
\begin{equation*}
	X(\lambda)\sim \Big(I+\frac{X_1}{\lambda}+\frac{X_2}{\lambda^2}+\frac{X_3}{\lambda^3}+O\big(\lambda^{-4}\big)\Big)e^{i(\frac{4}{3}\lambda^3+x\lambda)\sigma_3},\ \ \lambda\rightarrow\infty; \ X_1=\big(X_1^{ij}\big).
\end{equation*}
\end{prop}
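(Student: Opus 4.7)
The plan is to combine the trace identity for logarithmic derivatives of Fredholm determinants with the integrable structure of the kernel $\gamma K_{\textnormal{csin}}$ and the Cauchy integral representation \eqref{IIKStheo5} of $Y(z)$. Starting from
\begin{equation*}
\frac{\partial}{\partial x}\ln\det(I-\gamma K_{\textnormal{csin}})=-\textnormal{trace}\bigg((I-\gamma K_{\textnormal{csin}})^{-1}\frac{\partial}{\partial x}(\gamma K_{\textnormal{csin}})\bigg),
\end{equation*}
I would first compute the $x$-derivative of the kernel. Using the explicit form of $f(\lambda)$, $h(\lambda)$ given after Lemma \ref{lemma2} and the cancellation of the $\frac{1}{\lambda-\mu}$ factor, a short calculation shows
\begin{equation*}
\frac{\partial}{\partial x}\bigl(\gamma K_{\textnormal{csin}}(\lambda,\mu)\bigr)=i\bigl(f_1(\lambda)h_1(\mu)-f_2(\lambda)h_2(\mu)\bigr),
\end{equation*}
which is the key point: the $x$-derivative is a finite-rank operator with symbols built from the same $f_i,h_i$.

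Next, inserting this into the trace and using Lemma \ref{lemma1} together with definition \eqref{IIKStheo3}, I would pull $(I-\gamma K_{\textnormal{csin}})^{-1}$ inside against the $\lambda$-dependence, producing $F_i(\lambda)=((I-\gamma K_{\textnormal{csin}})^{-1}f_i)(\lambda)$. The trace then collapses to
\begin{equation*}
-\textnormal{trace}\bigg((I-\gamma K_{\textnormal{csin}})^{-1}\frac{\partial}{\partial x}(\gamma K_{\textnormal{csin}})\bigg)=-i\int_{-s}^{s}\bigl(F_1(\lambda)h_1(\lambda)-F_2(\lambda)h_2(\lambda)\bigr)d\lambda.
\end{equation*}

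The third step is to recognize the right-hand side as a coefficient in the large-$\lambda$ expansion of $Y(z)$. Starting from \eqref{IIKStheo5} with $\Gamma=[-s,s]$ and expanding $(w-z)^{-1}=-z^{-1}\sum_{k\geq 0}(w/z)^k$ for $|z|$ large, I would read off
\begin{equation*}
Y(\lambda)=I+\frac{m_1}{\lambda}+O(\lambda^{-2}),\hspace{0.4cm}m_1=\int_{-s}^{s}F(w)h^t(w)\,dw,\ \ \ \lambda\to\infty,
\end{equation*}
so that the diagonal entries of $m_1$ are precisely $m_1^{11}=\int F_1 h_1\,dw$ and $m_1^{22}=\int F_2 h_2\,dw$. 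This identifies $\partial_x \ln\det(I-\gamma K_{\textnormal{csin}})$ with $i(m_1^{22}-m_1^{11})$.

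Finally, the transformation \eqref{fromYtoX}, namely $X(\lambda)=Y(\lambda)e^{i(\frac{4}{3}\lambda^3+x\lambda)\sigma_3}$, together with the prescribed normalization \eqref{Xasyinfinity}, identifies the moment $m_1$ of $Y$ at infinity with the matrix $X_1$ appearing in the stated asymptotic expansion of $X(\lambda)$. Substituting $m_1=X_1$ yields identity \eqref{xidentity}. No serious obstacle arises here; the only point that requires mild care is the interchange of trace and differentiation (which is standard for trace-class-valued smooth families) and the justification that $F_i$ is well defined at $\pm s$—this follows from \eqref{Fendpointlocal} in exactly the same way as for the $s$-derivative treated in Proposition \ref{prop1}, so the computation is essentially bookkeeping once the integrable structure has been set up.
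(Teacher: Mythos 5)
Your proposal is correct and follows essentially the same route as the paper: differentiate the kernel in $x$ to obtain the finite-rank expression $i(f_1(\lambda)h_1(\mu)-f_2(\lambda)h_2(\mu))$, collapse the trace to $-i\int_{-s}^{s}(F_1h_1-F_2h_2)\,d\lambda$ via \eqref{IIKStheo3}, identify this with the diagonal entries of the first moment $m_1=\int_{-s}^{s}F(w)h^t(w)\,dw$ from the Cauchy representation \eqref{IIKStheo5}, and pass to $X_1$ through \eqref{fromYtoX} and \eqref{Xasyinfinity}. The computation and the order of the steps coincide with the paper's derivation of Proposition \ref{prop2}.
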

The logarithmic $s$ and $x$ derivatives are sufficient to determine the large $s$-asymptotics of $\det(I-\gamma K_{\textnormal{csin}})$ up to the constant term. In order to evaluate the constant explicitly for $\gamma<1$, we use the following approach (see \cite{DIK} for a similar method used in the asymptotics of Toeplitz determinants).
\smallskip

Start with the logarithmic $\gamma$-derivative 
\begin{equation*}
	\frac{\partial}{\partial\gamma}\ln\det(I-\gamma K_{\textnormal{csin}}) = -\textnormal{trace}\bigg(\big(I-\gamma K_{\textnormal{csin}}\big)^{-1}K_{\textnormal{csin}}\bigg)=-\frac{1}{\gamma}\int\limits_{-s}^sR(\lambda,\lambda)d\lambda.
\end{equation*}
Our goal is to express the latter integral over the resolvent kernel in terms of the solution of the underlying $X$-RHP. Recall \eqref{resolventRHP},\eqref{IIKStheo4}, the definition of the functions $f(\lambda),h(\lambda)$ and unimodularity of $Y(\lambda)$
\begin{eqnarray*}
	R(\lambda,\lambda)&=&F_1'(\lambda)F_2(\lambda)-F_2'(\lambda)F_1(\lambda)=\frac{\gamma}{\pi}\big(4\lambda^2+x\big)+\frac{\gamma}{2\pi i}\big(Y_{11}'(\lambda)Y_{22}(\lambda)\\
	&&-Y_{11}(\lambda)Y_{22}'(\lambda)+Y_{12}'(\lambda)Y_{21}(\lambda)-Y_{21}'(\lambda)Y_{12}(\lambda)\big)+\big(Y_{11}'(\lambda)Y_{21}(\lambda)\\
	&&-Y_{11}(\lambda)Y_{21}'(\lambda)\big)f_1^2(\lambda)+\big(Y_{12}'(\lambda)Y_{22}(\lambda)-Y_{12}(\lambda)Y_{22}'(\lambda)\big)f_2^2(\lambda)
\end{eqnarray*}
where $(')$ indicates differentiation with respect to $\lambda$. Next \eqref{fromYtoX} and unimodularity allows us to rewrite the previous expression for $R(\lambda,\lambda)$ solely in terms of the $X$-RHP 
\begin{eqnarray}\label{gammaderiv1}
	R(\lambda,\lambda)&=&\frac{\gamma}{2\pi i}\Big[\big(X_{11}'(\lambda)+X_{12}'(\lambda)\big)\big(X_{21}(\lambda)+X_{22}(\lambda)\big)\nonumber\\
	&&-\big(X_{11}(\lambda)+X_{12}(\lambda)\big)\big(X_{21}'(\lambda)+X_{22}'(\lambda)\big)\Big].
\end{eqnarray}
Our next move will replace all terms involving derivatives with respect to $\lambda$ and this can be done by considering the differential equations associated with the $X$-RHP, see also section \ref{sec5} below:
\smallskip

All jump matrices in the $X$-RHP are unimodular and constant with respect to $\lambda$, thus the well-defined logarithmic derivative $X_{\lambda}X^{-1}(\lambda)$ are rational functions. Indeed using \eqref{Xasyinfinity} and \eqref{Xendpoint}
\begin{equation}\label{diffeq1}
	\frac{\partial X}{\partial\lambda} = \bigg[4i\lambda^2\sigma_3-4i\lambda\begin{pmatrix}
	0 & b\\
	-c & 0 \\
	\end{pmatrix}+\begin{pmatrix}
	d & e\\
	f & -d\\
	\end{pmatrix} +\frac{A}{\lambda-s} -\frac{B}{\lambda+s}\bigg]X\equiv \mathcal{A}(\lambda,s,x)X
\end{equation}
where
\begin{equation}\label{ABdef}
	A = \frac{\gamma}{2\pi i}\check{X}(s)\begin{pmatrix}
	-1 & 1\\
	-1 & 1\\
	\end{pmatrix}\big(\check{X}(s)\big)^{-1};\hspace{1cm}B=\frac{\gamma}{2\pi i}\check{X}(-s)\begin{pmatrix}
	-1 & 1\\
	-1 & 1\\
	\end{pmatrix}\big(\check{X}(-s)\big)^{-1}
\end{equation}
and with parameters $b,c,d,e,f$ which can be expressed in terms of the entries of $X_1$ and $X_2$
\begin{equation}\label{bcddef}
	b = 2X_{1}^{12},\hspace{0.5cm} c=2X_1^{21},\hspace{0.5cm} d=ix+8iX_1^{12}X_1^{21}
\end{equation}
\begin{equation}\label{efdef}
	e = 8i\Big(X_1^{12}X_1^{22}-X_2^{12}\Big),\hspace{0.5cm} f=-8i\Big(X_1^{21}X_1^{11}-X_2^{21}\Big).
\end{equation}
Substituting \eqref{diffeq1} into \eqref{gammaderiv1} and recalling \eqref{ABdef} we obtain with $A=(A_{ij}),B=(B_{ij})$
\begin{eqnarray}\label{gammaderiv2}
	R(\lambda,\lambda)&=&\frac{\gamma}{2\pi i}\Bigg[\bigg(8i\lambda^2+2d+\frac{A_{11}-A_{22}}{\lambda-s}-\frac{B_{11}-B_{22}}{\lambda+s}\bigg)\big(X_{11}(\lambda)+X_{12}(\lambda)\big)\nonumber\\
	&&\times\big(X_{21}(\lambda)+X_{22}(\lambda)\big)\nonumber\\
	&&\hspace{0.5cm}+\bigg(-4i\lambda b+e+\frac{A_{12}}{\lambda-s}-\frac{B_{12}}{\lambda+s}\bigg)\big(X_{21}(\lambda)+X_{22}(\lambda)\big)^2\nonumber\\
	&&\hspace{0.5cm}+\bigg(-4i\lambda c-f-\frac{A_{21}}{\lambda-s}+\frac{B_{21}}{\lambda+s}\bigg)\big(X_{11}(\lambda)+X_{12}(\lambda)\big)^2\Bigg].
\end{eqnarray}
Next we $\gamma$-differentiate the $X$-RHP in \eqref{Xjump} to obtain the following additive RHP for the function $H(\lambda)=\frac{\partial X}{\partial\gamma}(\lambda)\big(X(\lambda)\big)^{-1}$
\begin{itemize}
	\item $H(\lambda)$ is analytic for $\lambda\in\mathbb{C}\backslash[-s,s]$
	\item Along the line segment $[-s,s]$, oriented from left to right
\begin{equation*}
	H_+(\lambda) = H_-(\lambda)+X_-(\lambda)\begin{pmatrix}
	-1 & 1\\
	-1 & 1\\
	\end{pmatrix}\big(X_-(\lambda)\big)^{-1},\ \ \lambda\in[-s,s]
\end{equation*}
	\item $H(\lambda)$ has at most logarithmic singularities at the endpoints $\lambda=\pm s$
	\begin{equation}\label{Ssingular}
		H(\lambda) = \frac{\partial\check{X}}{\partial\gamma}(\lambda)\big(\check{X}(\lambda)\big)^{-1}+\check{X}(\lambda)\frac{1}{2\pi i}\begin{pmatrix}
		-1 & 1\\
		-1 & 1\\
		\end{pmatrix}\big(\check{X}(\lambda)\big)^{-1}\ln\frac{\lambda-s}{\lambda+s},\ \ \ \lambda\rightarrow\pm s
	\end{equation}
	\item As $\lambda\rightarrow\infty$, we have $H(\lambda)\rightarrow0$
\end{itemize}
If we let
\begin{equation*}
	K(\lambda) = X(\lambda)\begin{pmatrix}
	-1 & 1\\
	-1 & 1\\
	\end{pmatrix}\big(X(\lambda)\big)^{-1},\ \ \lambda\in\mathbb{C}\backslash[-s,s],
\end{equation*}
then $K_+(\lambda)=K_-(\lambda),\lambda\in[-s,s]$ and $K(\lambda)$ is bounded as $\lambda\rightarrow\pm s$. Hence $K(\lambda)$ is entire and we have a solution to the $H$-RHP
\begin{eqnarray*}
	&&H(\lambda) = \frac{1}{2\pi i}\int\limits_s^s\frac{K_-(w)}{w-\lambda}dw = \frac{1}{2\pi i}\int\limits_{-s}^s\frac{K(w)}{w-\lambda}dw\\
	&=&\frac{1}{2\pi i}\int\limits_{-s}^s\biggl(\begin{smallmatrix}
	-(X_{11}(w)+X_{12}(w))(X_{21}(w)+X_{22}(w)) & (X_{11}(w)+X_{12}(w))^2\\
	-(X_{21}(w)+X_{22}(w))^2 & (X_{11}(w)+X_{12}(w))(X_{21}(w)+X_{22}(w))\\
	\end{smallmatrix}\biggr)\frac{dw}{w-\lambda}.
\end{eqnarray*}
This solution enables us to rewrite $\int_{-s}^s R(\lambda,\lambda)d\lambda$ in \eqref{gammaderiv2}, for instance
\begin{equation*}
	\int\limits_{-s}^s\lambda^n\big(X_{11}(\lambda)+X_{12}(\lambda)\big)\big(X_{21}(\lambda)+X_{22}(\lambda)\big)d\lambda = \int\limits_{\Sigma}w^nH_{11}(w)dw,\ \ n\in\mathbb{Z}_{\geq 0}
\end{equation*}
with $\Sigma$ denoting a closed Jordan curve around the interval $[-s,s]$ and where we used
\begin{equation*}
	\lambda^n = \frac{1}{2\pi i}\int\limits_{\Sigma}\frac{w^n}{w-\lambda} dw,\ \ \lambda\in[-s,s].
\end{equation*}
Similarly
\begin{eqnarray*}
	\int\limits_{-s}^s\lambda^n\big(X_{21}(\lambda)+X_{22}(\lambda)\big)^2d\lambda &=& \int\limits_{\Sigma}w^nH_{21}(w)dw,\\ 		 \int\limits_{-s}^s\lambda^n\big(X_{11}(\lambda)+X_{12}(\lambda)\big)^2d\lambda&=&-\int\limits_{\Sigma}w^nH_{12}(w)dw
\end{eqnarray*} 
and we obtain
\begin{eqnarray}\label{gammaderiv3}
	&&\frac{\partial}{\partial\gamma}\ln\det(I-\gamma K_{\textnormal{csin}})=-\frac{1}{\gamma}\int\limits_{-s}^sR(\lambda,\lambda)d\lambda
	=-\frac{1}{2\pi i}\Bigg[8i\int\limits_{\Sigma}w^2H_{11}(w)dw\nonumber\\
	&&+\int\limits_{\Sigma}\big(2dH_{11}(w)+eH_{21}(w)+fH_{12}(w)\big)dw
	-4i\int\limits_{\Sigma}w\big(bH_{21}(w)-cH_{12}(w)\big)dw\nonumber\\
	&&-\int\limits_{-s}^s\Big((A_{11}-A_{22})K_{11}(\lambda)+A_{12}K_{21}(\lambda)+A_{21}K_{12}(\lambda)\Big)\frac{d\lambda}{\lambda-s}\nonumber\\
	&&+\int\limits_{-s}^s\Big((B_{11}-B_{22})K_{11}(\lambda)+B_{12}K_{21}(\lambda)+B_{21}K_{12}(\lambda)\Big)\frac{d\lambda}{\lambda+s}\Bigg].
\end{eqnarray}
Since 
\begin{equation*}
	K(\lambda) = \frac{2\pi i}{\gamma}A +O(\lambda-s),\ \lambda\rightarrow s,\hspace{1cm} K(\lambda) = \frac{2\pi i}{\gamma}B+O(\lambda+s),\ \lambda\rightarrow -s
\end{equation*}
and
\begin{equation*}
	(A_{11}-A_{22})A_{11}+2A_{12}A_{21}=0=(B_{11}-B_{22})B_{11}+2B_{12}B_{21},
\end{equation*}
we deduce that the last two integrals in \eqref{gammaderiv3} are indeed well-defined. To evaluate them, let
\begin{eqnarray*}
	\hat{H}(\lambda) &=& H(\lambda)-\check{X}(s)\begin{pmatrix}
	-1 & 1\\
	-1 & 1\\
	\end{pmatrix}\big(\check{X}(s)\big)^{-1}\frac{1}{2\pi i}\ln\frac{\lambda-s}{\lambda+s},\ \ \lambda\in\mathbb{C}\backslash[-s,s]\\
	\tilde{H}(\lambda) &=& H(\lambda)-\check{X}(-s)\begin{pmatrix}
	-1 & 1\\
	-1 & 1\\
	\end{pmatrix}\big(\check{X}(-s)\big)^{-1}\frac{1}{2\pi i}\ln\frac{\lambda-s}{\lambda+s},\ \ \lambda\in\mathbb{C}\backslash[-s,s].
\end{eqnarray*}
From \eqref{Ssingular} we see that $\hat{H}(\lambda)$ is bounded as $\lambda\rightarrow s$ and $\tilde{H}(\lambda)$ is bounded as $\lambda\rightarrow -s$, more precisely
\begin{equation*}
	\hat{H}(s) = \frac{\partial \check{X}}{\partial\gamma}(s)\big(\check{X}(s)\big)^{-1},\hspace{0.5cm} \tilde{H}(-s)=\frac{\partial\check{X}}{\partial\gamma}(-s)\big(\check{X}(-s)\big)^{-1},
\end{equation*}
also
\begin{equation*}
	\hat{H}_+(\lambda)=\hat{H}_-(\lambda)+K(\lambda)-K(s),\hspace{0.5cm} \tilde{H}_+(\lambda)=\tilde{H}_-(\lambda)+K(\lambda)-K(-s),\ \lambda\in[-s,s],
\end{equation*}
hence
\begin{equation}\label{gammaderiv4}
	\hat{H}(\lambda)=\frac{1}{2\pi i}\int\limits_{-s}^s\frac{K(w)-K(s)}{w-\lambda}dw,\ \ \ \tilde{H}(\lambda)=\frac{1}{2\pi i}\int\limits_{-s}^s\frac{K(w)-K(-s)}{w-\lambda}dw
\end{equation}
and we conclude
\begin{eqnarray*}
	&&\int\limits_{-s}^s\Big((A_{11}-A_{22})K_{11}(\lambda)+A_{12}K_{21}(\lambda)+A_{21}K_{12}(\lambda)\Big)\frac{d\lambda}{\lambda-s}\\
	&=&(A_{11}-A_{22})\int\limits_{-s}^s\frac{K_{11}(\lambda)-K_{11}(s)}{\lambda-s}d\lambda +A_{12}\int_{-s}^s\frac{K_{21}(\lambda)-K_{21}(s)}{\lambda-s}d\lambda\\
	&&+A_{21}\int\limits_{-s}^s\frac{K_{12}(\lambda)-K_{12}(s)}{\lambda-s}d\lambda=2\pi i\big[(A_{11}-A_{22})\hat{H}_{11}(s)+A_{12}\hat{H}_{21}(s)+A_{21}\hat{H}_{12}(s)\big].
\end{eqnarray*}
Similarly
\begin{eqnarray*}
	&&\int\limits_{-s}^s\Big((B_{11}-B_{22})K_{11}(\lambda)+B_{12}K_{21}(\lambda)+B_{21}K_{12}(\lambda)\Big)\frac{d\lambda}{\lambda+s}\\
	&=&2\pi i\big[(B_{11}-B_{22})\tilde{H}_{11}(-s)+B_{12}\tilde{H}_{21}(-s)+B_{21}\tilde{H}_{12}(-s)\big].
\end{eqnarray*}
To evaluate the remaining integrals in \eqref{gammaderiv3}, we recall
\begin{eqnarray*}
	H(\lambda) &=& \frac{1}{\lambda}(X_1)_{\gamma}+\frac{1}{\lambda^2}\big((X_2)_{\gamma}-(X_1)_{\gamma}X_1\big)\\
	&&+\frac{1}{\lambda^3}\big((X_3)_{\gamma}+(X_1)_{\gamma}(X_1^2-X_2)-(X_2)_{\gamma}X_1\big)+O\big(\lambda^{-4}\big),\ \ \lambda\rightarrow\infty
\end{eqnarray*}
and apply residue theorem
\begin{eqnarray*}
	&&\int\limits_{\Sigma}H(w)dw = 2\pi i(X_1)_{\gamma},\hspace{0.5cm}  \int\limits_{\Sigma}wH(w)dw = 2\pi i\big((X_2)_{\gamma}-(X_1)_{\gamma}X_1\big)\\
	&&\int\limits_{\Sigma}w^2H(w)dw = 2\pi i\big((X_3)_{\gamma}+(X_1)_{\gamma}(X_1^2-X_2)-(X_2)_{\gamma}X_1\big).
\end{eqnarray*}
We summarize the previous computations
\begin{prop}\label{prop3}
The logarithmic $\gamma$-derivative of the given Fredholm determinant can be expressed as
\begin{eqnarray}\label{gammaderiv5}
	&&\frac{\partial}{\partial\gamma}\ln\det\left(I-\gamma K_{\textnormal{csin}}\right) = -8i\Big((X_3)_{\gamma}+(X_1)_{\gamma}(X_1^2-X_2)-(X_2)_{\gamma}X_1\Big)^{11}\nonumber\\
	&&-2d\Big((X_1)_{\gamma}\Big)^{11}+4ib\Big((X_2)_{\gamma}-(X_1)_{\gamma}X_1\Big)^{21}-4ic\Big((X_2)_{\gamma}-(X_1)_{\gamma}X_1\Big)^{12}\nonumber\\
	&&-e\Big((X_1)_{\gamma}\Big)^{21}-f\Big((X_1)_{\gamma}\Big)^{12}+\Big((A_{11}-A_{22})\hat{H}_{11}(s)+A_{12}\hat{H}_{21}(s)+A_{21}\hat{H}_{12}(s)\Big)\nonumber\\
	&&\hspace{0.5cm}-\Big((B_{11}-B_{22})\tilde{H}_{11}(-s)+B_{12}\tilde{H}_{21}(-s)+B_{21}\tilde{H}_{12}(-s)\Big)
\end{eqnarray}
with
\begin{equation*}
	X(\lambda)\sim \Big(I+\frac{X_1}{\lambda}+\frac{X_2}{\lambda^2}+\frac{X_3}{\lambda^3}+O\big(\lambda^{-4}\big)\Big)e^{i(\frac{4}{3}\lambda^3+x\lambda)\sigma_3},\ \ \lambda\rightarrow\infty
\end{equation*}
and the functions $b,c,d,e,f,A,B$ are defined in \eqref{bcddef},\eqref{efdef} and \eqref{ABdef}.
\end{prop}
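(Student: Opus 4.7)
The plan is to assemble the identity by combining the four separate inputs that were prepared in the preceding pages: the Jacobi trace formula for the $\gamma$-derivative, the integrable structure of the resolvent, the Lax-type differential equation \eqref{diffeq1} for $X$, and the additive RHP for $H=X_\gamma X^{-1}$. Starting from
\begin{equation*}
\frac{\partial}{\partial\gamma}\ln\det(I-\gamma K_{\textnormal{csin}})=-\frac{1}{\gamma}\int_{-s}^s R(\lambda,\lambda)\,d\lambda,
\end{equation*}
I would first use \eqref{IIKStheo2}--\eqref{IIKStheo4} and the substitution \eqref{fromYtoX} to rewrite $R(\lambda,\lambda)$ in the form \eqref{gammaderiv1}, which depends only on $X(\lambda)$ and $X'(\lambda)$. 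The $\lambda$-derivatives are then eliminated by \eqref{diffeq1}, producing the bilinear expression \eqref{gammaderiv2} in which the entries of $X$ appear multiplicatively and the coefficients $b,c,d,e,f,A,B$ are precisely those of Proposition~\ref{prop3}.

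Next I would integrate \eqref{gammaderiv2} over $[-s,s]$ by converting each bilinear combination of $X_{ij}(\lambda)$ into a contour integral of $H(\lambda)$. The key point is that the jump of $H$ across $[-s,s]$ equals the restriction of the entire function $K(\lambda)=X\bigl(\begin{smallmatrix} -1 & 1\\ -1 & 1\end{smallmatrix}\bigr)X^{-1}$, so the Cauchy representation of $H$ gives the displayed identities
\begin{equation*}
\int_{-s}^s\lambda^n(X_{11}+X_{12})(X_{21}+X_{22})\,d\lambda=\int_\Sigma w^n H_{11}(w)\,dw,
\end{equation*}
and similarly for the $(X_{21}+X_{22})^2$ and $(X_{11}+X_{12})^2$ combinations. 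The integrals over the closed contour $\Sigma$ encircling $[-s,s]$ are then evaluated by pushing $\Sigma$ to infinity and reading off residues from the asymptotic expansion
\begin{equation*}
H(\lambda)=\frac{(X_1)_\gamma}{\lambda}+\frac{(X_2)_\gamma-(X_1)_\gamma X_1}{\lambda^2}+\frac{(X_3)_\gamma+(X_1)_\gamma(X_1^2-X_2)-(X_2)_\gamma X_1}{\lambda^3}+O(\lambda^{-4}),
\end{equation*}
which produces the $\lambda$-polynomial part of \eqref{gammaderiv5}, namely the first two lines of the proposition.

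The genuinely delicate step is handling the principal-value type integrals $\int_{-s}^s K_{ij}(\lambda)\,d\lambda/(\lambda\mp s)$ that arise from the residues $A/(\lambda-s)$ and $-B/(\lambda+s)$ in $\mathcal{A}(\lambda,s,x)$. These integrals are divergent term-by-term and are rendered finite only by the rank-one/nilpotency identities $(A_{11}-A_{22})A_{11}+2A_{12}A_{21}=0$ and its analogue for $B$, which reflect the rank-one structure of $\bigl(\begin{smallmatrix} -1 & 1\\ -1 & 1\end{smallmatrix}\bigr)$ and hence of $A,B$ themselves. To extract the finite answer I would introduce the subtracted functions $\hat H$ and $\tilde H$ of \eqref{gammaderiv4}; by \eqref{Ssingular} they are bounded at $s$ and $-s$ respectively, and the cancellation identity converts
\begin{equation*}
\int_{-s}^s(A_{11}-A_{22})K_{11}(\lambda)+A_{12}K_{21}(\lambda)+A_{21}K_{12}(\lambda)\,\frac{d\lambda}{\lambda-s}
\end{equation*}
into $2\pi i$ times the corresponding linear combination of $\hat H_{ij}(s)$, with the analogous statement at $-s$. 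Collecting the contour-integral and endpoint contributions, dividing by $-2\pi i$ as dictated by \eqref{gammaderiv3}, and grouping terms yields exactly formula \eqref{gammaderiv5}, completing the proof. The main obstacle throughout is the accurate tracking of these endpoint cancellations; once the algebraic identity $(A_{11}-A_{22})A_{11}+2A_{12}A_{21}=0$ is in hand, the rest is a bookkeeping exercise tying together \eqref{ABdef}--\eqref{efdef} with the residue data at infinity.
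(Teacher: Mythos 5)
Your proposal is correct and follows essentially the same route as the paper: the trace identity, reduction of $R(\lambda,\lambda)$ via \eqref{diffeq1} to the bilinear form \eqref{gammaderiv2}, conversion to contour integrals of $H=X_\gamma X^{-1}$ evaluated by residues at infinity, and regularization of the endpoint integrals through $\hat H,\tilde H$ and the identity $(A_{11}-A_{22})A_{11}+2A_{12}A_{21}=0$. No gaps; this is the paper's argument.
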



\section{Differential equations associated with the determinant $\det\left(I-\gamma K_{\textnormal{csin}}\right)$}\label{sec5}

Our considerations rely only on the underlying Riemann-Hilbert problem. Nevertheless, before we continue the asymptotical analysis, we will take a short look into the differential equations associated with the $X$-RHP.\smallskip

We already used the differential equation
\begin{equation*}
	\frac{\partial X}{\partial\lambda}=\mathcal{A}(\lambda,s,x)X
\end{equation*}
in the derivation of \eqref{gammaderiv5}. Also, the logarithmic derivatives $X_sX^{-1}(\lambda)$ and $X_xX^{-1}(\lambda)$ are rational functions, indeed
\begin{equation*}
	\frac{\partial X}{\partial s} = \bigg[-\frac{A}{\lambda-s}-\frac{B}{\lambda+s}\bigg]X \equiv \Theta(\lambda,s,x)X
\end{equation*}
and furthermore
\begin{equation*}
	\frac{\partial X}{\partial x}=\bigg[i\lambda\sigma_3-i\begin{pmatrix}
	0 & b\\
	-c & 0\\
	\end{pmatrix}\bigg]X \equiv \Pi(\lambda,s,x)X.
\end{equation*}
Hence we arrive at the Lax-system for the function $X$
\begin{equation*}
    \left\{
      \begin{array}{c}
        \frac{\partial X}{\partial \lambda}=\mathcal{A}(\lambda,s,x)X \\
        \\
        \frac{\partial X}{\partial s}=\Theta(\lambda,s,x)X,  \\
        \\
        \frac{\partial X}{\partial x}=\Pi(\lambda,s,x)X.
      \end{array}
    \right.
\end{equation*}
Considering the compatibility conditions
 \begin{equation}\label{compa}
    \mathcal{A}_s-\Theta_{\lambda}=[\Theta,\mathcal{A}],\hspace{1cm}
\mathcal{A}_x-\Pi_{\lambda}=[\Pi,\mathcal{A}],\hspace{1cm}
\Theta_x-\Pi_s=[\Pi,\Theta]
\end{equation}
we are lead to a system of eighteen nonlinear ordinary differential equations. Since it is possible to express the previous derivatives of $\ln\det(I-\gamma K_{\textnormal{csin}})$ solely in terms of the unknowns $b,c,d,e,f,A$ and $B$. Since it is possible to express the previous derivatives of $\ln\det\left(I-\gamma K_{\textnormal{csin}}\right)$ solely in terms of the unknowns $b,c,d,e,f,A$ and $B$ one can then try to derive a differential equation for the Fredholm determinant using \eqref{compa}. We plan to address this issue in a future publication.
\section{Second transformation of the RHP - rescaling and opening of lenses}\label{sec6}

Let us scale the variables in \eqref{fromYtoX} as $\lambda=zs$, so that
\begin{equation}\label{fromXtoT}
	T(z) = X(zs)e^{-s^3\vartheta(z)\sigma_3},\hspace{0.5cm} z\in\mathbb{C}\backslash[-1,1],\ \ \vartheta(z)=i\left(\frac{4}{3}z^3+\frac{xz}{s^2}\right)
\end{equation}
solves the following RHP, which up to the rescaling $\lambda=zs$, is identical to the initial $Y$-RHP
\begin{itemize}
	\item $T(z)$ is analytic for $z\in\mathbb{C}\backslash[-1,1]$
	\item Along the line segment $[-1,1]$ oriented from left to right
	\begin{equation*}
		T_+(z)=T_-(z)e^{s^3\vartheta(z)\sigma_3}\begin{pmatrix}
		1-\gamma & \gamma\\
		-\gamma & 1+\gamma\\
		\end{pmatrix}e^{-s^3\vartheta(z)\sigma_3},\hspace{0.5cm}z\in[-1,1]
	\end{equation*}
	\item In a neighborhood of the endpoints $z=\pm 1$
	\begin{equation*}
		T(z)e^{s^3\vartheta(z)\sigma_3} = \check{X}(zs)\Bigg[I+\frac{\gamma}{2\pi i}\begin{pmatrix}
	-1 & 1\\
	-1 & 1\\
	\end{pmatrix}\ln\bigg(\frac{z-1}{z+1}\bigg)\Bigg]
	\end{equation*}
	\item At infinity
	\begin{equation}\label{thetafunction}
		T(z) = I+O\big(z^{-1}\big),\hspace{0.5cm}z\rightarrow\infty
	\end{equation}
\end{itemize}
We now move to a RHP formulated according to the sign-diagrm of $\textnormal{Re}\ \vartheta(z)$ which is depicted in Figure \ref{fig1}. In this Figure we choose $x$ from a compact subset of the real line, $s>0$ is sufficiently large and
\begin{equation*}
	z_{\pm} = \pm i\sqrt{\frac{3x}{4s^2}}
\end{equation*}
denote the two vertices of the depicted algebraic curves.
\begin{figure}[tbh]
  \begin{center}
  \psfragscanon
  \psfrag{1}{\footnotesize{$\textnormal{Re}\ \vartheta<0$}}
  \psfrag{2}{\footnotesize{$\textnormal{Re}\ \vartheta<0$}}
  \psfrag{3}{\footnotesize{$\textnormal{Re}\ \vartheta>0$}}
  \psfrag{4}{\footnotesize{$\textnormal{Re}\ \vartheta>0$}}
  \psfrag{5}{\footnotesize{$\textnormal{Re}\ \vartheta<0$}}
  \psfrag{6}{\footnotesize{$\textnormal{Re}\ \vartheta<0$}}
  \psfrag{7}{\footnotesize{$\textnormal{Re}\ \vartheta>0$}}
  \psfrag{8}{\footnotesize{$\textnormal{Re}\ \vartheta>0$}}
  \psfrag{9}{\footnotesize{$\textnormal{Re}\ \vartheta>0$}}
  \psfrag{10}{\footnotesize{$\textnormal{Re}\ \vartheta<0$}}
  \psfrag{11}{\footnotesize{$\textnormal{Re}\ \vartheta<0$}}
  \psfrag{12}{\footnotesize{$\textnormal{Re}\ \vartheta>0$}}
  \psfrag{13}{\footnotesize{$z_+$}}
  \psfrag{14}{\footnotesize{$z_-$}}
  \psfrag{15}{\footnotesize{$z_-$}}
  \psfrag{16}{\footnotesize{$z_+$}}
  \includegraphics[width=10cm,height=6cm]{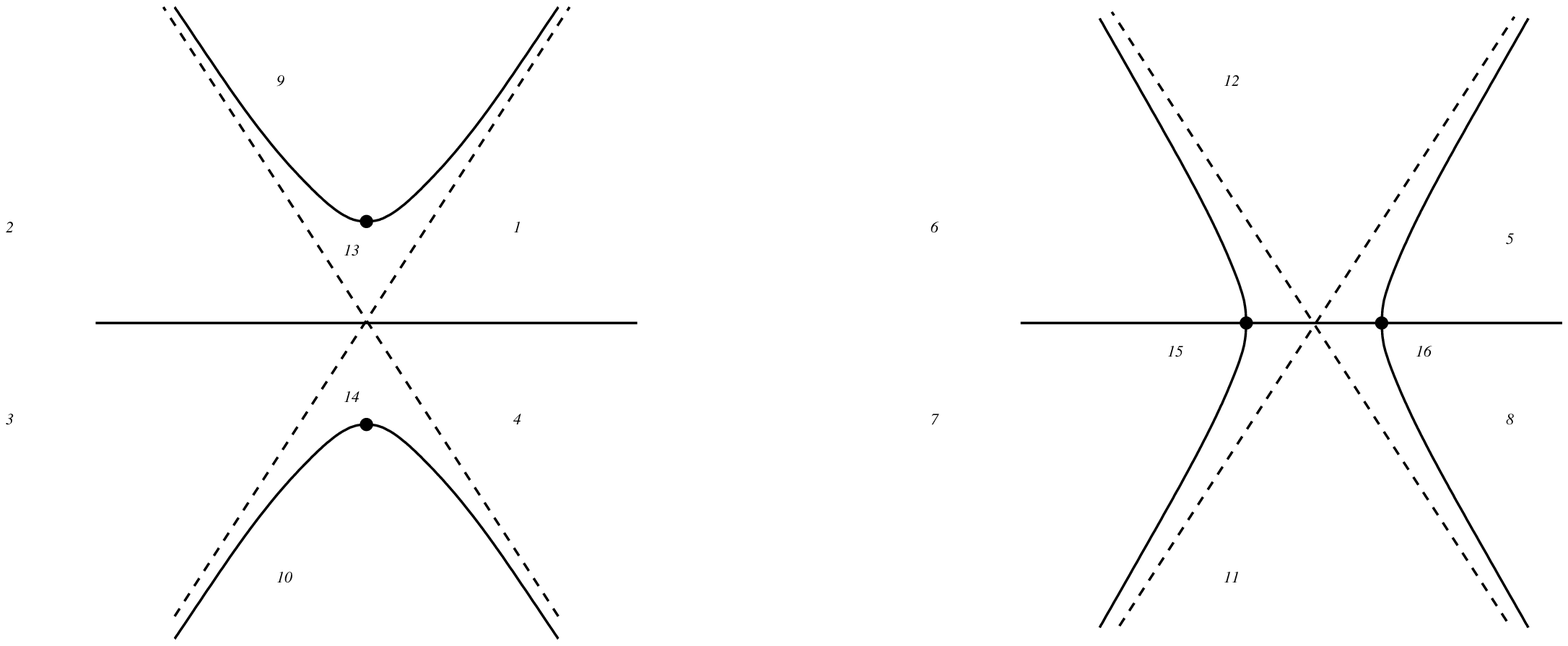}
  \end{center}
  \caption{Sign-diagram for the function $\textnormal{Re}\ \vartheta(z)$. In the left picture we indicate the location of $z_{\pm}$ as $x>0$ and in the right picture for a particular choice of $x<0$. Along the solid lines $\textnormal{Re}\ \vartheta(z)=0$ and the dashed lines resemble $\textnormal{arg}\ z =\pm\frac{\pi}{3},\pm\frac{2\pi}{3}$}
  \label{fig1}
\end{figure}
With the matrix factorization
\begin{eqnarray}\label{LDUfact}
	\begin{pmatrix}
		1-\gamma & \gamma \\
		-\gamma & 1+\gamma\\
	\end{pmatrix}&=& \begin{pmatrix}
	1 & 0 \\
	-\gamma(1-\gamma)^{-1} & 1\\
	\end{pmatrix}(1-\gamma)^{\sigma_3}\begin{pmatrix}
	1 & \gamma(1-\gamma)^{-1}\\
	0 & 1\\
	\end{pmatrix}\nonumber\\
	&=& S_LS_DS_U,
\end{eqnarray}
valid as long as $\gamma\neq 1$, we perform {\it opening of lenses} as follows. Let $\mathcal{L}_j^{\pm}$ denote the {\it upper (lower) lense}, shown in Figure \ref{fig2}, which is bounded by the contours $\gamma_{jk}^{\pm}$ with
\begin{eqnarray*}
	\gamma_{12}^+ = \left\{z\in\mathbb{C}:\ \textnormal{arg}\,z=\frac{\pi}{6}\right\},&&\gamma_{21}^+=\left\{z\in\mathbb{C}:\ \textnormal{arg}\,z=\frac{5\pi}{6}\right\},\\
	\gamma_{32}^- =\left\{z\in\mathbb{C}:\ \textnormal{arg}\,z=-\frac{5\pi}{6}\right\},&&\gamma_{41}^-=\left\{z\in\mathbb{C}:\ \textnormal{arg}\,z=-\frac{\pi}{6}\right\}.
\end{eqnarray*}
\begin{figure}[tbh]
  \begin{center}
  \psfragscanon
  \psfrag{1}{\footnotesize{$\mathcal{L}_1^+$}}
  \psfrag{2}{\footnotesize{$\mathcal{L}_2^+$}}
  \psfrag{3}{\footnotesize{$\mathcal{L}_3^-$}}
  \psfrag{4}{\footnotesize{$\mathcal{L}_4^-$}}
  \psfrag{5}{\footnotesize{$\gamma_{11}^+$}}
  \psfrag{6}{\footnotesize{$\gamma_{12}^+$}}
  \psfrag{7}{\footnotesize{$\gamma_{21}^+$}}
  \psfrag{8}{\footnotesize{$\gamma_{22}^+$}}
  \psfrag{9}{\footnotesize{$\gamma_{31}^-$}}
  \psfrag{10}{\footnotesize{$\gamma_{32}^-$}}
  \psfrag{11}{\footnotesize{$\gamma_{41}^-$}}
  \psfrag{12}{\footnotesize{$\gamma_{42}^-$}}
  \psfrag{13}{\footnotesize{$+1$}}
  \psfrag{14}{\footnotesize{$-1$}}
  \includegraphics[width=9cm,height=5cm]{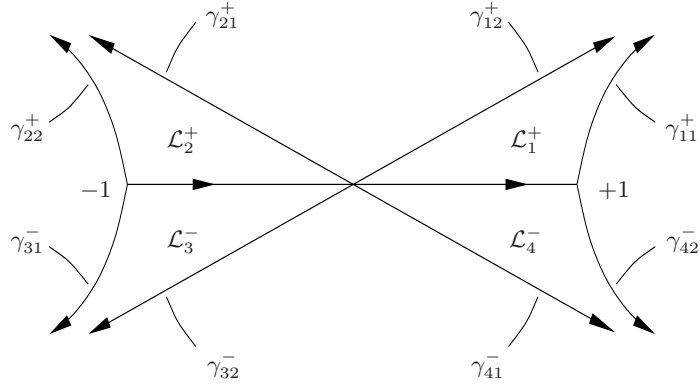}
  \end{center}
  \caption{The second transformation - opening of lenses}
  \label{fig2}
\end{figure}

Define
\begin{equation}\label{openingxpos}
	S(z) = \left\{
                                 \begin{array}{ll}
                                  T(z)S_{U}^{-1} , & \hbox{$z\in\mathcal{L}_1^+\cup\mathcal{L}_2^+$,} \smallskip \\
                                  T(z)S_L , & \hbox{$z\in\mathcal{L}_3^-\cup\mathcal{L}_4^-$,} \smallskip\\
                                  T(z), & \hbox{otherwise,}
                                 \end{array}
                               \right.\equiv S(z)L(z)
\end{equation}
then $S(z)$ solves the following RHP
\begin{itemize}
	\item $S(z)$ is analytic for  $z\in\mathbb{C}\backslash([-1,1]\cup \mathcal{C})$, with $\mathcal{C}=\bigcup_{i,j}\big(\gamma_{ji}^+\cup \gamma_{ji}^-\big)$
	\item The following jumps hold, with orientation fixed as in Figure \ref{fig2}
	\begin{equation}\label{openingjumps}
		S_+(z) = S_-(z)e^{s^3\vartheta(z)\sigma_3}\widehat{G}_Se^{-s^3\vartheta(z)\sigma_3},\hspace{0.5cm} z\in\mathbb{C}\backslash\big([-1,1]\cup\mathcal{C}\big)
	\end{equation}
	where the piecewise constant matrix $\widehat{G}_S$ is given by
	\begin{equation*}
		 \widehat{G}_S=\left\{
                                 \begin{array}{ll}
                                  S_U^{-1} , & \hbox{$z\in\gamma_{11}^+\cup\gamma_{21}^+$,} \smallskip \\
                                  S_U , & \hbox{$z\in\gamma_{12}^+\cup\gamma_{22}^+$,} \smallskip\\
                                  (1-\gamma)^{\sigma_3}, &\hbox{$z\in[-1,1]$} \smallskip\\
                                  S_L, & \hbox{$z\in\gamma_{31}^-\cup\gamma_{41}^-$,} \smallskip\\
                                  S_L^{-1}, & \hbox{$z\in\gamma_{32}^-\cup\gamma_{42}^-$.} 
                                 \end{array}
                               \right.
  \end{equation*}
  \item As $z\rightarrow\pm 1$, we have
  \begin{equation}\label{Ssingularendp}
  S(z)L^{-1}(z)e^{s^3\vartheta(z)\sigma_3} = \check{X}(zs)\Bigg[I+\frac{\gamma}{2\pi i}\begin{pmatrix}
	-1 & 1\\
	-1 & 1\\
	\end{pmatrix}\ln\left(\frac{z-1}{z+1}\right)\Bigg]
	\end{equation}
	\item At infinity, $S(z) =I+O(z^{-1}),\hspace{0.5cm} z\rightarrow\infty$.
\end{itemize}
Let us analyse the behavior of the jump matrix $G_S(z)$ in \eqref{openingjumps} along the infinite branches as $s\rightarrow\infty$. To this end recall the sign-diagram of the function $\textnormal{Re}\,\vartheta(z)$, depicted in Figure \ref{fig1}. We have in the upper half-plane
\begin{equation*}
	G_S(z) = e^{s^3\vartheta(z)\sigma_3}\begin{pmatrix}
	1 & a\\
	0 & 1\\
	\end{pmatrix}e^{-s^3\vartheta(z)\sigma_3},\hspace{0.5cm}z\in\gamma_{jk}^+,\ \ j,k=1,2
\end{equation*}
with a constant $a\in\mathbb{C}$. Since we choose $x$ from a compact subset of the real line and $s>0$ is sufficiently large, $\textnormal{Re}\,\vartheta(z)$ is always negatice on $\gamma_{jk}^+,j,k=1,2$ outside small neighborhoods around the origin and the endpoints $z=\pm 1$. Hence for such $z$
\begin{equation}\label{ap1}
	G_S(z)\longrightarrow I,\hspace{0.5cm}s\rightarrow\infty
\end{equation}
uniformly on any compact subset of the set \eqref{excset1} and the stated convergence is in fact exponentially fast. As similar statement holds on the infinite branches in the lower half-plane. There
\begin{equation*}
	G_S(z) = e^{s^3\vartheta(z)\sigma_3}\begin{pmatrix}
	1 & 0\\
	b & 1\\
	\end{pmatrix}e^{-s^3\vartheta(z)\sigma_3},\hspace{0.5cm}z\in\gamma_{jk}^-,\ \ j,k=1,2
\end{equation*}
again with some constant $b\in\mathbb{C}$. In this situation $\textnormal{Re}\,\vartheta(z)>0$ outside a small neigbhorhood of the origin and therefore
\begin{equation}\label{ap2}
	G_S(z)\longrightarrow I,\hspace{0.5cm}s\rightarrow\infty
\end{equation}
also uniformly on any compact subset of the set \eqref{excset1}. From \eqref{ap1} and \eqref{ap2} we expect, and this will be justified rigorously, that as $s\rightarrow\infty$, $S(z)$ converges to a solution of the model RHP, in which we only have to deal with the diagonal jump matrix $S_D$ on the line segment $[-1,1]$. Let us now consider this model problem.


\section{The model RHP}\label{sec7}

Find the piecewise analytic $2\times 2$ matrix valued function $M(z)$ such that
\begin{itemize}
	\item $M(z)$ is analytic for $\mathbb{C}\backslash[-1,1]$
	\item The following jump condition holds along $[-1,1]$
	\begin{equation*}
		M_+(z)=M_-(z)S_D,\hspace{0.5cm} z\in[-1,1]
	\end{equation*}
	where
	\begin{equation*}
		S_D=\big(1-\gamma\big)^{\sigma_3}
	\end{equation*}
	\item $M(z)$ has a most logarithmic singularities at the endpoints $z=\pm 1$
	\item As $z\rightarrow\infty$ we have $M(z)= I+O\big(z^{-1}\big),\ z\rightarrow\infty$
\end{itemize}
Only assuming that $\gamma\neq 1$ we can always solve this diagonal and thus quasi-scalar RHP at once (see \cite{FIKN})
\begin{eqnarray}\label{Xdiagonalpara}
	M(z)&=&\exp\bigg[\frac{1}{2\pi i}\int\limits_{-1}^1\frac{\ln(1-\gamma)^{\sigma_3}}{w-z}dw\bigg]=\bigg(\frac{z+1}{z-1}\bigg)^{-\nu\sigma_3}\\
	&&\nu = \frac{1}{2\pi i}\ln(1-\gamma),\hspace{0.5cm}\textnormal{arg}\ (1-\gamma)\in(-\pi,\pi]\nonumber
\end{eqnarray}
and $\big(\frac{z+1}{z-1}\big)^{\nu}$ is defined on $\mathbb{C}\backslash [-1,1]$ with its branch fixed by the condition $\big(\frac{z+1}{z-1}\big)^{\nu}\rightarrow 1$ as $z\rightarrow\infty$.
\begin{remark} We bring the reader's attention to the important fact, that in case $\gamma<1$, we have $\textnormal{arg}\,(1-\gamma)=0$ and $\nu$ is therefore purely imaginary. However if $\gamma>1$, then $\textnormal{arg}\,(1-\gamma)=\pi$ and $\nu$ equals
\begin{equation*}
	\nu=\frac{1}{2\pi i}\ln\left(\gamma-1\right)+\frac{1}{2}\equiv \nu_0+\frac{1}{2},\hspace{0.5cm}\nu_0\in i\mathbb{R}.
\end{equation*}
Later on we will see that this difference will have a very substantial impact on the whole steepest descent analysis.
\end{remark}

\section{Construction of a parametrix at the origin $z=0$}\label{sec8}
In this section we construct a parametrix for the origin $z=0$. This construction involves the Ablowitz-Segur solution $u=u(x,\gamma)$ of the boundary value problem
\begin{equation*}
	u_{xx}=xu+2u^3,\hspace{0.5cm}u(x)\sim\gamma\textnormal{Ai}(x),\ \ x\rightarrow+\infty,
\end{equation*}
where $\textnormal{Ai}(x)$ is the Airy-function. Viewing $x,u$ and $u_x=\frac{du(x)}{dx}$ as real parameters, consider the $2\times 2$ system of linear ordinary differential equations,
\begin{equation}\label{Absegur1}
	\frac{\partial \Psi}{\partial \zeta} = A(\zeta,x)\Psi,\ \ A(\zeta,x)=-4i\zeta^2\sigma_3+4i\zeta\begin{pmatrix}
	0 & u\\
	-u & 0\\
	\end{pmatrix}+\begin{pmatrix}
	-ix-2iu^2 & -2u_x\\
	-2u_x & ix +2iu^2\\
	\end{pmatrix}.
\end{equation}
The given system has precisely one irregular singular point of Poincar\'e rank $3$ at infinity, thus from the classical theory of ordinary differential equations in the complex plane, we obtain existence of seven canonical solutions $\Psi_n(\zeta)$ which are fixed uniquely by their asymptotics (for more detail see \cite{FIKN})
\begin{equation*}
	\Psi_n(\zeta)\sim\Big(I+O\big(\zeta^{-1}\big)\Big)e^{-i(\frac{4}{3}\zeta^3+x\zeta)\sigma_3},\hspace{0.5cm}\zeta\rightarrow\infty,\ \zeta\in\Omega_n
\end{equation*}
in the canonical sectors $\Omega_n$ given in Figure \ref{fig6}
\begin{equation*}
	\Omega_n=\Big\{\zeta\in\mathbb{C}\ |\ \textnormal{arg}\
    \zeta\in\Big(\frac{\pi}{3}(n-2),\frac{\pi}{3}n\Big),
    n=1,\ldots,7\Big\}.
\end{equation*}
\begin{figure}[tbh]
  \begin{center}
  \psfragscanon
  \psfrag{1}{\footnotesize{$\Omega_2$}}
  \psfrag{2}{\footnotesize{$0<\textnormal{arg}\ \zeta<\frac{2\pi}{3}$}}
  \psfrag{3}{\footnotesize{$\Omega_3$}}
  \psfrag{4}{\footnotesize{$\frac{\pi}{3}<\textnormal{arg}\ \zeta<\pi$}}
  \psfrag{5}{\footnotesize{$\Omega_4$}}
  \psfrag{6}{\footnotesize{$\frac{2\pi}{3}<\textnormal{arg}\ \zeta<\frac{4\pi}{3}$}}
  \psfrag{7}{\footnotesize{$\pi<\textnormal{arg}\ \zeta<\frac{5\pi}{3}$}}
  \psfrag{8}{\footnotesize{$\Omega_5$}}
  \psfrag{9}{\footnotesize{$\Omega_6$}}
  \psfrag{10}{\footnotesize{$\frac{4\pi}{3}<\textnormal{arg}\
  \zeta<2\pi$}}
  \psfrag{11}{\footnotesize{$\Omega_7$}}
  \psfrag{12}{\footnotesize{$\frac{5\pi}{3}<\textnormal{arg}\ \zeta<\frac{7\pi}{3}$}}
  \psfrag{13}{\footnotesize{$\Omega_1$}}
  \psfrag{14}{\footnotesize{$-\frac{\pi}{3}<\textnormal{arg}\ \zeta<\frac{\pi}{3}$}}
  \includegraphics[width=11cm,height=6cm]{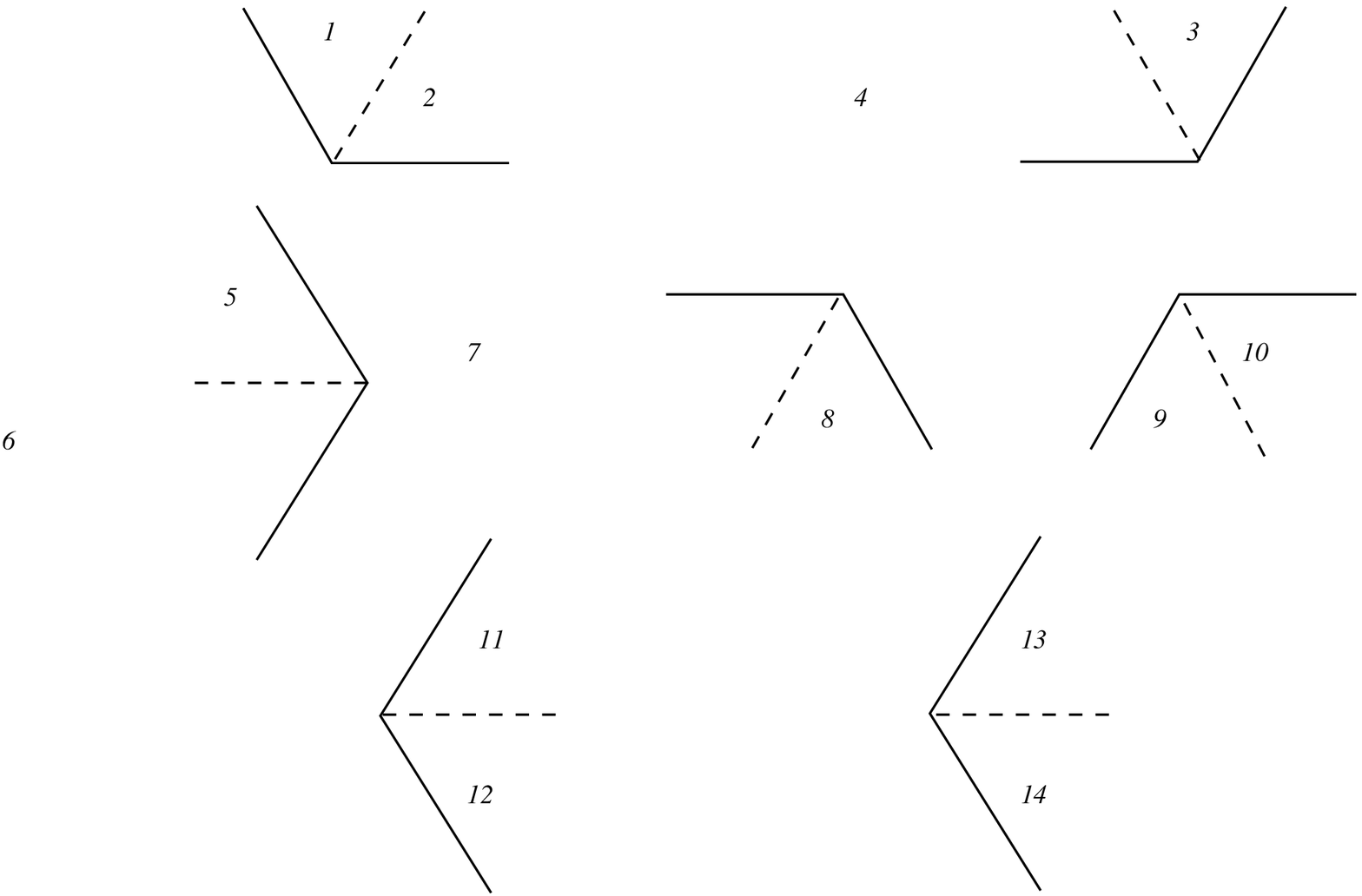}
  \end{center}
  \caption{Canonical sectors of system \eqref{Absegur1} with the dashed lines indicating where Re $\zeta^3=0$}
  \label{fig6}
\end{figure}

Moreover the presence of an irregular singularity gives us a non-trivial Stokes phenomenon described by the Stokes matrices $S_n$:
\begin{equation}\label{Stokesmat}
	S_n=\big(\Psi_n(\zeta)\big)^{-1}\Psi_{n+1}(\zeta).
\end{equation}
In the given situation \eqref{Absegur1} with the Ablowitz-Segur solution $u=u(x,\gamma)$, these multipliers are
\begin{equation}\label{Stokesch}
	S_1 = \begin{pmatrix}
	1 & 0\\
	-i\gamma & 1\\
	\end{pmatrix},\ S_2=\begin{pmatrix}
	1 & 0\\
	0 & 1\\
	\end{pmatrix},\ S_4=\begin{pmatrix}
	1 & i\gamma\\
	0 & 1\\
	\end{pmatrix},\ S_3=\bar{S}_1,\ S_5=\bar{S}_2,\ S_6=\bar{S}_4.
\end{equation}
We now choose $P_{II}(\zeta)$ to be the first canonical solution of \eqref{Absegur1} corresponding to the specific choice \eqref{Stokesch}, i.e.
\begin{equation*}
	P_{II}(\zeta) = \Psi_1(\zeta),\hspace{0.5cm}\textnormal{arg}\,\zeta\in\left(-\frac{\pi}{6},\frac{\pi}{6}\right)
\end{equation*}
and assemble a piecewise analytic model function as follows
\begin{equation}\label{Absegur2}
	\widetilde{P}_{II}^{RH}(\zeta)=\left\{
                                   \begin{array}{ll}
                                     P_{II}(\zeta), & \hbox{$\textnormal{arg}\ \zeta\in(-\frac{\pi}{6},\frac{\pi}{6})\cup(\frac{5\pi}{6},\frac{7\pi}{6})$,} \smallskip\\
                                     P_{II}(\zeta)S_1, & \hbox{$\textnormal{arg}\ \zeta\in(\frac{\pi}{6},\frac{5\pi}{6})$,} \smallskip\\
                                     P_{II}(\zeta)S_4, & \hbox{$\textnormal{arg}\ \zeta\in(\frac{7\pi}{6},\frac{11\pi}{6})$.}
                                   \end{array}
                                 \right.
\end{equation}
One checks directly that $\widetilde{P}_{II}^{RH}(\zeta)$ solves the model RHP shown in Figure \ref{fig7}
\begin{figure}[tbh]
  \begin{center}
  \psfragscanon
  \psfrag{1}{$\bigl(\begin{smallmatrix}
  1 & 0\\
  -i\gamma & 1\\
  \end{smallmatrix}\bigr)$}
  \psfrag{2}{$\bigl(\begin{smallmatrix}
  1 & -i\gamma\\
  0 & 1\\
  \end{smallmatrix}\bigr)$}
  \psfrag{3}{$\bigl(\begin{smallmatrix}
  1 & 0\\
  i\gamma & 1\\
  \end{smallmatrix}\bigr)$}
  \psfrag{4}{$\bigl(\begin{smallmatrix}
  1 & i\gamma\\
  0 & 1\\
  \end{smallmatrix}\bigr)$}
  \includegraphics[width=7cm,height=3cm]{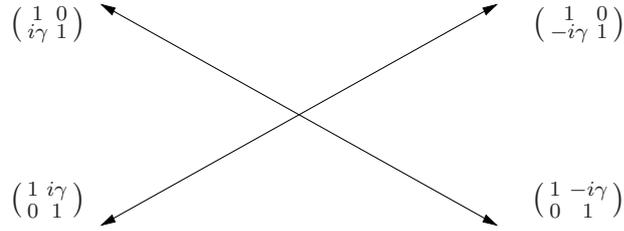}
  \end{center}
  \caption{The model RHP near $z=0$ which can be solved explicitly using the real-valued Ablowitz-Segur solution of the second Painlev\'e equation}
  \label{fig7}
\end{figure}
\begin{itemize}
	\item $\widetilde{P}_{II}^{RH}(\zeta)$ is analytic for $\zeta\in\mathbb{C}\backslash \{\textnormal{arg}\ \zeta=\frac{\pi}{6},\frac{5\pi}{6},\frac{7\pi}{6},\frac{11\pi}{6}\}$
	\item On the infinite rays, the following jumps hold
	\begin{eqnarray*}
		\big(\widetilde{P}_{II}^{RH}(\zeta)\big)_+&=&\big(\widetilde{P}_{II}^{RH}(\zeta)\big)_-S_1,\hspace{0.8cm}\textnormal{arg}\ \zeta=\frac{\pi}{6}\\
		\big(\widetilde{P}_{II}^{RH}(\zeta)\big)_+&=&\big(\widetilde{P}_{II}^{RH}(\zeta)\big)_-S_1^{-1},\hspace{0.5cm}\textnormal{arg}\ \zeta=\frac{5\pi}{6}\\
		\big(\widetilde{P}_{II}^{RH}(\zeta)\big)_+&=&\big(\widetilde{P}_{II}^{RH}(\zeta)\big)_-S_4,\hspace{0.75cm}\textnormal{arg}\ \zeta=\frac{7\pi}{6}\\
		\big(\widetilde{P}_{II}^{RH}(\zeta)\big)_+&=&\big(\widetilde{P}_{II}^{RH}(\zeta)\big)_-S_4^{-1},\hspace{0.5cm}\textnormal{arg}\ \zeta=\frac{11\pi}{6}\\
	\end{eqnarray*}
	\item In terms of the previous choice \eqref{Stokesch}, we have the following uniform asymptotics, valid in a full neighborhood of infinity
	\begin{equation}\label{Absegur3}
		\widetilde{P}_{II}^{RH}(\zeta)\sim\Big(I+\frac{m_1}{\zeta}+\frac{m_2}{\zeta^2}+\frac{m_3}{\zeta^3}+O\big(\zeta^{-4}\big)\Big)e^{-i(\frac{4}{3}\zeta^3+x\zeta)\sigma_3},
		\hspace{0.5cm}\zeta\rightarrow\infty
	\end{equation}
	with
	\begin{equation*}
		m_1=\frac{1}{2}\begin{pmatrix}
		-iv & u\\
		u & iv\\
		\end{pmatrix},\hspace{0.3cm} m_2=\frac{1}{8}\begin{pmatrix}
		u^2-v^2 & 2i(u_x+uv)\\
		-2i(u_x+uv) & u^2-v^2\\
		\end{pmatrix},
	\end{equation*}
	and
	\begin{equation*}
		m_3 = \frac{1}{48}\begin{pmatrix}
		i(v^3-3vu^2+2(xv-uu_x)) & -3(u(u^2+v^2)+2(vu_x+xu))\\
		-3(u(u^2+v^2)+2(vu_x+xu)) & -i(v^3-3vu^2+2(xv-uu_x))\\
		\end{pmatrix}.
	\end{equation*}
	where we introduced
	\begin{equation*}
		 v=(u_x)^2-xu^2-u^4.
	\end{equation*}
\end{itemize}
\begin{remark} In view of our discussion in section \ref{sec1}, the Ablowitz-Segur solution $u=u(x,\gamma)$ might have poles on the real line. However this can only happen in case $\gamma>1$ and in this situation we restrict ourselves to values of $(\gamma,x)$ from \eqref{excset2}. Thus in either of our cases, $u=u(x,\gamma)$ is smooth in $x$ and therefore the model function $\widetilde{P}_{II}^{RH}(\zeta)=\widetilde{P}_{II}^{RH}(\zeta;x)$ well defined.
\end{remark}
The model function \eqref{Absegur2} will now be used to construct the parametrix to the solution of the original $S$-RHP in a neighborhood of $z=0$. First set
\begin{equation*}
	P_{II}^{RH}(\zeta)=\left\{
                                   \begin{array}{ll}
                                     e^{\pi i\nu\sigma_3}\widetilde{P}_{II}^{RH}(\zeta)e^{-\pi i\nu\sigma_3}, & \hbox{$\textnormal{Im}\ \zeta>0$;} \smallskip\\
                                     e^{\pi i\nu\sigma_3}\widetilde{P}_{II}^{RH}(\zeta)e^{\pi i\nu\sigma_3}, & \hbox{$\textnormal{Im}\ \zeta<0$;}
                                   \end{array}
                                 \right.
\end{equation*}
where $\nu$ was introduced in \eqref{Xdiagonalpara}. This leads to a model RHP with jumps on the positive oriented real line
\begin{equation*}
	\big(P_{II}^{RH}(\zeta)\big)_+=\big(P_{II}^{RH}(\zeta)\big)_-(1-\gamma)^{-\sigma_3}
\end{equation*}
as well as on the infinite rays $\textnormal{arg}\ \zeta=\frac{\pi}{6},\frac{5\pi}{6},\frac{7\pi}{6},\frac{11\pi}{6}$
\begin{eqnarray*}
		\big(P_{II}^{RH}(\zeta)\big)_+&=&\big(P_{II}^{RH}(\zeta)\big)_-\begin{pmatrix}
		1 & 0 \\
		-i\gamma(1-\gamma)^{-1} & 1\\
		\end{pmatrix},\hspace{0.4cm}\textnormal{arg}\ \zeta=\frac{\pi}{6}\\
		\big(P_{II}^{RH}(\zeta)\big)_+&=&\big(P_{II}^{RH}(\zeta)\big)_-\begin{pmatrix}
		1 & 0\\
		i\gamma(1-\gamma)^{-1} & 1\\
		\end{pmatrix},\hspace{0.5cm}\textnormal{arg}\ \zeta=\frac{5\pi}{6}\\
		\big(P_{II}^{RH}(\zeta)\big)_+&=&\big(P_{II}^{RH}(\zeta)\big)_-\begin{pmatrix}
		1 & i\gamma(1-\gamma)^{-1}\\
		0 & 1\\
		\end{pmatrix},\hspace{0.75cm}\textnormal{arg}\ \zeta=\frac{7\pi}{6}\\
		\big(P_{II}^{RH}(\zeta)\big)_+&=&\big(P_{II}^{RH}(\zeta)\big)_-\begin{pmatrix}
		1 & -i\gamma(1-\gamma)^{-1}\\
		0 & 1\\
		\end{pmatrix},\hspace{0.5cm}\textnormal{arg}\ \zeta=\frac{11\pi}{6}\\
\end{eqnarray*}
and with behavior at infinity
\begin{equation*}
	P_{II}^{RH}(\zeta)=\Big(I+\frac{\tilde{m}_1}{\zeta}+\frac{\tilde{m}_2}{\zeta^2}+\frac{\tilde{m}_3}{\zeta^3}+O\big(\zeta^{-4}\big)\Big)e^{-i(\frac{4}{3}\zeta^3+x\zeta)\sigma_3}
											\left\{\begin{array}{ll}
                                     I, & \hbox{$\textnormal{Im}\ \zeta>0$;} \smallskip\\
                                     e^{2\pi i\nu\sigma_3}, & \hbox{$\textnormal{Im}\ \zeta<0$;}
                                   \end{array}
                                \right.
\end{equation*}
where
\begin{equation*}
	\tilde{m}_1 = \frac{1}{2}\begin{pmatrix}
	-iv & ue^{2\pi i\nu}\\
	ue^{-2\pi i\nu}& iv\\
	\end{pmatrix},\hspace{0.5cm} \tilde{m}_2=\frac{1}{8}\begin{pmatrix}
	u^2-v^2&2i(u_x+uv)e^{2\pi i\nu}\\
	-2i(u_x+uv)e^{-2\pi i\nu} & u^2-v^2\\
	\end{pmatrix}.
\end{equation*}
and
\begin{equation*}
	\tilde{m}_3=\frac{1}{48}\begin{pmatrix}
		i(v^3-3vu^2+2(xv-uu_x)) & -3(u(u^2+v^2)+2(vu_x+xu))e^{2\pi i\nu}\\
		-3(u(u^2+v^2)+2(vu_x+xu))e^{-2\pi i\nu} & -i(v^3-3vu^2+2(xv-uu_x))\\
		\end{pmatrix}.
	\end{equation*}
Secondly define
\begin{equation}\label{Absegurchange}
	\zeta(z)=sz,\hspace{0.5cm} |z|<r
\end{equation}
which yields a locally conformal change of variables and which allows us to define the parametrix $U(z)$ near $z=0$ by the formula
\begin{equation}\label{Absegurparametrix}
	U(z)=\sigma_1B_0(z)e^{-i\frac{\pi}{4}\sigma_3}P_{II}^{RH}\big(\zeta(z)\big)e^{i(\frac{4}{3}\zeta(z)+x\zeta(z))\sigma_3}e^{i\frac{\pi}{4}\sigma_3}\sigma_1,\hspace{0.25cm} |z|<r,\hspace{0.25cm}\sigma_1 = \begin{pmatrix}
	0 & 1\\
	1 & 0\\
	\end{pmatrix}
\end{equation}
with $\zeta(z)$ as in \eqref{Absegurchange} and the matrix multiplier
\begin{equation}\label{Absegurmultiplier}
	B_0(z)=\bigg(\frac{z+1}{z-1}\bigg)^{\nu\sigma_3}\left\{\begin{array}{ll}
                                     I, & \hbox{$\textnormal{Im}\ z>0$;} \smallskip\\
                                     e^{-2\pi i\nu\sigma_3}, & \hbox{$\textnormal{Im}\ z<0$;}
                                   \end{array}
                                \right.\hspace{1cm} B_0(0)=e^{-i\pi\nu\sigma_3}.
\end{equation}
By construction, in particular since $B_0(z)$ is analytic in a neighborhood of $z=0$, the parametrix $U(z)$ has jumps along the curves depicted in Figure \ref{fig8}, and we can always locally match the latter curves with the jump curves of the original RHP. Also these jumps are described by the same Stokes matrices as in the original $S$-RHP, since the previously stated jumps of $P_{II}^{RH}(\zeta)$ will be conjugated with $e^{i\frac{\pi}{4}\sigma_3}\sigma_1e^{-s^3\vartheta(z)\sigma_3}$ which precisely matches model jumps with original ones. Thus the ratio of $S(z)$ with $U(z)$ is locally analytic, i.e.
\begin{equation*}
	S(z)=N_0(z)U(z),\hspace{0.5cm}|z|<r<\frac{1}{2}
\end{equation*}
\begin{figure}[tbh]
  \begin{flushleft}
  \psfragscanon
  \psfrag{1}{\footnotesize{$e^{s^3\vartheta(z)\sigma_3}S_Ue^{-s^3\vartheta(z)\sigma_3}$}}
  \psfrag{2}{\footnotesize{$e^{s^3\vartheta(z)\sigma_3}S_Le^{-s^3\vartheta(z)\sigma_3}$}}
  \psfrag{3}{\footnotesize{$e^{s^3\vartheta(z)\sigma_3}S_U^{-1}e^{-s^3\vartheta(z)\sigma_3}$}}
  \psfrag{4}{\footnotesize{$e^{s^3\vartheta(z)\sigma_3}S_L^{-1}e^{-s^3\vartheta(z)\sigma_3}$}}
  \psfrag{5}{\footnotesize{$S_D$}}
  \psfrag{6}{\footnotesize{$S_D$}}
  \hspace{1.5cm}\includegraphics[width=9cm,height=3cm]{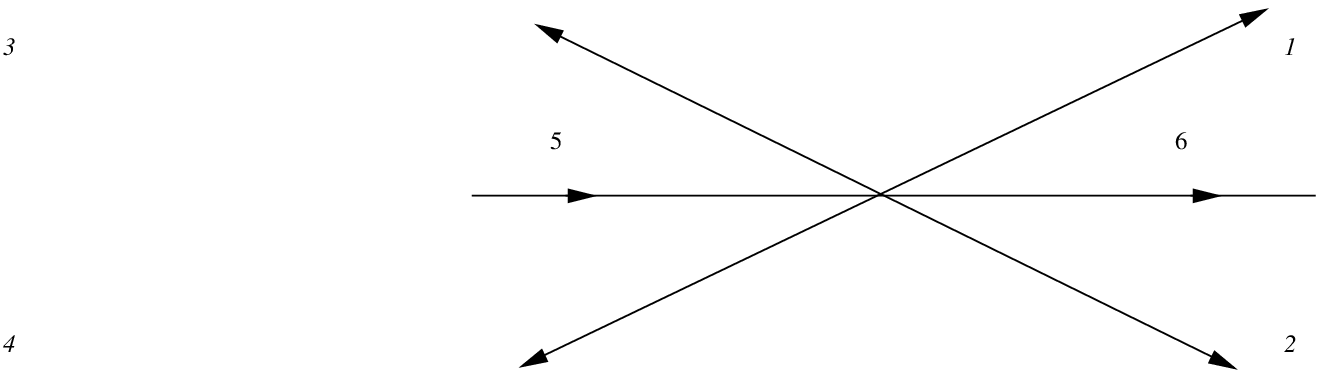}
  \end{flushleft}
  \caption{Jump graph of the parametrix $U(z)$}
  \label{fig8}
\end{figure}

Let us explain the role of the left multiplier $B_0(z)$ in the definition \eqref{Absegurparametrix}. Observe that
\begin{equation*}
	\sigma_1B_0(z)\sigma_1 = M(z)\left\{\begin{array}{ll}
                                     I, & \hbox{$\textnormal{Im}\ \zeta>0$;} \smallskip\\
                                     e^{2\pi i\nu\sigma_3}, & \hbox{$\textnormal{Im}\ \zeta<0$.}
                                   \end{array}
                                \right.
\end{equation*}
This relation together with the asymptotic condition \eqref{Absegur3} implies that,
\begin{eqnarray}\label{Absegurmatch}
	&&U(z) = \sigma_1B_0(z)e^{-i\frac{\pi}{4}\sigma_3}\Big(I+\frac{\tilde{m}_1}{\zeta}+\frac{\tilde{m}_2}{\zeta^2}+\frac{\tilde{m}_3}{\zeta^3}+O\big(\zeta^{-4}\big)\Big)e^{i\frac{\pi}{4}\sigma_3}B_0^{-1}(z)\sigma_1M(z)\nonumber\\
	&&= \bigg[I+\frac{i}{2\zeta}B_0(z)^{-1}\begin{pmatrix}
	v & ue^{-2\pi i\nu}\\
	-ue^{2\pi i\nu} & -v\\
	\end{pmatrix}B_0(z)\nonumber\\
	&&+\frac{1}{8\zeta^2}B_0(z)^{-1}\begin{pmatrix}
	u^2-v^2 & 2(u_x+uv)e^{-2\pi i\nu}\\
	 2(u_x+uv)e^{2\pi i\nu}& u^2-v^2\\
	 \end{pmatrix}B_0(z)+\frac{i}{48\zeta^3}B_0(z)^{-1}\nonumber\\
	 &&\times
	 \begin{pmatrix}
	 -(v^3-3vu^2+2(xv-uu_x)) & -3(u(u^2+v^2)+2(vu_x+xu))e^{-2\pi i\nu}\\
	 3(u(u^2+v^2)+2(vu_x+xu))e^{2\pi i\nu} & v^3-3vu^2+2(xv-uu_x)\\
	 \end{pmatrix}B_0(z)\nonumber\\
	 &&+O\big(\zeta^{-3}\big)\bigg]M(z)
\end{eqnarray}
as $s\rightarrow\infty$ and $0<r_1\leq |z|\leq r_2<1$ (so $|\zeta|\rightarrow\infty$). Since the function $\zeta(z)$ is of order $O\left(s^1\right)$ on the latter annulus and $B_0(z)$ is bounded, equation \eqref{Absegurmatch} yields the matching relation between the model functions $U(z)$ and $M(z)$,
\begin{equation*}
	U(z) = \big(I+o(1)\big)M(z),\hspace{0.5cm} s\rightarrow\infty,\ \ 0 < r_1 \leq |z|\leq r_2<1
\end{equation*}
which is crucial for the successful implementation of the nonlinear steepest descent method as we shall see later on. This is the reason for chosing the left multiplier $B_0(z)$ in \eqref{Absegurparametrix} in the form \eqref{Absegurmultiplier}.

\section{Construction of a parametrix at the edge point $z=+1$}\label{sec9}

Fix a small neighborhood $\mathcal{U}$ of the point $z=+1$ and observe that
\begin{equation*}
	 \vartheta(z) = \vartheta(1)+i\Big(4+\frac{x}{s^2}\Big)(z- 1)+O\big((z- 1)^2\big),
\end{equation*}
as $z\in\mathcal{U}$ and from \eqref{Ssingularendp}
\begin{equation*}
	S(z)=O\big(\ln(z-1)\big),\hspace{0.5cm} z\in\mathcal{U}.
\end{equation*}
Both observations suggest to use the confluent hypergeometric function $U(a,b;\zeta)$ for our construction. This idea can be justified rigorously as follows. Recall that the listed confluent hypergeometric function is defined as unique solution to Kummer's equation
\begin{equation*}
	zw''+(b-z)w'-aw=0
\end{equation*}
satisfying the asymptotic condition as $\zeta\rightarrow\infty$ and $-\frac{3\pi}{2}<\textnormal{arg}\ \zeta<\frac{3\pi}{2}$ (see \cite{BE})
\begin{equation*}
	U(a,b;\zeta)\sim \zeta^{-a}\bigg(1-\frac{a(1+a-b)}{\zeta}+\frac{a(a+1)(1+a-b)(2+a-b)}{2\zeta^2}+O\big(\zeta^{-3}\big)\bigg).
\end{equation*}
Secondly, using the notation $U(a,\zeta)\equiv U(a,1;\zeta)$, the following monodromy relation holds on the entire universal covering of the punctured plane
\begin{equation}\label{conflumono}
	U(1-a,e^{i\pi}\zeta)=e^{2\pi ia}U(1-a,e^{-i\pi}\zeta)-e^{i\pi a}\frac{2\pi i}{\Gamma^2(1-a)}U(a,\zeta)e^{-\zeta}
\end{equation}
and finally we have an expansion at the origin (compare to \eqref{Xendpoint})
\begin{equation}\label{confluorigin}
	U(a,\zeta)= c_0+c_1\ln\zeta +c_2\zeta+c_3\zeta\ln\zeta +O\big(\zeta^2\ln\zeta),\hspace{0.5cm}\zeta\rightarrow 0
\end{equation}
with coefficients $c_i$ given as
\begin{equation*}
	c_0=-\frac{1}{\Gamma(a)}\big(\psi(a)+2\gamma_E),\ c_1=-\frac{1}{\Gamma(a)},\ c_2=-\frac{a}{\Gamma(a)}\big(\psi(a+1)+2\gamma_E-2\big),\ c_3=-\frac{a}{\Gamma(a)}
\end{equation*}
where $\gamma_E$ is Euler's constant and $\psi(z) =\frac{\Gamma'(z)}{\Gamma(z)}$. Remembering the latter properties we now introduce the following matrix-valued function on the punctured plane $\zeta\in\mathbb{C}\backslash\{0\}$ (cf. \cite{IK})
\begin{eqnarray}\label{PCH}
    P_{CH}(\zeta)&=&\begin{pmatrix}
                    U(\nu,e^{i\frac{\pi}{2}}\zeta)e^{2\pi i\nu}e^{-i\frac{\zeta}{2}} & -U(1-\nu,e^{-i\frac{\pi}{2}}\zeta)e^{\pi i\nu}e^{i\frac{\zeta}{2}}\frac{\Gamma(1-\nu)}{\Gamma(\nu)} \\
                    -U(1+\nu,e^{i\frac{\pi}{2}}\zeta)e^{\pi i\nu}e^{-i\frac{\zeta}{2}}\frac{\Gamma(1+\nu)}{\Gamma(-\nu)} & U(-\nu,e^{-i\frac{\pi}{2}}\zeta)e^{i\frac{\zeta}{2}}  \\
                  \end{pmatrix}\nonumber\\
                  &&\times e^{-i\frac{\pi}{2}(\frac{1}{2}-\nu)\sigma_3},\hspace{0.5cm}-\pi<\textnormal{arg}\,\zeta\leq\pi.
\end{eqnarray}
Let us collect the following asymptotic expansions. First in the sector $-\frac{\pi}{2}<\textnormal{arg}\ \zeta <\frac{\pi}{2}$ 
\begin{eqnarray*}
	P_{CH}(\zeta) &=& \Bigg[I+\frac{i}{\zeta}\begin{pmatrix}
	\nu^2& -\frac{\Gamma(1-\nu)}{\Gamma(\nu)}e^{\pi i\nu}\\
	\frac{\Gamma(1+\nu)}{\Gamma(-\nu)}e^{-\pi i\nu} & -\nu^2\\
	\end{pmatrix}\\
	&&+\frac{1}{\zeta^2}\begin{pmatrix}
	-\frac{\nu^2}{2}(1+\nu)^2 & -\frac{\Gamma(1-\nu)}{\Gamma(\nu)}(1-\nu)^2e^{\pi i\nu} \\
	-\frac{\Gamma(1+\nu)}{\Gamma(-\nu)}(1+\nu)^2e^{-\pi i\nu}& -\frac{\nu^2}{2}(1-\nu)^2\\
	\end{pmatrix}+O\big(\zeta^{-3}\big)\Bigg]\zeta^{-\nu\sigma_3}\\
	&&\times e^{-i\frac{\zeta}{2}\sigma_3}e^{-i\frac{\pi}{2}(\frac{1}{2}-\nu)\sigma_3}\begin{pmatrix}
	e^{i\frac{3\pi}{2}\nu} & 0 \\
	0 & e^{-i\frac{\pi}{2}\nu}\\
	\end{pmatrix},\hspace{0.5cm}\zeta\rightarrow\infty.
\end{eqnarray*}
For another sector, say $\frac{\pi}{4}<\textnormal{arg}\ \zeta<\frac{5\pi}{4}$, we use \eqref{conflumono} in the first column of \eqref{PCH} and obtain instead 
\begin{eqnarray*}
	P_{CH}(\zeta) &=& \Bigg[I+\frac{i}{\zeta}\begin{pmatrix}
	\nu^2& -\frac{\Gamma(1-\nu)}{\Gamma(\nu)}e^{\pi i\nu}\\
	\frac{\Gamma(1+\nu)}{\Gamma(-\nu)}e^{-\pi i\nu} & -\nu^2\\
	\end{pmatrix}\\
	&&+\frac{1}{\zeta^2}\begin{pmatrix}
	-\frac{\nu^2}{2}(1+\nu)^2 & -\frac{\Gamma(1-\nu)}{\Gamma(\nu)}(1-\nu)^2e^{\pi i\nu} \\
	-\frac{\Gamma(1+\nu)}{\Gamma(-\nu)}(1+\nu)^2e^{-\pi i\nu}& -\frac{\nu^2}{2}(1-\nu)^2\\
	\end{pmatrix}+O\big(\zeta^{-3}\big)\Bigg]\\
	&&\times \zeta^{-\nu\sigma_3}e^{-i\frac{\zeta}{2}\sigma_3}e^{-i\frac{\pi}{2}(\frac{1}{2}-\nu)\sigma_3}\begin{pmatrix}
	e^{i\frac{3\pi}{2}\nu} & 0 \\
	0 & e^{-i\frac{\pi}{2}\nu}\\
	\end{pmatrix}\begin{pmatrix}
	1 & 0\\
	-\frac{2\pi e^{i\pi\nu}}{\Gamma(1-\nu)\Gamma(\nu)} & 1\\
	\end{pmatrix},\hspace{0.2cm}\zeta\rightarrow\infty,
\end{eqnarray*}
as well as for $-\frac{5\pi}{4}<\textnormal{arg}\ \zeta<-\frac{\pi}{4}$ with a similar argument in the second column of \eqref{PCH}
\begin{eqnarray*}
	P_{CH}(\zeta) &=& \Bigg[I+\frac{i}{\zeta}\begin{pmatrix}
	\nu^2& -\frac{\Gamma(1-\nu)}{\Gamma(\nu)}e^{\pi i\nu}\\
	\frac{\Gamma(1+\nu)}{\Gamma(-\nu)}e^{-\pi i\nu} & -\nu^2\\
	\end{pmatrix}\\
	&&+\frac{1}{\zeta^2}\begin{pmatrix}
	-\frac{\nu^2}{2}(1+\nu)^2 & -\frac{\Gamma(1-\nu)}{\Gamma(\nu)}(1-\nu)^2e^{\pi i\nu} \\
	-\frac{\Gamma(1+\nu)}{\Gamma(-\nu)}(1+\nu)^2e^{-\pi i\nu}& -\frac{\nu^2}{2}(1-\nu)^2\\
	\end{pmatrix}+O\big(\zeta^{-3}\big)\Bigg]\\
	&&\times\zeta^{-\nu\sigma_3}e^{-i\frac{\zeta}{2}\sigma_3}e^{-i\frac{\pi}{2}(\frac{1}{2}-\nu)\sigma_3}\begin{pmatrix}
	e^{i\frac{3\pi}{2}\nu} & 0 \\
	0 & e^{-i\frac{\pi}{2}\nu}\\
	\end{pmatrix}\begin{pmatrix}
	1 & \frac{2\pi e^{-3\pi i\nu}}{\Gamma(1-\nu)\Gamma(\nu)}\\
	0 & 1\\
	\end{pmatrix},\hspace{0.2cm}\zeta\rightarrow\infty.
\end{eqnarray*}
Also of interest is the following exact monodromy relation
\begin{equation}\label{PCHmono}
    P_{CH}(\zeta)=P_{CH}(e^{-2\pi i}\zeta)\begin{pmatrix}
                                            e^{-2\pi i\nu}+\Big(\frac{2\pi}{\Gamma(1-\nu)\Gamma(\nu)}\Big)^2 & -\frac{2\pi e^{-i\pi\nu}}{\Gamma(1-\nu)\Gamma(\nu)} \\
                                            -\frac{2\pi e^{3\pi i\nu}}{\Gamma(1-\nu)\Gamma(\nu)} & e^{2\pi i\nu} \\
                                          \end{pmatrix}.
\end{equation}
Let us now assemble the model function 
\begin{equation}\label{PCHRH}
    P_{CH}^{RH}(\zeta)=\left\{
                         \begin{array}{ll}
                           P_{CH}(\zeta)\Bigl(\begin{smallmatrix}
                          1 & 0 \\
                          \frac{2\pi e^{i\pi\nu}}{\Gamma(1-\nu)\Gamma(\nu)} & 1 \\
                        \end{smallmatrix}\Bigr)\Bigl(\begin{smallmatrix}
                          e^{-\frac{3\pi}{2}i\nu} & 0 \\
                          0 & e^{\frac{\pi}{2}i\nu} \\
                        \end{smallmatrix}\Bigr), & \hbox{arg $\zeta\in(\frac{\pi}{3},\pi)$,}\bigskip \\
                          P_{CH}(\zeta)\Bigl(\begin{smallmatrix}
                          1 & -\frac{2\pi e^{-3\pi i\nu}}{\Gamma(1-\nu)\Gamma(\nu)} \\
                          0 & 1 \\
                        \end{smallmatrix}\Bigr)\Bigl(\begin{smallmatrix}
                          e^{-\frac{3\pi}{2}i\nu} & 0 \\
                          0 & e^{\frac{\pi}{2}i\nu} \\
                        \end{smallmatrix}\Bigr) , & \hbox{arg $\zeta\in(-\pi,-\frac{\pi}{3})$,}\bigskip \\
                          P_{CH}(\zeta)\Bigl(\begin{smallmatrix}
                          e^{-\frac{3\pi}{2}i\nu} & 0 \\
                          0 & e^{\frac{\pi}{2}i\nu} \\
                        \end{smallmatrix}\Bigr), & \hbox{arg $\zeta\in(-\frac{\pi}{3},\frac{\pi}{3})$.}
                         \end{array}
                       \right.
\end{equation}
which solves the RHP depicted in Figure \ref{fig9}.
\begin{figure}[tbh]
  \begin{center}
  \psfragscanon
  \psfrag{1}{\footnotesize{$(1-\gamma)^{-\sigma_3}$}}
  \psfrag{2}{$\Bigl(\begin{smallmatrix}
  1 & 0\\
  \frac{2\pi e^{-i\pi\nu}}{\Gamma(1-\nu)\Gamma(\nu)} & 1\\
  \end{smallmatrix}\Bigr)$}
  \psfrag{3}{$\Bigl(\begin{smallmatrix}
  1 & \frac{2\pi e^{-i\pi\nu}}{\Gamma(1-\nu)\Gamma(\nu)}\\
  0 & 1\\
  \end{smallmatrix}\Bigr)$}
  \includegraphics[width=4cm,height=3cm]{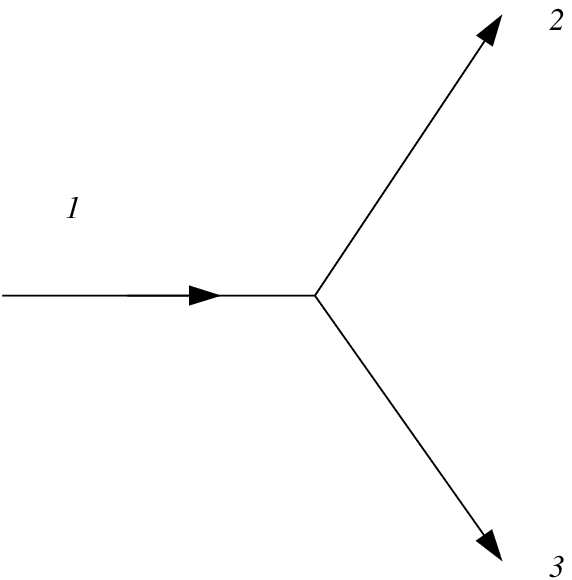}
  \end{center}
  \caption{The model RHP near $z=+1$ which can be solved explicitly using confluent hypergeometric functions}
  \label{fig9}
\end{figure}

In more detail, $P_{CH}^{RH}(\zeta)$ possesses the following analytic properties
\begin{itemize}
    \item $P_{CH}^{RH}(\zeta)$ is analytic for $\zeta\in\mathbb{C}\backslash\{\textnormal{arg}\ \zeta=-\pi,-\frac{\pi}{3},\frac{\pi}{3}\}$
    \item The following jumps are valid, orienting the jump contours as shown in Figure \ref{fig9}
		\begin{eqnarray*}
    	\big(P_{CH}^{RH}(\zeta)\big)_+&=&\big(P_{CH}^{RH}(\zeta)\big)_-(1-\gamma)^{-\sigma_3},\ \ \ \textnormal{arg}\ \zeta=-\pi\\ \smallskip
      \big(P_{CH}^{RH}(\zeta)\big)_+&=&\big(P_{CH}^{RH}(\zeta)\big)_-\begin{pmatrix}
                                                                     1 & 0 \\
                                                                     \frac{2\pi e^{-i\pi\nu}}{\Gamma(1-\nu)\Gamma(\nu)} & 1 \\
                                                                   \end{pmatrix},\
\ \ \ \textnormal{arg}\ \zeta=\frac{\pi}{3}\smallskip\\
    \big(P_{CH}^{RH}(\zeta)\big)_+&=&\big(P_{CH}^{RH}(\zeta)\big)_-\begin{pmatrix}
                                                                     1 & \frac{2\pi e^{-i\pi\nu}}{\Gamma(1-\nu)\Gamma(\nu)} \\
                                                                     0 & 1 \\
                                                                   \end{pmatrix},\
\ \ \ \textnormal{arg}\ \zeta=-\frac{\pi}{3}
\end{eqnarray*}
		and in virtue of the classical identity
		\begin{equation*}
			\Gamma(1-\nu)\Gamma(\nu)=\frac{\pi}{\sin\pi\nu}
		\end{equation*}
		the entry of the latter triangular matrices equal
		\begin{equation*}
				\frac{2\pi e^{-i\pi\nu}}{\Gamma(1-\nu)\Gamma(\nu)} = i\gamma(1-\gamma)^{-1}.
		\end{equation*}
		\item As $\zeta\rightarrow\infty$, the model function
$P_{CH}^{RH}(\zeta)$ shows the following asymptotic behavior in the
whole neighborhood of infinity
\begin{eqnarray*}
	P_{CH}^{RH}(\zeta) &=& \Bigg[I+\frac{i}{\zeta}\begin{pmatrix}
	\nu^2& -\frac{\Gamma(1-\nu)}{\Gamma(\nu)}e^{\pi i\nu}\\
	\frac{\Gamma(1+\nu)}{\Gamma(-\nu)}e^{-\pi i\nu} & -\nu^2\\
	\end{pmatrix}\\
	&&+\frac{1}{\zeta^2}\begin{pmatrix}
	-\frac{\nu^2}{2}(1+\nu)^2 & -\frac{\Gamma(1-\nu)}{\Gamma(\nu)}(1-\nu)^2e^{\pi i\nu} \\
	-\frac{\Gamma(1+\nu)}{\Gamma(-\nu)}(1+\nu)^2e^{-\pi i\nu}& -\frac{\nu^2}{2}(1-\nu)^2\\
	\end{pmatrix}+O\big(\zeta^{-3}\big)\Bigg]\\
	&&\times\zeta^{-\nu\sigma_3}e^{-i\frac{\zeta}{2}\sigma_3}e^{-i\frac{\pi}{2}(\frac{1}{2}-\nu)\sigma_3},\hspace{0.5cm}\zeta\rightarrow\infty.
\end{eqnarray*}
\end{itemize}
The model function $P_{CH}^{RH}(\zeta)$ will now be used in the construction of the parametrix near $z=+1$. Define
\begin{equation}\label{PCHRHchangeright}
	\zeta(z)=-2is^3\big(\vartheta(z)-\vartheta(1)\big),\hspace{0.5cm}|z-1|<r,
\end{equation}
and notice
\begin{equation*}
	\zeta(z) = 2s^3\Big(4+\frac{x}{s^2}\Big)(z-1)\big(1+O(z-1)\big),\hspace{0.5cm}z\rightarrow 1
\end{equation*}
yielding local conformality. This change of variables allows us to define the right parametrix $V(z)$ near $z=+1$ as follows
\begin{equation}\label{PCHRHparametrixright}
	V(z) = \sigma_1e^{-i\frac{\pi}{4}\sigma_3}B_r(z)e^{i\frac{\pi}{2}(\frac{1}{2}-\nu)\sigma_3}e^{-s^3\vartheta(1)\sigma_3}P_{CH}^{RH}\big(\zeta(z)\big)e^{(\frac{i}{2}\zeta(z)+s^3\vartheta(1))\sigma_3}e^{i\frac{\pi}{4}\sigma_3}\sigma_1,
\end{equation}
with $\zeta(z)$ in \eqref{PCHRHchangeright} and the multiplier
\begin{equation*}
	B_r(z)=\bigg(\zeta(z)\frac{z+1}{z-1}\bigg)^{\nu\sigma_3},\hspace{0.5cm}B_r(1)=\big(16s^3+4xs\big)^{\nu\sigma_3}.
\end{equation*}
\begin{figure}[tbh]
  \begin{flushleft}
  \psfragscanon
  \psfrag{1}{\footnotesize{$(1-\gamma)^{\sigma_3}$}}
  \psfrag{2}{$e^{s^3\vartheta(z)\sigma_3}\bigl(\begin{smallmatrix}
  1 & -\gamma(1-\gamma)^{-1}\\
  0 & 1\\
  \end{smallmatrix}\bigr)e^{-s^3\vartheta(z)\sigma_3}$}
  \psfrag{3}{$e^{s^3\vartheta(z)\sigma_3}\bigl(\begin{smallmatrix}
  1 & 0\\
  \gamma(1-\gamma)^{-1} & 1\\
  \end{smallmatrix}\bigr)e^{-s^3\vartheta(z)\sigma_3}$}
  \hspace{4cm}\includegraphics[width=4cm,height=3cm]{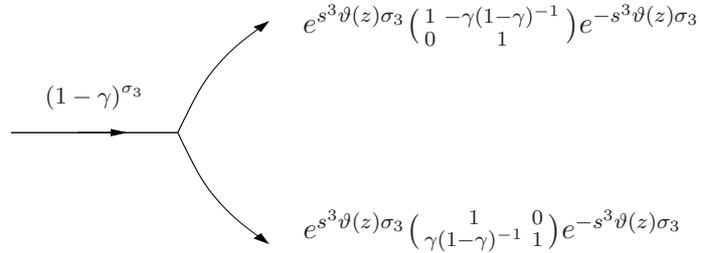}
  \end{flushleft}
  \caption{Transformation of parametrix jumps to original jumps}
  \label{fig10}
\end{figure}
Also here, following from analyticity of $B_r(z)$ and conjugation with $\sigma_1e^{-i\frac{\pi}{4}\sigma_3}e^{-s^3\vartheta(z)\sigma_3}$ in \eqref{PCHRHparametrixright}, parametrix jumps and jump curves (see Figure \ref{fig10}) match original jumps and locally original jump curves. Furthermore, and we shall elaborate this in full detail very soon, the singular endpoint behavior of parametrix $U(z)$ matches \eqref{Ssingularendp}, i.e.
\begin{equation}\label{Uendpoint}
	V(z)=O\big(\ln(z-1)\big),\hspace{0.5cm}z\rightarrow +1.
\end{equation}
Hence the ratio of $S(z)$ with $U(z)$ is locally analytic, i.e.
\begin{equation*}
	S(z) = N_r(z)V(z),\hspace{0.5cm}|z-1|<\frac{1}{2}
\end{equation*}
and again the role of the left multiplier $B_r(z)$ follows from the following asymptotical matchup
\begin{eqnarray}\label{conflurightmatchup}
	V(z)&=&\sigma_1e^{-i\frac{\pi}{4}\sigma_3}B_r(z)e^{i\frac{\pi}{2}(\frac{1}{2}-\nu)\sigma_3}e^{-s^3\vartheta(1)\sigma_3} \Bigg[I+\frac{i}{\zeta}\begin{pmatrix}
	\nu^2& -\frac{\Gamma(1-\nu)}{\Gamma(\nu)}e^{\pi i\nu}\\
	\frac{\Gamma(1+\nu)}{\Gamma(-\nu)}e^{-\pi i\nu} & -\nu^2\\
	\end{pmatrix}\nonumber\\
	&&+\frac{1}{\zeta^2}\begin{pmatrix}
	-\frac{\nu^2}{2}(1+\nu)^2 & -\frac{\Gamma(1-\nu)}{\Gamma(\nu)}(1-\nu)^2e^{\pi i\nu} \\
	-\frac{\Gamma(1+\nu)}{\Gamma(-\nu)}(1+\nu)^2e^{-\pi i\nu}& -\frac{\nu^2}{2}(1-\nu)^2\\
	\end{pmatrix}+O\big(\zeta^{-3}\big)\Bigg]\nonumber\\
	&&\times\zeta^{-\nu\sigma_3}e^{s^3\vartheta(1)\sigma_3}e^{-i\frac{\pi}{2}(\frac{1}{2}-\nu)\sigma_3}e^{i\frac{\pi}{4}\sigma_3}\sigma_1\nonumber\\
	&=&\Bigg[I+\frac{i}{\zeta}\begin{pmatrix}
	-\nu^2 & \frac{\Gamma(1+\nu)}{\Gamma(-\nu)}e^{2s^3\vartheta(1)}\beta_r^{-2}(z)\\
	-\frac{\Gamma(1-\nu)}{\Gamma(\nu)}e^{-2s^3\vartheta(1)}\beta_r^2(z) & \nu^2\\
	\end{pmatrix}\nonumber\\
	&&+\frac{1}{\zeta^2}\begin{pmatrix}
	-\frac{\nu^2}{2}(1-\nu)^2 & -\frac{\Gamma(1+\nu)}{\Gamma(-\nu)}(1+\nu)^2e^{2s^3\vartheta(1)}\beta_r^{-2}(z)\\
	-\frac{\Gamma(1-\nu)}{\Gamma(\nu)}(1-\nu)^2e^{-2s^3\vartheta(1)}\beta_r^2(z) & -\frac{\nu^2}{2}(1+\nu)^2\\
	\end{pmatrix}\nonumber\\
	&&+O\big(\zeta^{-3}\big)\Bigg]M(z)
\end{eqnarray}
as $s\rightarrow\infty$ valid on the annulus $0<r_1\leq |z-1|\leq r_2<1$ (hence $|\zeta|\rightarrow\infty$) using the abbreviation
\begin{equation*}
	\beta_r(z)=\bigg(\zeta(z)\frac{z+1}{z-1}\bigg)^{\nu}.
\end{equation*}
If we are dealing with the case $\gamma<1$, then
\begin{equation*}
	\beta_r^{\pm 2}(z)\frac{1}{\zeta} = O\left(s^{-3\pm 6\textnormal{Re}\,\nu}\right)=o(1),\hspace{0.5cm}s\rightarrow\infty.
\end{equation*}
This would mean that equation \eqref{conflurightmatchup} yields the matching relation between the model functions $V(z)$ and $M(z)$,
\begin{equation}\label{matchgoodr}
	V(z) = \big(I+o(1)\big)M(z),\hspace{0.5cm}s\rightarrow\infty,\ 0<r_1\leq|z-1|\leq r_2<\frac{1}{2}
\end{equation}
which is again crucial for the successful implementation of the nonlinear steepest descent method. However, if $\gamma>1$, then
\begin{equation*}
	\nu=\frac{1}{2\pi i}\ln(\gamma-1)+\frac{1}{2}\equiv \nu_0+\frac{1}{2}
\end{equation*}
and hence
\begin{equation*}
	\beta_r^2(z)\frac{1}{\zeta} = \hat{\beta}_r^2(z)\frac{z+1}{z-1}=O(1),\hspace{0.5cm}s\rightarrow\infty;\hspace{0.5cm}\hat{\beta}_r(z)=\left(\zeta(z)\frac{z+1}{z-1}\right)^{\nu_0}.
\end{equation*}
With this notation, we have to replace \eqref{matchgoodr} in case $\gamma>1$ by
\begin{equation}\label{matchbadr}
	V(z) = E_r(z)\big(I+o(1)\big)M(z),\hspace{0.5cm}s\rightarrow\infty,\ 0<r_1\leq|z-1|\leq r_2<\frac{1}{2}
\end{equation}
where
\begin{equation*}
	E_r(z) = \begin{pmatrix}
		1 & 0\\
		-i\frac{\Gamma(1-\nu)}{\Gamma(\nu)}e^{-2s^3\vartheta(1)}\hat{\beta}_r^2(z)\frac{z+1}{z-1} & 1\\
		\end{pmatrix}.
\end{equation*} 
The appearance of the nontrivial matrix term $E_r(z)$ instead of the unit matrix in estimate \eqref{matchbadr} yields a very serious change in the further asymptotic analysis comparing with the matching case \eqref{matchgoodr}. We will proceed with this analysis in sections \ref{sec13} and \ref{sec14}.


\section{Construction of a parametrix at the edge point $z=-1$}\label{sec10}
For now, we introduce the model RHP near the other endpoint $z=-1$. Opposed to \eqref{PCH} consider
\begin{eqnarray*}
    \widetilde{P}_{CH}(\zeta)&=& \begin{pmatrix}
                               U(-\nu,e^{-i\frac{3\pi}{2}}\zeta)e^{-i\frac{\zeta}{2}} & U(1+\nu,e^{-i\frac{\pi}{2}}\zeta)
                                e^{\pi i\nu}e^{i\frac{\zeta}{2}}\frac{\Gamma(1+\nu)}{\Gamma(-\nu)} \\
                               U(1-\nu,e^{-i\frac{3\pi}{2}}\zeta)e^{\pi i\nu}e^{-i\frac{\zeta}{2}}\frac{\Gamma(1-\nu)}{\Gamma(\nu)} &
U(\nu,e^{-i\frac{\pi}{2}}\zeta)e^{2\pi i\nu}e^{i\frac{\zeta}{2}} \\
                             \end{pmatrix}\\
                             &&\times e^{i\frac{\pi}{2}(\frac{1}{2}-\nu)\sigma_3}=\sigma_2P_{CH}(e^{-i\pi}\zeta)\sigma_2,\ \ \
0<\textnormal{arg}\ \zeta\leq 2\pi.
\end{eqnarray*}
and
\begin{equation}\label{PCHRHleft}
    \widetilde{P}_{CH}^{RH}(\zeta)=\left\{
                                 \begin{array}{ll}
                                   \widetilde{P}_{CH}(\zeta)\Bigl(\begin{smallmatrix}
                          1 & 0 \\
                          \frac{2\pi e^{-3i\pi\nu}}{\Gamma(1-\nu)\Gamma(\nu)} & 1 \\
                        \end{smallmatrix}\Bigr)\Bigl(\begin{smallmatrix}
                          e^{i\frac{\pi}{2}\nu} & 0 \\
                          0 & e^{-i\frac{3\pi}{2}\nu} \\
                        \end{smallmatrix}\Bigr), & \hbox{arg $\zeta\in(0,\frac{2\pi}{3})$,} \bigskip\\
                                   \widetilde{P}_{CH}(\zeta)\Bigl(\begin{smallmatrix}
                          1 & -\frac{2\pi e^{i\pi\nu}}{\Gamma(1-\nu)\Gamma(\nu)} \\
                          0 & 1 \\
                        \end{smallmatrix}\Bigr)\Bigl(\begin{smallmatrix}
                          e^{i\frac{\pi}{2}\nu} & 0 \\
                          0 & e^{-i\frac{3\pi}{2}\nu} \\
                        \end{smallmatrix}\Bigr), & \hbox{arg $\zeta\in(\frac{4\pi}{3},2\pi)$,} \bigskip \\
                                   \widetilde{P}_{CH}(\zeta)\Bigl(\begin{smallmatrix}
                          e^{i\frac{\pi}{2}\nu} & 0 \\
                          0 & e^{-i\frac{3\pi}{2}\nu} \\
                        \end{smallmatrix}\Bigr), & \hbox{arg $\zeta\in(\frac{2\pi}{3},\frac{4\pi}{3})$.}
                                 \end{array}
                               \right.
\end{equation}
The model function $\widetilde{P}_{CH}^{RH}(\zeta)$ solves the RHP of Figure \ref{fig11}
\begin{figure}[tbh]
  \begin{center}
  \psfragscanon
  \psfrag{1}{\footnotesize{$(1-\gamma)^{-\sigma_3}$}}
  \psfrag{2}{$\Bigl(\begin{smallmatrix}
  1 & 0\\
  -\frac{2\pi e^{-i\pi\nu}}{\Gamma(1-\nu)\Gamma(\nu)} & 1\\
  \end{smallmatrix}\Bigr)$}
  \psfrag{3}{$\Bigl(\begin{smallmatrix}
  1 & -\frac{2\pi e^{-i\pi\nu}}{\Gamma(1-\nu)\Gamma(\nu)}\\
  0 & 1\\
  \end{smallmatrix}\Bigr)$}
  \includegraphics[width=6cm,height=3cm]{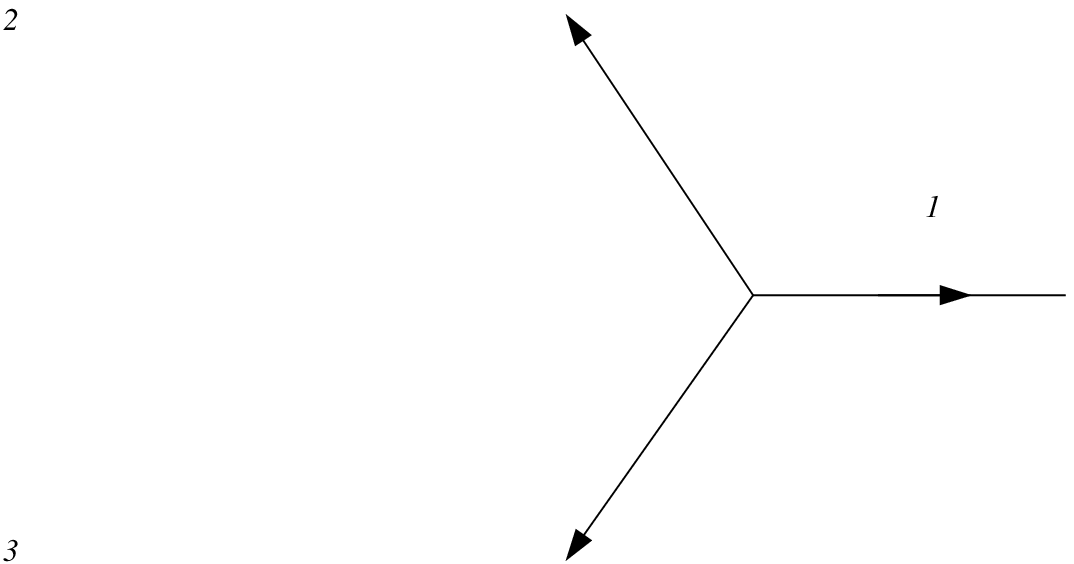}
  \end{center}
  \caption{The model RHP near $z=-1$ which can be solved explicitly using confluent hypergeometric functions}
  \label{fig11}
\end{figure}
\begin{itemize}
	\item $\widetilde{P}_{CH}^{RH}(\zeta)$ is analytic for $\zeta\in\mathbb{C}\backslash\{\textnormal{arg}\ \zeta=0,\frac{2\pi}{3},\frac{4\pi}{3}\}$
	\item Along the contour in Figure \ref{fig11}, the following jumps are valid (recall \eqref{PCHmono} and the symmetry relation $\widetilde{P}_{CH}^{RH}(\zeta)=\sigma_2P_{CH}(e^{-i\pi}\zeta)\sigma_2$)
	\begin{eqnarray*}
    \big(\widetilde{P}_{CH}^{RH}(\zeta)\big)_+&=&\big(\widetilde{P}_{CH}^{RH}(\zeta)\big)_-e^{-2\pi
i\nu\sigma_3},\ \ \ \textnormal{arg}\ \zeta=0\smallskip\\
    \big(\widetilde{P}_{CH}^{RH}(\zeta)\big)_+ &=&
\big(\widetilde{P}_{CH}^{RH}(\zeta)\big)_-\begin{pmatrix}
                                        1 & 0 \\
                                        -\frac{2\pi e^{-i\pi\nu}}{\Gamma(1-\nu)\Gamma(\nu)} & 1 \\
                                      \end{pmatrix},\ \ \
\textnormal{arg}\ \zeta=\frac{2\pi}{3}\smallskip\\
    \big(\widetilde{P}_{CH}^{RH}(\zeta)\big)_+&=&\big(\widetilde{P}_{CH}^{RH}(\zeta)\big)_-\begin{pmatrix}
                                                                                     1 & -\frac{2\pi e^{-i\pi\nu}}{\Gamma(1-\nu)\Gamma(\nu)} \\
                                                                                     0 & 1 \\
                                                                                   \end{pmatrix},\
\ \ \textnormal{arg}\ \zeta=\frac{4\pi}{3}
\end{eqnarray*}
	\item From symmetry $\widetilde{P}_{CH}^{RH}(\zeta)=\sigma_2P_{CH}(e^{-i\pi}\zeta)\sigma_2$ and the asymptotic information derived earlier for $P_{CH}(\zeta)$ in the different sectors, we deduce the following behavior, valid in a full neighborhood of infinity
\begin{eqnarray*}
	\widetilde{P}_{CH}^{RH}(\zeta)&=&\Bigg[I+\frac{i}{\zeta}\begin{pmatrix}
	\nu^2 & \frac{\Gamma(1+\nu)}{\Gamma(-\nu)}e^{-\pi i\nu}\\
	-\frac{\Gamma(1-\nu)}{\Gamma(\nu)}e^{\pi i\nu} & -\nu^2\\
	\end{pmatrix}\\
	&&+\frac{1}{\zeta^2}\begin{pmatrix}
	-\frac{\nu^2}{2}(1-\nu)^2 & \frac{\Gamma(1+\nu)}{\Gamma(-\nu)}(1+\nu)^2e^{-\pi i\nu}\\
	\frac{\Gamma(1-\nu)}{\Gamma(\nu)}(1-\nu)^2e^{\pi i\nu} & -\frac{\nu^2}{2}(1+\nu)^2\\
	\end{pmatrix}+O\big(\zeta^{-3}\big)\Bigg]\\
	&&\times\big(e^{-i\pi}\zeta\big)^{\nu\sigma_3}e^{-i\frac{\zeta}{2}\sigma_3}e^{i\frac{\pi}{2}(\frac{1}{2}-\nu)\sigma_3},\hspace{0.5cm}\zeta\rightarrow\infty
\end{eqnarray*}
\end{itemize}
Now, similarly to what we did in the construction of $V(z)$, define
\begin{equation}\label{PCHRHchangeleft}
	\zeta(z)=-2is^3\big(\vartheta(z)-\vartheta(-1)\big),\hspace{0.5cm}|z+1|<r
\end{equation}
with
\begin{equation*}
	\zeta(z)= 2s^3\Big(4+\frac{x}{s^2}\Big)(z+1)\big(1+O(z+1)\big),\hspace{0.5cm}z\rightarrow -1,
\end{equation*}
hence a locally conformal change of variables. With this change the parametrix $W(z)$ near the left endpoint $z=-1$ will be defined as
\begin{equation}\label{PCHRHparametrixleft}						W(z)=\sigma_1e^{-i\frac{\pi}{4}\sigma_3}B_l(z)e^{-i\frac{\pi}{2}(\frac{1}{2}-\nu)\sigma_3}e^{-s^3\vartheta(-1)\sigma_3}\widetilde{P}_{CH}^{RH}\big(\zeta(z)\big)e^{(\frac{i}{2}\zeta(z)+s^3\vartheta(-1))\sigma_3}e^{i\frac{\pi}{4}\sigma_3}\sigma_1,
\end{equation}
with $\zeta(z)$ as in \eqref{PCHRHchangeleft} and 
\begin{equation*}
	B_l(z)=\bigg(e^{-i\pi}\zeta(z)\frac{z-1}{z+1}\bigg)^{-\nu\sigma_3},\hspace{0.5cm} B_l(-1) = \big(16s^3+4xs\big)^{-\nu\sigma_3}.
\end{equation*}
Since 
\begin{equation*}
	\frac{2\pi e^{-i\pi\nu}}{\Gamma(1-\nu)\Gamma(\nu)}=i\gamma(1-\gamma)^{-1}
\end{equation*}
the stated conjugation with $\sigma_1e^{-i\frac{\pi}{4}\sigma_3}e^{-s^3\vartheta(z)\sigma_3}$ will again match parametrix jumps with original jumps locally on the original jump contour (see Figure \ref{fig12}).
\begin{figure}[tbh]
  \begin{center}
  \psfragscanon
  \psfrag{1}{\footnotesize{$(1-\gamma)^{\sigma_3}$}}
  \psfrag{2}{$e^{s^3\vartheta(z)\sigma_3}\bigl(\begin{smallmatrix}
  1 & \gamma(1-\gamma)^{-1}\\
  0 & 1\\
  \end{smallmatrix}\bigr)e^{-s^3\vartheta(z)\sigma_3}$}
  \psfrag{3}{$e^{s^3\vartheta(z)\sigma_3}\bigl(\begin{smallmatrix}
  1 & 0\\
  -\gamma(1-\gamma)^{-1} & 1\\
  \end{smallmatrix}\bigr)e^{-s^3\vartheta(z)\sigma_3}$}
  \includegraphics[width=9cm,height=3cm]{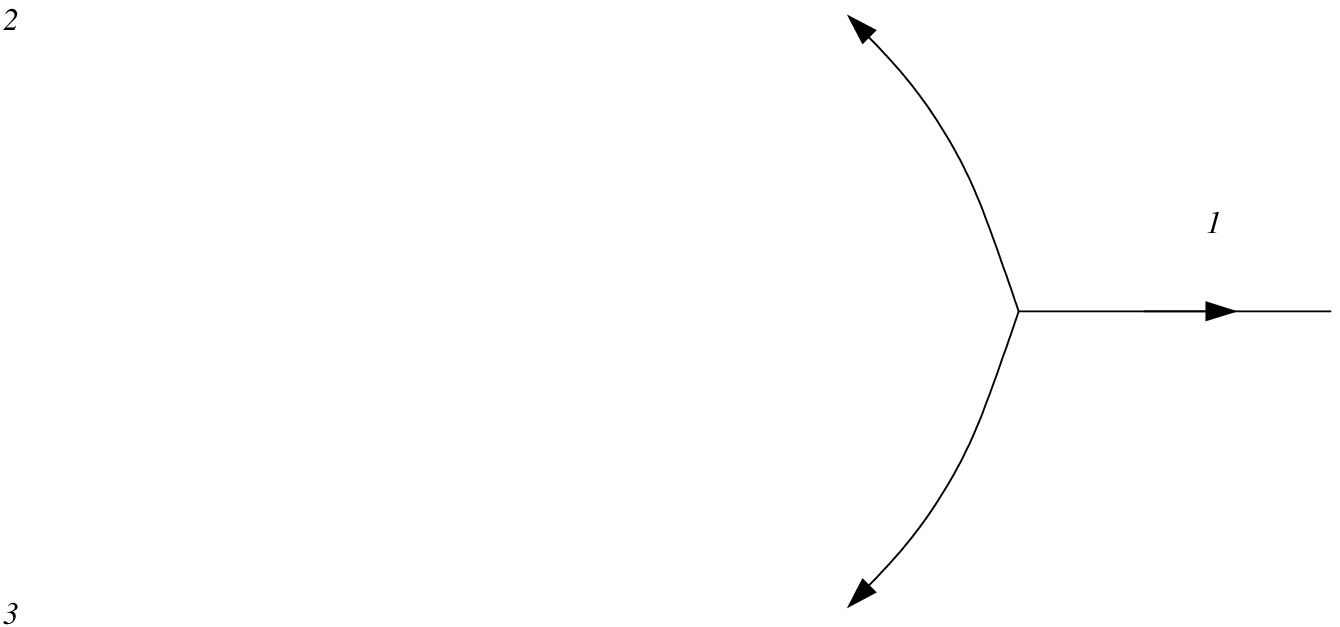}
  \end{center}
  \caption{Transformation of parametrix jumps to original jumps}
  \label{fig12}
\end{figure}
Moreover the singular endpoint behavior
\begin{equation}\label{Wendpoint}
	W(z) = O\big(\ln(z+1)\big),\hspace{0.5cm}z\rightarrow -1
\end{equation}
matches \eqref{Xendpoint}, leading to a locally analytic ratio of $S(z)$ with $W(z)$
\begin{equation*}
	S(z)=N_l(z)W(z),\hspace{0.5cm}|z+1|<\frac{1}{2}.
\end{equation*}
Also here the left multiplier $B_l(z)$ provides us with an asymptotical matchup
\begin{eqnarray}\label{confluleftmatchup}
	W(z) &=& \Bigg[I+\frac{i}{\zeta}\begin{pmatrix}
	-\nu^2 & \frac{\Gamma(1-\nu)}{\Gamma(\nu)}e^{2s^3\vartheta(-1)}\beta_l^2(z)\\
	-\frac{\Gamma(1+\nu)}{\Gamma(-\nu)}e^{-2s^3\vartheta(-1)}\beta_l^{-2}(z)& \nu^2\\
	\end{pmatrix}\nonumber\\
	&&+\frac{1}{\zeta^2}\begin{pmatrix}
	-\frac{\nu^2}{2}(1+\nu)^2 & -\frac{\Gamma(1-\nu)}{\Gamma(\nu)}(1-\nu)^2e^{2s^3\vartheta(-1)}\beta_l^2(z)\\
	-\frac{\Gamma(1+\nu)}{\Gamma(-\nu)}(1+\nu)^2e^{-2s^3\vartheta(-1)}\beta_l^{-2}(z)& -\frac{\nu^2}{2}(1-\nu)^2\\
	\end{pmatrix}\nonumber\\
	&&+O\big(\zeta^{-3}\big)\Bigg]M(z)
\end{eqnarray}
as $s\rightarrow\infty$ valid on the annulus $0<r_1\leq|z+1|\leq r_2<1$ (thus $|\zeta|\rightarrow\infty$) and we introduced the abbreviation
\begin{equation*}
	\beta_l(z) = \bigg(e^{-i\pi}\zeta(z)\frac{z-1}{z+1}\bigg)^{\nu}.
\end{equation*}
Similar to the previous situation this implies on the annulus for $\gamma<1$ 
\begin{equation*}
	W(z)=\big(I+o(1)\big)M(z),\hspace{0.5cm}s\rightarrow\infty
\end{equation*}
whereas in case $\gamma>1$
\begin{equation}\label{matchbadl}
	W(z)=E_l(z)\big(I+o(1)\big)M(z),\hspace{0.5cm}s\rightarrow\infty
\end{equation}
with
\begin{equation*}
	E_l(z)=\begin{pmatrix}
	1 & -i\frac{\Gamma(1-\nu)}{\Gamma(\nu)}e^{2s^3\vartheta(-1)}\hat{\beta}_l^2(z)\frac{z-1}{z+1}\\
	0 & 1\\
	\end{pmatrix},\hspace{0.5cm}\hat{\beta}_l(z) = \bigg(e^{-i\pi}\zeta(z)\frac{z-1}{z+1}\bigg)^{\nu_0}.
\end{equation*}
At this point we can use the model functions $M(z),U(z),V(z)$ and $W(z)$ to employ a further transformation.


\section{Third transformation of the RHP}\label{sec11}

We put in this transformation
\begin{equation}\label{ratioRHP}
	R(z) = S(z)\left\{
                                   \begin{array}{ll}
                                     \big(V(z)\big)^{-1}, & \hbox{$|z-1|<r_1$,} \\
                                     \big(U(z)\big)^{-1}, & \hbox{$|z|<r_2$,} \\
                                     \big(W(z)\big)^{-1}, & \hbox{$|z+1|<r_1$,} \\
                                     \big(M(z)\big)^{-1}, & \hbox{$|z-1|>r_1,|z+1|>r_1,|z|>r_2$,} \\
                                   \end{array}
                                 \right.
\end{equation}
where $0<r_1,r_2<\frac{1}{2}$ is fixed. With $C_{0,r,l}$ denoting the clockwise oriented circles shown in Figure \ref{fig15}, the ratio-function $R(z)$ solves the following RHP
\begin{figure}[tbh]
  \begin{center}
  \psfragscanon
  \psfrag{1}{\footnotesize{$\gamma_1$}}
  \psfrag{2}{\footnotesize{$\gamma_2$}}
  \psfrag{3}{\footnotesize{$\gamma_3$}}
  \psfrag{4}{\footnotesize{$\gamma_4$}}
  \psfrag{5}{\footnotesize{$\gamma_5$}}
  \psfrag{6}{\footnotesize{$\gamma_6$}}
  \psfrag{7}{\footnotesize{$\gamma_7$}}
  \psfrag{8}{\footnotesize{$\gamma_8$}}
  \psfrag{9}{\footnotesize{$C_0$}}
  \psfrag{10}{\footnotesize{$C_l$}}
  \psfrag{11}{\footnotesize{$C_r$}}
  \includegraphics[width=8cm,height=5cm]{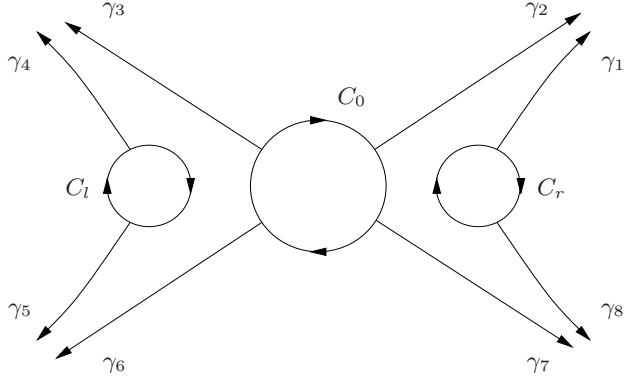}
  \end{center}
  \caption{The jump graph for the ratio-function $R(z)$}
  \label{fig15}
\end{figure}
\begin{itemize}
	\item $R(z)$ is analytic for $z\in\mathbb{C}\backslash\big\{ C_{0,r,l}\cup \bigcup_{i=1}^8 \gamma_i\big\}$
	\item For the jumps, along the infinite branches $\gamma_i$
	\begin{equation*}
		R_+(z) = R_-(z)M(z)e^{s^3\vartheta(z)\sigma_3}\widehat{G}_Se^{-s^3\vartheta(z)\sigma_3}\big(M(z)\big)^{-1},
	\end{equation*}
	with $\widehat{G}_S$ denoting the corresponding jump matrices from \eqref{openingjumps}. On the clockwise orientied circles $C_0$ and $C_{r,l}$, the jumps are described by the equation
	\begin{eqnarray*}
		R_+(z) &=& R_-(z)U(z)\big(M(z)\big)^{-1},\ z\in C_0,\\
		R_+(z) &=& R_-(z)V(z)\big(M(z)\big)^{-1},\ z\in C_r, \\
		R_+(z) &=& R_-(z)W(z)\big(M(z)\big)^{-1},\ z\in C_l.
	\end{eqnarray*}
	\item $R(z)$ is analytic at $z=\pm 1$. This observation will follow directly from \eqref{Uendpoint} and \eqref{Wendpoint}, which will be proven in section \ref{sec16}
	\item In a neighborhood of infinity, we have $R(z)\rightarrow I$.
\end{itemize}	
We emphasize that, by construction, $R(z)$ has no jumps inside of the circles $C_{0,r,l}$ and across the line segment in between. In order to apply the Defit-Zhou nonlinear steepest descent method to the ratio-RHP, all its jump matrices have to be close to the unit matrix, as $s\rightarrow\infty$, compare \cite{DZ1}. Hence it is now important to recall the previosuly stated behavior of the jump matrices as $s\rightarrow\infty$: As mentioned before, due to the triangularity of $S_U,S_L$ and the sign-diagram of $\textnormal{Re}\,\vartheta(z)$, the jump matrices corresponding to the infinite parts $\bigcup_{i=1}^8\gamma_i$ of the $R$-jump contour are in fact exponentially close to the unit matrix
\begin{equation}\label{esti1}
	\|Me^{s^3\vartheta(\cdot)\sigma_3}\widehat{G}_Se^{-s^3\vartheta(\cdot)\sigma_3}\big(M\big)^{-1}-I\|_{L^2\cap L^{\infty}(\gamma_i)}\leq c_1\left\{\begin{array}{ll}
                                     e^{-c_2s^3|z|}, & \hbox{$\textnormal{from}\ C_0$;} \smallskip\\
                                     e^{-c_3s^3|z\mp 1|}, & \hbox{$\textnormal{from}\ C_{r,l}$.}
                                   \end{array}
                                \right.
\end{equation}
as $s\rightarrow\infty$ with constants $c_i>0$ whose values are not important. Also by virtue of \eqref{Absegurmatch}, $U(z)\big(M(z)\big)^{-1}$ approaches the unit matrix as $s\rightarrow\infty$,
\begin{equation}\label{esti2}
	\| U\big(M\big)^{-1}-I\|_{L^2\cap L^{\infty}(C_0)}\leq c_4s^{-1}
\end{equation}
with a constant $c_4>0$. The jumps on $C_{r,l}$ however have to be treated more carefully. As we already mentioned in sections \ref{sec9} and \ref{sec10}, estimates \eqref{conflurightmatchup} and \eqref{confluleftmatchup} yield in case $\gamma<1$
\begin{equation}\label{esti3}
	\| V\big(M\big)^{-1}-I\|_{L^2\cap L^{\infty}(C_r)}\leq c_5s^{-3},\ \ \|W\big(M\big)^{-1}-I\|_{L^2\cap L^{\infty}(C_l)}\leq c_6 s^{-3}
\end{equation}
as $s\rightarrow\infty$. The estimations \eqref{esti1}, \eqref{esti2} and \eqref{esti3}, which are uniform on any compact subset of the set \eqref{excset1}
\begin{equation*}
		\{(\gamma,x)\in\mathbb{R}^2:\ -\infty<\gamma<1,\ -\infty<x<\infty\},
\end{equation*}
enable us to solve the ratio-RHP iteratively in that particular situation.


\section{Solution of the RHP for $R(z)$ via iteration,\ $\gamma<1$}\label{sec12}

Let us denote with $G_R$ the jump matrix in the ratio-RHP and with $\Sigma_R$ the underlying contour. The stated RHP for the function $R(z)$
\begin{itemize}
	\item $R(z)$ is analytic for $z\in\mathbb{C}\backslash\Sigma_R$
	\item Along the contour depicted in Figure \ref{fig15}
	\begin{equation*}
		R_+(z) = R_-(z)G_R(z),\hspace{0.5cm}z\in\Sigma_R.
	\end{equation*} 
	\item As $z\rightarrow\infty$, we have $R(z) = I+O\big(z^{-1}\big)$.
\end{itemize} 
is equivalent to the singular integral equation
\begin{equation}\label{integraleq1}
	R_-(z) = I+\frac{1}{2\pi i}\int\limits_{\Sigma_R}R_-(w)\big(G_R(w)-I\big)\frac{dw}{w-z_-}
\end{equation} 
and by the previous estimates \eqref{esti1},\eqref{esti2} and \eqref{esti3}, we have
\begin{equation}\label{esti4}
	\|G_{R}-I\|_{L^2\cap L^{\infty}(\Sigma_R)}\leq c_6s^{-1}
\end{equation}
uniformly on any compact subset of the set \eqref{excset1}. By standard arguments (see \cite{DZ1}), we know that for sufficiently large $s$, the relevant integral operator is contracting and equation \eqref{integraleq1} can be solved iteratively in $L^2(\Sigma_R)$. Moreover, its unique solution satisfies
\begin{equation}\label{esti5}
	\|R_--I\|_{L^2(\Sigma_R)}\leq cs^{-1},\hspace{0.5cm} s\rightarrow\infty.
\end{equation}
The latter information is all we need to compute the asymptotic expansion for the Fredholm determinant $\det(I-\gamma K_{\textnormal{csin}})$ in case $\gamma<1$. Before we derive the relevant asymptotics let us first discuss the situation $\gamma>1$. In this case
\begin{equation}\label{esti6}
	\| V\big(M\big)^{-1}-I\|_{L^2\cap L^{\infty}(C_r)}\nrightarrow 0,\ \ \| W\big(M\big)^{-1}\|_{L^2\cap L^{\infty}(C_l)}\nrightarrow 0
\end{equation}
and we need to employ further transformations.


\section{Fourth transformation of the RHP - undressing}\label{sec13}

The presence of the multipliers $E_r(z)$ and $E_l(z)$ in \eqref{matchbadr} and \eqref{matchbadl} requires further transformations leading to a singular or solitonic type of Riemann-Hilbert problem. Following \cite{FT,BoI2}, we will show how to deal with the singular structure.
\smallskip

A key observation for our next move is that the jump matrices $G_r(z)=V(z)\big(M(z)\big)^{-1}$ and $G_l(z)=W(z)\big(M(z)\big)^{-1}$ admit the following algebraic factorizations
\begin{eqnarray}
	G_r(z) &=& E_r(z)\widehat{G}_r(z)=\begin{pmatrix}
	1 & 0\\
	-i\frac{\Gamma(1-\nu)}{\Gamma(\nu)}e^{-2s^3\vartheta(1)}\hat{\beta}_r^2(z)\frac{z+1}{z-1}& 1\\
	\end{pmatrix}\nonumber\\
	&&\times\Bigg[I+\frac{i}{\zeta}\begin{pmatrix}
	-\nu^2 & 0 \\
	-i\frac{\Gamma(1-\nu)}{\Gamma(\nu)}e^{-2s^3\vartheta(1)}\frac{z+1}{z-1}(2\nu-1) & \nu^2\\
	\end{pmatrix} +O\big(\zeta^{-2}\big)\Bigg],\label{facto1}\\
	G_l(z) &=& E_l(z)\widehat{G}_l(z)=\begin{pmatrix}
	1 & -i\frac{\Gamma(1-\nu)}{\Gamma(\nu)}e^{2s^3\vartheta(-1)}\hat{\beta}_l^2(z)\frac{z-1}{z+1}\\
	0 & 1\\
	\end{pmatrix}\nonumber\\
	&&\times\Bigg[I+\frac{i}{\zeta}\begin{pmatrix}
	-\nu^2 & -i\frac{\Gamma(1-\nu)}{\Gamma(\nu)}e^{2s^3\vartheta(-1)}\hat{\beta}_l^2(z)(1-2\nu)\\
	0 & \nu^2\\
	\end{pmatrix}+O\big(\zeta^{-2}\big)\Bigg]\label{facto2}
\end{eqnarray}
as $s\rightarrow\infty$ and $0<r_1\leq |z\mp 1|\leq r_2<1$. We observe that $\|\widehat{G}_{r,l}-I\|\rightarrow 0$ as $s\rightarrow\infty$; in fact, since $|\zeta(z)|\geq cs^3$ on $C_r\cup C_l$, we have that
\begin{equation*}
	\|\widehat{G}_{r,l}-I\|_{L^2\cap L^{\infty}(C_r,l)}\leq c_7s^{-3},\hspace{0.5cm}s\rightarrow\infty.
\end{equation*}
Hence the natural idea is to pass from the function $R(z)$ to the function $P(z)$ defined by the equations
\begin{equation}\label{undressing}
	P(z)=\left\{
                 \begin{array}{ll}
                   R(z)E_r(z), & \hbox{$|z-1|<r_1$,} \\
                   R(z)E_l(z), & \hbox{$|z+1|<r_1$,} \\
                   R(z), & \hbox{$|z\mp 1|>r_1$.} 
                 \end{array}
               \right.
\end{equation}
with $0<r_1<\frac{1}{2}$ chosen as in \eqref{ratioRHP}. By definition, the function $P(z)$ solves the following RHP:
\begin{itemize}
	\item $P(z)$ is analytic for $z\in\mathbb{C}\backslash \big( C_{0,r,l} \cup\{\pm 1\}\cup\bigcup_{i=1}^8\gamma_i\big)$
	\item $P_+(z)=P_-(z)G_P(z)$, where
	\begin{equation*}
		G_P(z)=\left\{
                 \begin{array}{ll}
                   \widehat{G}_{r,l}(z), & \hbox{$z\in C_{r,l}$,} \\
                   U(z)\big(M(z)\big)^{-1}, & \hbox{$z\in C_0$,} \\
                   M(z)e^{s^3\vartheta(z)\sigma_3}\widehat{G}_Se^{-s^3\vartheta(z)\sigma_3}\big(M(z)\big)^{-1}, & \hbox{$z\in\gamma_i,\,i=1,\ldots,8$.} 
                 \end{array}
               \right.
  \end{equation*}
	\item $P(z)$ has first-order poles at $z=\pm 1$. More precisely let $P(z)=\big(P^{(1)}(z),P^{(2)}(z)\big)$ with $P^{(i)}(z)$ denoting the columns of the corresponding $2\times 2$ matrix valued function. We obtain from \eqref{facto1},\eqref{facto2} and \eqref{undressing}
	\begin{eqnarray}
		\textnormal{res}_{z=+1}P^{(1)}(z) &=&P^{(2)}(1)\bigg(-2i\frac{\Gamma(1-\nu)}{\Gamma(\nu)}e^{-2s^3\vartheta(1)}\hat{\beta}_r^2(1)\bigg)\label{res1}\\
		\textnormal{res}_{z=-1}P^{(2)}(z) &=&P^{(1)}(-1)\bigg(2i\frac{\Gamma(1-\nu)}{\Gamma(\nu)}e^{2s^3\vartheta(-1)}\hat{\beta}_l^2(-1)\bigg)\label{res2}.
	\end{eqnarray} 
	\item As $z\rightarrow\infty$, we have $P(z)\rightarrow I$.
\end{itemize}
First the latter four properties determine $P(z)$ uniquely.

\begin{prop}\label{prop4} The stated singular Riemann-Hilbert problem for $P(z)$ has a unique solution.
\end{prop}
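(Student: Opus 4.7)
\medskip

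\noindent\emph{Proof proposal.} The plan is to reduce the uniqueness of the singular RHP to a standard Liouville argument by (i) showing that $\det P(z)\equiv 1$, and (ii) absorbing the first-order poles at $z=\pm 1$ into explicit unimodular singular left factors, which cancel in any ratio of two solutions.

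First, I would verify $\det P(z)\equiv 1$. Every jump matrix appearing in the $P$-problem (the Stokes triangular factors, $\widehat G_{r,l}$ and $U M^{-1}$) is unimodular, so $\det P$ has no jump across the contour; at infinity $\det P\to 1$. The only issue is the behaviour at $z=\pm 1$. Using the column notation $P=(P^{(1)},P^{(2)})$ and the residue relation \eqref{res1}, one writes
\begin{equation*}
P^{(1)}(z)=\frac{\alpha_r P^{(2)}(1)}{z-1}+\widetilde P^{(1)}(z),\qquad \alpha_r=-2i\frac{\Gamma(1-\nu)}{\Gamma(\nu)}e^{-2s^3\vartheta(1)}\hat\beta_r^2(1),
\end{equation*}
with $\widetilde P^{(1)}$ holomorphic at $z=1$ and $P^{(2)}$ entirely regular there. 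Since the singular part of $P^{(1)}$ is proportional to $P^{(2)}(1)$, the wedge $P^{(1)}(z)\wedge P^{(2)}(z)$ loses its pole, i.e. $\det P$ is regular at $z=+1$. The same rank-one mechanism works at $z=-1$ using \eqref{res2}. Hence $\det P$ is entire with limit $1$ at infinity, and Liouville yields $\det P\equiv 1$.

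The second, and decisive, ingredient is a local factorization that encodes the residue conditions. Set
\begin{equation*}
M_r(z)=\begin{pmatrix}1 & 0\\ \frac{\alpha_r}{z-1} & 1\end{pmatrix},\qquad M_l(z)=\begin{pmatrix}1 & \frac{\alpha_l}{z+1}\\ 0 & 1\end{pmatrix},\qquad \alpha_l=2i\frac{\Gamma(1-\nu)}{\Gamma(\nu)}e^{2s^3\vartheta(-1)}\hat\beta_l^2(-1).
\end{equation*}
A direct computation using \eqref{res1} shows that $P(z)\,M_r(z)^{-1}$ has a holomorphic first column at $z=1$ (the singular part of $P^{(1)}$ is exactly cancelled by $-\tfrac{\alpha_r}{z-1}P^{(2)}(z)$, evaluated to leading order at $z=1$), while the second column is unchanged and regular. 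Thus $P(z)=\widetilde P_r(z)M_r(z)$ with $\widetilde P_r$ holomorphic at $z=1$, and since $\det M_r\equiv 1$ we have $\det\widetilde P_r\equiv 1$, so $\widetilde P_r(1)$ is invertible. An analogous factorization $P(z)=\widetilde P_l(z)M_l(z)$ holds at $z=-1$.

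Uniqueness now follows immediately. Let $P_1, P_2$ be two solutions of the problem and put $Q(z)=P_1(z)P_2(z)^{-1}$. Since $P_1$ and $P_2$ share the same multiplicative jump $G_P$ on $\Sigma_P$, the ratio $Q$ has no jump. Near $z=+1$, writing $P_i=\widetilde P_{i,r}M_r$ gives $Q=\widetilde P_{1,r}M_r M_r^{-1}\widetilde P_{2,r}^{-1}=\widetilde P_{1,r}\widetilde P_{2,r}^{-1}$, which is holomorphic at $z=+1$ because both $\widetilde P_{i,r}$ are holomorphic and unimodular there; the same argument with $M_l$ handles $z=-1$. Hence $Q$ extends to an entire function, and $Q(z)\to I$ as $z\to\infty$, so Liouville's theorem forces $Q\equiv I$, i.e.\ $P_1\equiv P_2$. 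The one place I expect to have to argue carefully is the verification that $\widetilde P_r$ (and $\widetilde P_l$) is \emph{invertible} at the pole, but this is supplied for free by the determinantal identity $\det P\equiv 1$ together with $\det M_{r,l}\equiv 1$; no separate small-norm or perturbative input is needed, so the uniqueness is an exact (non-asymptotic) statement.
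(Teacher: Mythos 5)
Your proposal is correct and follows essentially the same route as the paper: the residue conditions are encoded in unimodular singular left/right factors (your $M_{r,l}$ coincide with the paper's factors in \eqref{singbehavior} since $\alpha_r=-2p$), $\det P\equiv 1$ follows by Liouville, and the ratio $P_1P_2^{-1}$ is entire with limit $I$ at infinity. No gaps.
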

\begin{proof} The residue relations \eqref{res1},\eqref{res2} imply
\begin{equation}\label{singbehavior}
	P(z)=\left\{
                 \begin{array}{ll}
                   \hat{P}^{(+)}(z)\begin{pmatrix}
                   1 & 0\\
                   -\frac{2p}{z-1} & 1\\
                   \end{pmatrix}, & \hbox{$|z-1|<r$;} \\
                   \hat{P}^{(-)}(z)\begin{pmatrix}
                   1 & \frac{2p}{z+1}\\
                   0 & 1\\
                   \end{pmatrix}, & \hbox{$|z+1|<r$,} 
                 \end{array}
               \right.\hspace{0.5cm} p = i\frac{\Gamma(1-\nu)}{\Gamma(\nu)}e^{-2s^3\vartheta(1)}\hat{\beta}_r^2(1)
\end{equation}
where $\hat{P}^{(\pm)}(z)$ are analytic at $z=\pm 1$. Hence one establishes $\det P(z)\equiv 1$ via Liouville theorem using the normalization at infinity und unimodularity of the jump matrices. From this and representation \eqref{singbehavior}, the ratio of any two solutions $P_1(z)$ and $P_2(z)$ of the given $P$-RHP, i.e.
\begin{equation*}
	P_1(z)\big(P_2(z)\big)^{-1},
\end{equation*}
is an entire function approaching identity at infinity; hence $P_1=P_2$, showing uniqueness.
\end{proof}
\smallskip

Secondly, all jump matrices in the $P$-RHP approach the identity matrix as $s\rightarrow\infty$; however $P(z)$ has singularities at $z=\pm 1$ whose structure is described by the residue relations \eqref{res1} and \eqref{res2}. This type of Riemann-Hilbert problem is a known one in the theory of integrable systems. The way to deal with such RHPs is to use a certain ``dressing procedure'' which reduces the problem to the one without the pole singularities. 


\section{Fifth and final transformation of the RHP - dressing}\label{sec14}
We put
\begin{equation}\label{dressing}
	P(z) = (zI+B)Q(z)\begin{pmatrix}
	\frac{1}{z-1} & 0\\
	0 &\frac{1}{z+1}\\
	\end{pmatrix},
\end{equation}
where $B\in\mathbb{C}^{2\times 2}$ is constant and see immediately that $Q(z)$ solves the following RHP:
\begin{itemize}
	\item $Q(z)$ is analytic for $z\in\mathbb{C}\backslash \big(C_{0,r,l}\cup\bigcup_{i=1}^8\gamma_i\big)$
	\item $Q_+(z) = Q_-(z)G_Q(z)$, where
	\begin{equation*}
		G_Q(z)=\begin{pmatrix}
		\frac{1}{z-1} & 0\\
		0 &\frac{1}{z+1}\\
		\end{pmatrix}\widehat{G}_{r,l}(z)\begin{pmatrix}
		z-1 & 0\\
		0 & z+1\\
		\end{pmatrix},\hspace{0.5cm} z\in C_{r,l}
	\end{equation*}
	and
	\begin{equation*}
		G_Q(z) = \begin{pmatrix}
		\frac{1}{z-1} & 0\\
		0 &\frac{1}{z+1}\\
		\end{pmatrix}U(z)\big(M(z)\big)^{-1}\begin{pmatrix}
		z-1 & 0\\
		0 & z+1\\
		\end{pmatrix},\hspace{0.5cm} z\in C_0
	\end{equation*}
	as well as
	\begin{equation*}
		G_Q(z) = \begin{pmatrix}
		\frac{1}{z-1} & 0\\
		0 &\frac{1}{z+1}\\
		\end{pmatrix}M(z)e^{s^3\vartheta(z)\sigma_3}\widehat{G}_Se^{-s^3\vartheta(z)\sigma_3}\big(M(z)\big)^{-1}\begin{pmatrix}
		z-1 & 0\\
		0 & z+1\\
		\end{pmatrix},\hspace{0.5cm} z\in\gamma_i.
	\end{equation*}
	\item $Q(z)\rightarrow I$, as $z\rightarrow\infty$
\end{itemize}
The $Q$-jump matrix $G_Q(z)$ is uniformly close to the unit matrix: therefore the $Q$-RHP admits direct asymptotic analysis, which will be performed after we determined the unknown matrix $B$. Using the conditions \eqref{res1} and \eqref{res2}
\begin{eqnarray*}
	\textnormal{res}_{z=+1}P^{(1)}(z)&=&(I+B)Q^{(1)}(1) = (I+B)Q^{(2)}(1)\big(-p\big),\\
	\textnormal{res}_{z=-1}P^{(2)}(z)&=&(-I+B)Q^{(2)}(-1)=(-I+B)Q^{(1)}(-1)\big(-p\big),
\end{eqnarray*}
so
\begin{equation}\label{Bmatrix}
	B=\Bigg(Q(1)\begin{pmatrix}
	1\\
	p\\
	\end{pmatrix},Q(-1)\begin{pmatrix}
	p\\
	1\\
	\end{pmatrix}\Bigg)\begin{pmatrix}
	-1 & 0\\
	0 & 1\\
	\end{pmatrix}\bigg(Q(1)\begin{pmatrix}
	1\\
	p\\
	\end{pmatrix},Q(-1)\begin{pmatrix}
	p\\
	1\\
	\end{pmatrix}\Bigg)^{-1}
\end{equation}
where $p$ was introduced in \eqref{singbehavior}. Let us see for which values of $s$ the latter matrix inverse is well-defined.


\section{Solution of the RHP for $Q(z)$ via iteration,\ $\gamma>1$}\label{sec15}

Since 
\begin{equation}\label{esti7}
	\|G_Q-I\|_{L^2\cap L^{\infty}(\Sigma_R)}\leq c_8s^{-1},\ \ s\rightarrow\infty
\end{equation}
we can solve the $Q$-RHP via iteration. This problem is equivalent to the singular integral equation
\begin{equation}\label{integraleq2}
	Q_-(z) = I+\frac{1}{2\pi i}\int\limits_{\Sigma_R}Q_-(w)\big(G_Q(w)-I\big)\frac{dw}{w-z_-}
\end{equation}
which can be solved via iteration in $L^2(\Sigma_R)$, its unique solution satisfies
\begin{equation}\label{esti8}
	\|Q_--I\|_{L^2(\Sigma_R)}\leq \tilde{c}s^{-1},\ \ s\rightarrow\infty.
\end{equation}
Combining the integral representation
\begin{equation*}
	Q(z) = I+\frac{1}{2\pi i}\int\limits_{\Sigma_R}Q_-(w)\big(G_Q(w)-I\big)\frac{dw}{w-z},\hspace{0.5cm} z\notin \Sigma_R
\end{equation*}
with \eqref{esti7} and \eqref{esti8}, we conclude
\begin{equation*}
	Q(\pm 1) = I+O\big(s^{-1}\big),\hspace{0.5cm}s\rightarrow\infty
\end{equation*}
Hence the matrix inverse in the right hand side of \eqref{Bmatrix} exists for all sufficiently large $s$ lying outside of the zero set of the function
\begin{equation*}
	1-p^2
\end{equation*}
which consists of the points $\{s_n\}$ defined by the equation
\begin{equation*}
	\frac{8}{3}s_n^3+2xs_n+\frac{1}{\pi}\ln(\gamma-1)\ln\big(16s^3+4xs\big)-\textnormal{arg}\ \frac{\Gamma(1-\nu)}{\Gamma(\nu)} = \frac{\pi}{2}+ n\pi,\hspace{0.5cm} n=1,2,\ldots
\end{equation*}
and which will eventually form the zeros of the Fredholm determinant as written in Theorem \ref{theo2result}. From now on, when dealing with the situation $\gamma>1$, we shall assume that $s$ stays away from the small neighborhood of the points $s_n$. We have now gathered enough information to prove Theorem \ref{theo1} and \ref{theo2}.


\section{Asymptotics of $\ln\det(I-\gamma K_{\textnormal{csin}})$ - preliminary steps}\label{sec16}

In order to prove the stated theorems we will use Proposition \ref{prop1},\ref{prop2} and \ref{prop3}, which in particular requires us to connect $\check{X}(\pm s)$ and $\check{X}'(\pm s)$ to the solution of either the $R$-RHP or the $Q$-RHP, see \eqref{sidentity} and \eqref{gammaderiv5}. To this end recall \eqref{Ssingularendp}, \eqref{ratioRHP} and Figure \ref{fig2}, which implies for $|z-1|<r_1$
\begin{equation}\label{comparison1}
	R(z)V(z)L^{-1}(z)e^{s^3\vartheta(z)\sigma_3} = \check{X}(zs)\Bigg[I+\frac{\gamma}{2\pi i}\begin{pmatrix}
	-1 & 1\\
	-1 & 1\\
	\end{pmatrix}\ln\bigg(\frac{z-1}{z+1}\bigg)\Bigg].
\end{equation}
On the other hand for $|z+1|<r_1$
\begin{equation}\label{comparison2}
	R(z)W(z)L^{-1}(z)e^{s^3\vartheta(z)\sigma_3} = \check{X}(zs)\Bigg[I+\frac{\gamma}{2\pi i}\begin{pmatrix}
	-1 & 1\\
	-1 & 1\\
	\end{pmatrix}\ln\bigg(\frac{z-1}{z+1}\bigg)\Bigg].
\end{equation}
This shows that the required values of $\check{X}(\pm s)$ and $\check{X}'(\pm s)$ can be determined via comparison in \eqref{comparison1} and \eqref{comparison2} once we know the local expansions of $V(z)$, respectively $W(z)$ at $z=\pm 1$. Our starting point is \eqref{confluorigin}
\begin{eqnarray}\label{PCHlocal}
	P_{CH}(\zeta)&=&\Bigg[\begin{pmatrix}
	d_1(\zeta,\nu)e^{2\pi i\nu} & -d_2(\zeta,1-\nu)e^{\pi i\nu}\frac{\Gamma(1-\nu)}{\Gamma(\nu)}\\
	-d_1(\zeta,1+\nu)e^{\pi i\nu}\frac{\Gamma(1+\nu)}{\Gamma(-\nu)}&d_2(\zeta,-\nu)\\
	\end{pmatrix}\nonumber \\
	&&+\zeta\begin{pmatrix}
	d_3(\zeta,\nu)e^{2\pi i\nu} & -d_4(\zeta,1-\nu)e^{\pi i\nu}\frac{\Gamma(1-\nu)}{\Gamma(\nu)}\\
	-d_3(\zeta,1+\nu)e^{\pi i\nu}\frac{\Gamma(1+\nu)}{\Gamma(-\nu)}& d_4(\zeta,-\nu)\\
	\end{pmatrix}+O\big(\zeta^2\ln\zeta\big)\Bigg]\nonumber\\
	&&\times e^{-i\frac{\pi}{2}(\frac{1}{2}-\nu)\sigma_3},\hspace{0.5cm}\zeta\rightarrow 0
\end{eqnarray}
with (recall \eqref{confluorigin})
\begin{equation*}
	d_1(\zeta,\nu) = c_0(\nu)+c_1(\nu)\Big(\ln\zeta+i\frac{\pi}{2}\Big),\hspace{0.5cm} d_2(\zeta,\nu)=c_0(\nu)+c_1(\nu)\Big(\ln\zeta-i\frac{\pi}{2}\Big)
\end{equation*}
and
\begin{equation*}
	d_3(\zeta,\nu) =-\frac{i}{2}d_1(\zeta,\nu)+i\bigg(c_2(\nu)+c_3(\nu)\Big(\ln\zeta+i\frac{\pi}{2}\Big)\bigg)
\end{equation*}
as well as
\begin{equation*}
	d_4(\zeta,\nu) = \frac{i}{2}d_2(\zeta,\nu)-i\bigg(c_2(\nu)+c_3(\nu)\Big(\ln\zeta-i\frac{\pi}{2}\Big)\bigg).
\end{equation*}
At this point use the changes of variables
\begin{equation*}
	\zeta = \zeta(z) = -2is^3\big(\vartheta(z)-\vartheta(1)\big),\ |z-1|<r_1,\hspace{0.5cm} \lambda=zs
\end{equation*}
and deduce from \eqref{PCHlocal} for $-\frac{\pi}{3}<\textnormal{arg}(\lambda-s)<\frac{\pi}{3}$ using \eqref{PCHRH} 
\begin{eqnarray*}
	P_{CH}^{RH}\big(\zeta(z)\big) &=& \Big[P_1\big(\ln(\lambda-s)\big)+(\lambda-s)P_2\big(\ln(\lambda-s)\big) +O\big((\lambda-s)^2\ln(\lambda-s)\big)\Big]\\
	&&\times\begin{pmatrix}
	e^{-i\frac{3\pi}{2}\nu} & 0\\
	0 & e^{i\frac{\pi}{2}\nu}\\
	\end{pmatrix},\ \ \lambda\rightarrow s
\end{eqnarray*}
where the matrix functions $P_1(\lambda)= \big(P_1^{ij}(\lambda)\big)$ and $P_2(\lambda)=\big(P_2^{ij}(\lambda)\big)$ can be determined from \eqref{PCHlocal}. For the remaining sectors $-\pi<\textnormal{arg}\,(\lambda-s)<-\frac{\pi}{3}$ and $\frac{\pi}{3}<\textnormal{arg}\,(\lambda-s)<\pi$ we can derive similar expansions, they differ from the latter only by multiplication with a triangular matrix. Now we combine the last expansion with \eqref{PCHRHparametrixright} and \eqref{ratioRHP}, so as $\lambda\rightarrow s$ in the sector $-\frac{\pi}{3}<\textnormal{arg}(\lambda-s)<\frac{\pi}{3}$, the left hand side of \eqref{comparison1} reads as
\begin{eqnarray*}
	&&R(z)V(z)L^{-1}(z)e^{s^3\vartheta(z)\sigma_3}\bigg|_{z=\frac{\lambda}{s}} = R(z)\sigma_1e^{-i\frac{\pi}{4}\sigma_3}B_r(z)e^{i\frac{\pi}{2}(\frac{1}{2}-\nu)\sigma_3}e^{-s^3\vartheta(1)\sigma_3}P_{CH}^{RH}\big(\zeta(z)\big)\\
	&&\times e^{i\frac{\pi}{4}\sigma_3}\sigma_1\bigg|_{z=\frac{\lambda}{s}}=R(1)\sigma_1e^{-i\frac{\pi}{4}\sigma_3}B_r(1)e^{i\frac{\pi}{2}(\frac{1}{2}-\nu)\sigma_3}
	e^{-s^3\vartheta(1)\sigma_3}P_1\big(\ln(\lambda-s)\big)\\
	&&\times\begin{pmatrix}
	e^{-i\frac{3\pi}{2}\nu} & 0\\
	0 & e^{i\frac{\pi}{2}\nu}\\
	\end{pmatrix}e^{i\frac{\pi}{4}\sigma_3}\sigma_1+(\lambda-s)\bigg[\Big(R'(1)\sigma_1e^{-i\frac{\pi}{4}\sigma_3}B_r(1)+R(1)\sigma_1e^{-i\frac{\pi}{4}\sigma_3}B_r'(1)\Big)\\
	&&e^{i\frac{\pi}{2}(\frac{1}{2}-\nu)\sigma_3}e^{-s^3\vartheta(1)\sigma_3}\frac{1}{s}P_1\big(\ln(\lambda-s)\big)
	\begin{pmatrix}
	e^{-i\frac{3\pi}{2}\nu} & 0\\
	0 & e^{i\frac{\pi}{2}\nu}\\	\end{pmatrix}e^{i\frac{\pi}{4}\sigma_3}\sigma_1+R(1)\sigma_1e^{-i\frac{\pi}{4}\sigma_3}B_r(1)\\
	&&\times e^{i\frac{\pi}{2}(\frac{1}{2}-\nu)\sigma_3} e^{-s^3\vartheta(1)\sigma_3}P_2\big(\ln(\lambda-s)\big)\begin{pmatrix}
	e^{-i\frac{3\pi}{2}\nu} & 0\\
	0 & e^{i\frac{\pi}{2}\nu}\\
	\end{pmatrix}e^{i\frac{\pi}{4}\sigma_3}\sigma_1\bigg]\\
	&&+O\big((\lambda-s)^2\ln(\lambda-s)\big).
\end{eqnarray*}
In the other sectors $-\pi<\textnormal{arg}\ (\lambda-s)<-\frac{\pi}{3}$ and $\frac{\pi}{3}<\textnormal{arg}\ (\lambda-s)<\pi$ we can derive similar identities, they differ from the latter one only by multiplication with a triangular matrix, see \eqref{PCHRH}. The right hand side in \eqref{comparison1} implies in the sector $-\frac{\pi}{3}<\textnormal{arg}(\lambda-s)<\frac{\pi}{3}$
\begin{eqnarray*}
	&&R(z)V(z)L^{-1}(z)e^{s^3\vartheta(z)\sigma_3}\Big|_{z=\frac{\lambda}{s}} = \check{X}(zs)\Bigg[I+\frac{\gamma}{2\pi i}\begin{pmatrix}
	-1 & 1\\
	-1 & 1\\
	\end{pmatrix}\ln\bigg(\frac{\lambda-s}{\lambda+s}\bigg)\Bigg]\\
	&&=\Big(\check{X}(s)+(\lambda-s)\check{X}'(s)+O\big((\lambda-s)^2\big)\Big)\Bigg[I+\frac{\gamma}{2\pi i}\begin{pmatrix}
	-1 & 1\\
	-1 & 1\\
	\end{pmatrix}\ln\bigg(\frac{\lambda-s}{\lambda+s}\bigg)\Bigg]\\
	&&=\begin{pmatrix}
	\alpha_{11}-\frac{\gamma}{2\pi i}(\check{X}_{11}(s)+\check{X}_{12}(s))\ln(\lambda-s) & \alpha_{12}+\frac{\gamma}{2\pi i}(\check{X}_{11}(s)+\check{X}_{12}(s))\ln(\lambda-s)\\
	\alpha_{21}-\frac{\gamma}{2\pi i}(\check{X}_{21}(s)+\check{X}_{22}(s))\ln(\lambda-s)& \alpha_{22}+\frac{\gamma}{2\pi i}(\check{X}_{21}(s)+\check{X}_{22}(s))\ln(\lambda-s)\\
	\end{pmatrix}\\
	&&+(\lambda-s)\\
	&&\times\begin{pmatrix}
		\beta_{11}-\frac{\gamma}{2\pi i}(\check{X}_{11}'(s)+\check{X}_{12}'(s))\ln(\lambda-s) & \beta_{12}+\frac{\gamma}{2\pi i}(\check{X}_{11}'(s)+\check{X}_{12}'(s))\ln(\lambda-s)\\
		\beta_{21}-\frac{\gamma}{2\pi i}(\check{X}_{21}'(s)+\check{X}_{22}'(s))\ln(\lambda-s) & \beta_{22}+\frac{\gamma}{2\pi i}(\check{X}_{21}'(s)+\check{X}_{22}'(s))\ln(\lambda-s)\\
		\end{pmatrix}\\
		&&+O\big((\lambda-s)^2\ln(\lambda-s)\big),\hspace{0.5cm}\lambda\rightarrow s
\end{eqnarray*}
with constants $\alpha_{ij}$ depending on $\check{X}_{ij}(s)$ and $\beta_{ij}$ involving both $\check{X}_{ij}(s)$ and $\check{X}_{ij}'(s)$. Comparing now left hand side and right hand side in \eqref{comparison1} we deduce
\begin{eqnarray*}
	\check{X}_{11}(s) &=& \Big[R(1)\big(B_r(1)\big)^{-1}\Big]_{11}e^{i\frac{\pi}{2}\nu}e^{s^3\vartheta(1)}\Big(c_0(-\nu)+c_1(-\nu)\bigg(\ln(16s^3+4xs)-i\frac{\pi}{2}\bigg)\Big)\\
	&&-\Big[R(1)\big(B_r(1)\big)^{-1}\Big]_{12}e^{i\frac{\pi}{2}\nu}e^{-s^3\vartheta(1)}\Big(c_0(1-\nu)\frac{\Gamma(1-\nu)}{\Gamma(\nu)}\\
	&&+c_1(\nu)\bigg(\ln(16s^3+4xs)-i\frac{\pi}{2}\bigg)\Big)\\
	\check{X}_{12}(s)&=& -\Big[R(1)\big(B_r(1)\big)^{-1}\Big]_{11}e^{i\frac{\pi}{2}\nu}e^{s^3\vartheta(1)}\Big(c_0(1+\nu)\frac{\Gamma(1+\nu)}{\Gamma(-\nu)}\\
	&&+c_1(-\nu)\bigg(\ln(16s^3+4xs)+i\frac{\pi}{2}\bigg)\Big)\\
	&&+\Big[R(1)\big(B_r(1)\big)^{-1}\Big]_{12}e^{i\frac{\pi}{2}\nu}e^{-s^3\vartheta(1)}\Big(c_0(\nu)+c_1(\nu)\bigg(\ln(16s^3+4xs)+i\frac{\pi}{2}\bigg)\Big)
\end{eqnarray*}
hence in particular
\begin{equation*}
	\check{X}_{11}(s)+\check{X}_{12}(s) = \frac{2\pi i}{\gamma}e^{i\frac{\pi}{2}\nu}\Bigg( \Big[R(1)\big(B_r(1)\big)^{-1}\Big]_{11}\frac{e^{s^3\vartheta(1)}}{\Gamma(-\nu)}-\Big[R(1)\big(B_r(1)\big)^{-1}\Big]_{12}\frac{e^{-s^3\vartheta(1)}}{\Gamma(\nu)}\Bigg).
\end{equation*}
Furthermore
\begin{eqnarray*}		
	\check{X}_{21}(s)&=&\Big[R(1)\big(B_r(1)\big)^{-1}\Big]_{21}e^{i\frac{\pi}{2}\nu}e^{s^3\vartheta(1)}\Big(c_0(-\nu)+c_1(-\nu)\bigg(\ln(16s^3+4xs)-i\frac{\pi}{2}\bigg)\Big)\\
	&&-\Big[R(1)\big(B_r(1)\big)^{-1}\Big]_{22}e^{i\frac{\pi}{2}\nu}e^{-s^3\vartheta(1)}\Big(c_0(1-\nu)\frac{\Gamma(1-\nu)}{\Gamma(\nu)}\\
	&&+c_1(\nu)\bigg(\ln(16s^3+4xs)-i\frac{\pi}{2}\bigg)\Big)
\end{eqnarray*}
and
\begin{eqnarray*}
	\check{X}_{22}(s)&=&-\Big[R(1)\big(B_r(1)\big)^{-1}\Big]_{21}e^{i\frac{\pi}{2}\nu}e^{s^3\vartheta(1)}\Big(c_0(1+\nu)\frac{\Gamma(1+\nu)}{\Gamma(-\nu)}\\
	&&+c_1(-\nu)\bigg(\ln(16s^3+4xs)+i\frac{\pi}{2}\bigg)\Big)\\
	&&+\Big[R(1)\big(B_r(1)\big)^{-1}\Big]_{22}e^{i\frac{\pi}{2}\nu}e^{-s^3\vartheta(1)}\Big(c_0(\nu)+c_1(\nu)\bigg(\ln(16s^3+4xs)+i\frac{\pi}{2}\bigg)\Big)
\end{eqnarray*}
which implies
\begin{equation*}
	\check{X}_{21}(s)+\check{X}_{22}(s) = \frac{2\pi i}{\gamma}e^{i\frac{\pi}{2}\nu}\Bigg(
	\Big[R(1)\big(B_r(1)\big)^{-1}\Big]_{21}\frac{e^{s^3\vartheta(1)}}{\Gamma(-\nu)}-\Big[R(1)\big(B_r(1)\big)^{-1}\Big]_{22}\frac{e^{-s^3\vartheta(1)}}{\Gamma(\nu)}\Bigg).
\end{equation*}
Comparing after that terms of $O((\lambda-s)\ln(\lambda-s))$ we obtain
\begin{eqnarray*}\label{compar3}
	&&\check{X}_{11}'(s)+\check{X}_{12}'(s) = \frac{2\pi i}{\gamma}e^{i\frac{\pi}{2}\nu}\Bigg(\bigg[\Big(R'(1)-R(1)\frac{\nu \sigma_3}{2}\frac{3+\frac{x}{4s^2}}{1+\frac{x}{4s^2}}\Big)\big(B_r(1)\big)^{-1}\bigg]_{11}\frac{e^{s^3\vartheta(1)}}{s\Gamma(-\nu)}\\
	&&-\bigg[\Big(R'(1)-R(1)\frac{\nu \sigma_3}{2}\frac{3+\frac{x}{4s^2}}{1+\frac{x}{4s^2}}\Big)\big(B_r(1)\big)^{-1}\bigg]_{12}\frac{e^{-s^3\vartheta(1)}}{s\Gamma(\nu)}+(8s^2+2x)\nonumber\\
	&&\times\bigg\{\Big[R(1)\big(B_r(1)\big)^{-1}\Big]_{11}\Big(\frac{i}{2}+i\nu\Big)\frac{e^{s^3\vartheta(1)}}{\Gamma(-\nu)}
	-\Big[R(1)\big(B_r(1)\big)^{-1}\Big]_{12}\Big(i\nu-\frac{i}{2}\Big)\frac{e^{-s^3\vartheta(1)}}{\Gamma(\nu)}\bigg\}\Bigg)\nonumber
\end{eqnarray*}
and
\begin{eqnarray*}\label{compar4}
	&&\check{X}_{21}'(s)+\check{X}_{22}'(s) =\frac{2\pi i}{\gamma}e^{i\frac{\pi}{2}\nu}\Bigg(\bigg[\Big(R'(1)-R(1)\frac{\nu \sigma_3}{2}\frac{3+\frac{x}{4s^2}}{1+\frac{x}{4s^2}}\Big)\big(B_r(1)\big)^{-1}\bigg]_{21}\frac{e^{s^3\vartheta(1)}}{s\Gamma(-\nu)}\\
	&&-\bigg[\Big(R'(1)-R(1)\frac{\nu \sigma_3}{2}\frac{3+\frac{x}{4s^2}}{1+\frac{x}{4s^2}}\Big)\big(B_r(1)\big)^{-1}\bigg]_{22}\frac{e^{-s^3\vartheta(1)}}{s\Gamma(\nu)}+(8s^2+2x)\nonumber\\
	&&\times\bigg\{\Big[R(1)\big(B_r(1)\big)^{-1}\Big]_{21}\Big(\frac{i}{2}+i\nu\Big)\frac{e^{s^3\vartheta(1)}}{\Gamma(-\nu)}
	-\Big[R(1)\big(B_r(1)\big)^{-1}\Big]_{22}\Big(i\nu-\frac{i}{2}\Big)\frac{e^{-s^3\vartheta(1)}}{\Gamma(\nu)}\bigg\}\Bigg)\nonumber.
\end{eqnarray*}
Although we derived the previous identities from a comparison in the sector $-\frac{\pi}{3}<\textnormal{arg}(\lambda-s)<\frac{\pi}{3}$, the same identities also hold in the other two sectors.
There one uses the correct triangular matrices in \eqref{PCHRH} on the left hand side. Thus all previously derived identities follow in fact from a comparison in a full neighborhood of $\lambda=+s$. Moving on to a neighborhood of $\lambda=-s$, a completely similar analysis provides us with
\begin{eqnarray*}
	\check{X}_{11}(-s)+\check{X}_{12}(-s) &=& \frac{2\pi i}{\gamma}e^{i\frac{\pi}{2}\nu}\Bigg(\Big[R(-1)\big(B_l(-1)\big)^{-1}\Big]_{12}\frac{e^{-s^3\vartheta(-1)}}{\Gamma(-\nu)}\\
	&&\Big[R(-1)\big(B_l(-1)\big)^{-1}\Big]_{11}\frac{e^{s^3\vartheta(-1)}}{\Gamma(\nu)}\Bigg).
\end{eqnarray*}
and
\begin{eqnarray*}
	\check{X}_{21}(-s)+\check{X}_{22}(-s) &=& \frac{2\pi i}{\gamma}e^{i\frac{\pi}{2}\nu}\Bigg(\Big[R(-1)\big(B_l(-1)\big)^{-1}\Big]_{22}\frac{e^{-s^3\vartheta(-1)}}{\Gamma(-\nu)}\\
	&&-\Big[R(-1)\big(B_l(-1)\big)^{-1}\Big]_{21}\frac{e^{s^3\vartheta(-1)}}{\Gamma(\nu)}\Bigg).
\end{eqnarray*}
Also, comparing terms of $O((\lambda+s)\ln(\lambda+s))$, 
\begin{eqnarray*}
	&&\check{X}_{11}'(-s)+\check{X}_{12}'(-s) =\frac{2\pi i}{\gamma}e^{i\frac{\pi}{2}\nu}\Bigg(\Big[\bigg(R'(-1)-R(-1)\frac{\nu\sigma_3}{2}\frac{3+\frac{x}{4s^2}}{1+\frac{x}{4s^2}}\bigg)\big(B_l(-1)\big)^{-1}\Big]_{12}\\
	&&\times\frac{e^{-s^3\vartheta(-1)}}{s\Gamma(-\nu)}-\Big[\bigg(R'(-1)-R(-1)\frac{\nu\sigma_3}{2}\frac{3+\frac{x}{4s^2}}{1+\frac{x}{4s^2}}\bigg)\big(B_l(-1)\big)^{-1}\Big]_{11}\frac{e^{s^3\vartheta(-1)}}{s\Gamma(\nu)}\\	&&+(8s^2+2x)\bigg\{\Big[R(-1)\big(B_l(-1)\big)^{-1}\Big]_{11}\Big(\frac{i}{2}+i\nu\Big)\frac{e^{s^3\vartheta(-1)}}{\Gamma(\nu)}\\
	&&-\Big[R(-1)\big(B_l(-1)\big)^{-1}\Big]_{12}\Big(\frac{3i}{2}+i\nu\Big)\frac{e^{-s^3\vartheta(-1)}}{\Gamma(-\nu)}\bigg\}\Bigg)
\end{eqnarray*}
and
\begin{eqnarray*}
	&&\check{X}_{21}'(-s)+\check{X}_{22}'(-s)=\frac{2\pi i}{\gamma}e^{i\frac{\pi}{2}\nu}\Bigg(\Big[\bigg(R'(-1)-R(-1)\frac{\nu\sigma_3}{2}\frac{3+\frac{x}{4s^2}}{1+\frac{x}{4s^2}}\bigg)\big(B_l(-1)\big)^{-1}\Big]_{22}\\
	&&\times\frac{e^{-s^3\vartheta(-1)}}{s\Gamma(-\nu)}-\Big[\bigg(R'(-1)-R(-1)\frac{\nu\sigma_3}{2}\frac{3+\frac{x}{4s^2}}{1+\frac{x}{4s^2}}\bigg)\big(B_l(-1)\big)^{-1}\Big]_{21}\frac{e^{s^3\vartheta(-1)}}{s\Gamma(\nu)}\\
	&&+(8s^2+2x)\bigg\{\Big[R(-1)\big(B_l(-1)\big)^{-1}\Big]_{21}\Big(\frac{i}{2}+i\nu\Big)\frac{e^{s^3\vartheta(-1)}}{\Gamma(\nu)}\\
	&&-\Big[R(-1)\big(B_l(-1)\big)^{-1}\Big]_{22}\Big(\frac{3i}{2}+i\nu\Big)\frac{e^{-s^3\vartheta(-1)}}{\Gamma(-\nu)}\bigg\}\Bigg).
\end{eqnarray*}
We finish this section by evaluating the resolvent kernel at $\lambda=\pm s$. Recall \eqref{Fendpointlocal} and deduce
\begin{eqnarray*}
	F_1(\pm s) = \sqrt{\frac{\gamma}{2\pi i}}\big(\check{X}_{11}(\pm s)+\check{X}_{12}(\pm s)\big),&&F_2(\pm s) = \sqrt{\frac{\gamma}{2\pi i}}\big(\check{X}_{21}(\pm s)+\check{X}_{22}(\pm s)\big)\\
	F_1'(\pm s) = \sqrt{\frac{\gamma}{2\pi i}}\big(\check{X}_{11}'(\pm s)+\check{X}_{12}'(\pm s)\big),&& F_2'(\pm s) = \sqrt{\frac{\gamma}{2\pi i}}\big(\check{X}_{21}'(\pm s)+\check{X}_{22}'(\pm s)\big).
\end{eqnarray*}
Since
\begin{equation*}
	R(s,s) = F_1'(s)F_2(s)-F_2'(s)F_1(s)
\end{equation*}
we use the previously derived identities and obtain
\begin{eqnarray*}
	&&R(s,s)=\frac{2\pi i}{\gamma}e^{i\pi\nu}\Bigg(\Big[R'_{11}(1)R_{21}(1)-R'_{21}(1)R_{11}(1)\Big](16s^3+4xs)^{-2\nu}\frac{e^{2s^3\vartheta(1)}}{s\Gamma^2(-\nu)}\\
	&&+\Big[R'_{12}(1)R_{22}(1)-R'_{22}(1)R_{12}(1)\Big](16s^3+4xs)^{2\nu}\frac{e^{-2s^3\vartheta(1)}}{s\Gamma^2(\nu)}\\
	&&-\bigg[R'_{11}(1)R(1)_{22}-R'_{22}(1)R_{11}(1)+R'_{12}(1)R_{21}(1)-R'_{21}(1)R_{12}(1)\\
	&&-\Big(R_{11}(1)R_{22}(1)-R_{21}(1)R_{12}(1)\Big)\nu\frac{3+\frac{x}{4s^2}}{1+\frac{x}{4s^2}}\bigg]\frac{1}{s\Gamma(\nu)\Gamma(-\nu)}\\
	&&-\Big[R_{11}(1)R_{22}(1)-R_{21}(1)R_{12}(1)\Big]i\frac{8s^2+2x}{\Gamma(\nu)\Gamma(-\nu)}\Bigg).
\end{eqnarray*}
To simplify this expression, we have
\begin{prop}\label{prop5} $R(z)$ is unimodular for any $x,\gamma\in\mathbb{R}$, i.e. $\det R(z)\equiv 1$.
\end{prop}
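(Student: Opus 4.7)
The plan is to invoke Liouville's theorem. Since $R(z)$ is analytic on $\mathbb{C}\backslash\Sigma_R$ with $R(z)\to I$ as $z\to\infty$, the scalar function $\det R(z)$ is analytic off $\Sigma_R$ and tends to $1$ at infinity. The task therefore reduces to two assertions: first, that $\det G_R(z)\equiv 1$ on every component of $\Sigma_R$, so that $\det R$ has no jump; and second, that $\det R(z)$ is analytic at the endpoints $z=\pm 1$ and at $z=0$.

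For the jump, I would go through the list of $G_R$ given in section \ref{sec11}. Along the infinite branches $\gamma_i$ the jump is the conjugation $M e^{s^3\vartheta\sigma_3}\widehat{G}_S e^{-s^3\vartheta\sigma_3}M^{-1}$, and each possible value of $\widehat{G}_S$ listed after \eqref{openingjumps} is either unit-triangular or the diagonal matrix $(1-\gamma)^{\sigma_3}$, all of determinant one. On the circles $C_0, C_r, C_l$ the jump has the form $\mathcal{P}(z)(M(z))^{-1}$ with $\mathcal{P}\in\{U,V,W\}$; from \eqref{Xdiagonalpara} one reads off $\det M\equiv 1$, so it suffices to prove $\det U\equiv\det V\equiv\det W\equiv 1$. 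For $U$: the coefficient matrix $A(\zeta,x)$ in \eqref{Absegur1} is traceless, hence by Liouville's formula $\det\Psi_n$ is constant in $\zeta$, and the normalization \eqref{Absegur3} fixes this constant at one. Since the Stokes multipliers \eqref{Stokesch} are unimodular, $\det\widetilde{P}_{II}^{RH}\equiv 1$, and conjugation by $e^{\pi i\nu\sigma_3}$ gives $\det P_{II}^{RH}\equiv 1$; reading \eqref{Absegurparametrix}, $U$ is $P_{II}^{RH}$ flanked by unimodular diagonal factors and a pair of $\sigma_1$'s whose determinants cancel, so $\det U\equiv 1$. The argument for $V, W$ is parallel: the triangular matrices in \eqref{PCHRH} and \eqref{PCHRHleft} have determinant one, a Wronskian (or traceless-ODE plus normalization) argument gives $\det P_{CH}\equiv\det\widetilde{P}_{CH}\equiv 1$, and the explicit formulas \eqref{PCHRHparametrixright}, \eqref{PCHRHparametrixleft} then deliver $\det V\equiv\det W\equiv 1$ by the same bookkeeping.

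For the endpoint regularity, the construction \eqref{ratioRHP} is designed precisely so that the logarithmic singularities of $S(z)$ at $z=\pm 1$ are absorbed by those of $V$ and $W$, giving $R$ analytic at $\pm 1$; this is confirmed by the explicit matching with \eqref{Ssingularendp} carried out in section \ref{sec16}. At $z=0$, the parametrix $U$ is built from the Ablowitz-Segur transcendent, which is smooth in $x$ on the relevant parameter set as pointed out after \eqref{Absegur3}, so $R$ is analytic at the origin as well. Combining the two steps, $\det R(z)-1$ extends to an entire function that vanishes at infinity, and hence vanishes identically by Liouville's theorem.

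The only point requiring real care is the verification that $\det U, \det V, \det W$ all equal one. This is routine, but it rests on the traceless-ODE plus normalized-asymptotics principle, together with careful bookkeeping of the conjugations by $\sigma_1$ and by the diagonal factors $B_0, B_r, B_l$. Once this bookkeeping is done the proposition is immediate, and in particular the term $R_{11}(1)R_{22}(1)-R_{21}(1)R_{12}(1)$ appearing in the expression for $R(s,s)$ derived just above simplifies to $1$.
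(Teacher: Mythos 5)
Your proof is correct and follows essentially the same route as the paper's: unimodularity of all jump matrices (hence of $G_R$) plus analyticity at $z=\pm 1$ makes $\det R$ entire, and the normalization at infinity forces $\det R\equiv 1$ by Liouville. The only difference is that you fill in the verification of $\det U\equiv\det V\equiv\det W\equiv 1$ (traceless coefficient matrix, normalized asymptotics, unimodular Stokes/connection matrices), which the paper simply asserts.
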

\begin{proof} Since $\det P_{II}^{RH}(\zeta)\equiv 1$, we have $\det U(z) = 1$. Similarly $\det P_{CH}^{RH}(\zeta) = 1=\det\tilde{P}_{CH}^{RH}(\zeta)$, leading to $\det V(z) = 1= \det W(z)$. Thus the ratio RHP has a unimodular jump matrix $G_R(z)$, which means that the function $\det R(z)$ is entire. By normalization at infinity we end up with
\begin{equation*}
	\det R(z) = R_{11}(z)R_{22}(z)-R_{12}(z)R_{21}(z) \equiv 1,\ \ z\in\mathbb{C}.
\end{equation*}
\end{proof}
In light of the last proposition, we obtain
\begin{eqnarray}\label{RSSplus}
	&&R(s,s) = -ie^{i\pi\nu}\frac{2\pi i}{\gamma}\frac{8s^2+2x}{\Gamma(\nu)\Gamma(-\nu)}+\frac{2\pi i}{\gamma}\frac{\nu e^{i\pi\nu}}{s\Gamma(\nu)\Gamma(-\nu)}\frac{3+\frac{x}{4s^2}}{1+\frac{x}{4s^2}}-\frac{2\pi i}{\gamma}\frac{e^{i\pi\nu}}{s\Gamma(\nu)\Gamma(-\nu)}\nonumber\\
	&&\Big[R'_{11}(1)R_{22}(1)-R'_{22}(1)R_{11}(1)+R'_{12}(1)R_{21}(1)-R'_{21}(1)R_{12}(1)\Big]\nonumber\\
	&&+\frac{2\pi i}{\gamma}e^{i\pi\nu}\bigg(\Big[R'_{11}(1)R_{21}(1)-R'_{21}(1)R_{11}(1)\Big](16s^3+4xs)^{-2\nu}\frac{e^{2s^3\vartheta(1)}}{s\Gamma^2(-\nu)}\nonumber\\
	&&+\Big[R'_{12}(1)R_{22}(1)-R'_{22}(1)R_{12}(1)\Big](16s^3+4xs)^{2\nu}\frac{e^{-2s^3\vartheta(1)}}{s\Gamma^2(\nu)}\bigg).
\end{eqnarray}
Similarly, using Proposition \ref{prop5} once more,
\begin{eqnarray}\label{RSSminus}
	&&R(-s,-s) =-ie^{i\pi\nu}\frac{2\pi i}{\gamma}\frac{8s^2+2x}{\Gamma(\nu)\Gamma(-\nu)}+\frac{2\pi i}{\gamma}\frac{\nu e^{i\pi\nu}}{s\Gamma(\nu)\Gamma(-\nu)}\frac{3+\frac{x}{4s^2}}{1+\frac{x}{4s^2}}-\frac{2\pi i}{\gamma}\frac{e^{i\pi\nu}}{s\Gamma(\nu)\Gamma(-\nu)}\nonumber\\
	&&\Big[R'_{11}(-1)R_{22}(-1)-R'_{22}(-1)R_{11}(-1)+R'_{12}(-1)R_{21}(-1)-R'_{21}(-1)R_{12}(-1)\Big]\nonumber\\
	&&+\frac{2\pi i}{\gamma}e^{i\pi\nu}\bigg(\Big[R'_{11}(-1)R_{21}(-1)-R'_{21}(-1)R_{11}(-1)\Big](16s^3+4xs)^{2\nu}\frac{e^{-2s^3\vartheta(1)}}{s\Gamma^2(\nu)}\nonumber\\	&&+\Big[R'_{12}(-1)R_{22}(-1)-R'_{22}(-1)R_{12}(-1)\Big](16s^3+4xs)^{-2\nu}\frac{e^{2s^3\vartheta(1)}}{s\Gamma^2(-\nu)}\bigg).
\end{eqnarray}


\section{Asymptotics of $\ln\det(I-\gamma K_{\textnormal{csin}})$ - proof of theorem \ref{theo1}}\label{sec17}

The stated asymptotics \eqref{theo1result} is a direct consequence of Proposition \ref{prop3}. Since the relevant estimations \eqref{esti4} and \eqref{esti5} were uniform on any compact subset of the set \eqref{excset1}, the asymptotic series for 
\begin{equation*}
	\frac{\partial}{\partial\gamma}\ln\det\left(I-\gamma K_{\textnormal{csin}}\right)
\end{equation*}
can be integrated with respect to $\gamma$, which leads to \eqref{theo1result} including the constant term $\chi$. We trace back the transformations
\begin{eqnarray*}
	X(\lambda)\mapsto T(z)\mapsto S(z)\mapsto R(z)
\end{eqnarray*}
and deduce first asymptotic expansions for the coefficients $X_1,X_2$ and $X_3$ in \eqref{gammaderiv5}. First
\begin{eqnarray}\label{X1exact}
	X_1 &=&\lim_{\lambda\rightarrow\infty}\big(\lambda\Big(X(\lambda)e^{-i(\frac{4}{3}\lambda^3+x\lambda)\sigma_3}-I\Big)\big) = -2\nu\sigma_3s +\frac{is}{2\pi}\int\limits_{\Sigma_R}R_-(w)\big(G_R(w)-I\big)dw\nonumber\\
	&=&-2\nu\sigma_3s +\frac{is}{2\pi}\int\limits_{\Sigma_R}\big(R_-(w)-I\big)\big(G_R(w)-I\big)dw +\frac{is}{2\pi}\int\limits_{\Sigma_R}\big(G_R(w)-I\big)dw
\end{eqnarray}
and we have for $z\in\Sigma_R$ from \eqref{integraleq1},\eqref{esti5} and \eqref{Absegurmatch}
\begin{eqnarray*}
	R_-(z)-I &=& \frac{1}{2\pi i}\int\limits_{C_0}\big(G_R(w)-I\big)\frac{dw}{w-z_-} +O\big(s^{-2}\big)\\
	&=&\frac{1}{2\pi i}\int\limits_{C_0}\big(B_0(w)\big)^{-1}\begin{pmatrix}
	v & ue^{-2\pi i\nu}\\
	-ue^{2\pi i\nu} & -v\\
	\end{pmatrix}B_0(w)\frac{idw}{2sw(w-z_-)} +O\big(s^{-2}\big)\\
	&=&\frac{i}{2sz}\bigg[\begin{pmatrix}
	v & u\\
	-u & -v\\
	\end{pmatrix} -\big(B_0(z)\big)^{-1}\begin{pmatrix}
	v & ue^{-2\pi i\nu}\\
	-ue^{2\pi i\nu} & -v\\
	\end{pmatrix}B_0(z)\bigg]+O\big(s^{-2}\big).
\end{eqnarray*}
The latter expansion will be improved via iteration
\begin{eqnarray*}
	R_-(z)-I&=&\frac{1}{2\pi i}\int\limits_{C_0}\big(R_-(w)-I\big)\big(G_R(w)-I\big)\frac{dw}{w-z_-}\\
	&&+\frac{1}{2\pi i}\int\limits_{C_0}\big(G_R(w)-I\big)\frac{dw}{w-z_-}+O\big(s^{-3}\big),\ \ s\rightarrow\infty.
\end{eqnarray*}
If $[\cdot,\cdot]$ denotes the usual matrix commutator on $\mathbb{C}^{2\times 2}$, \eqref{Absegurmultiplier} implies for any constant matrix $\Lambda\in\mathbb{C}^{2\times 2}$
\begin{eqnarray*}
	\big(B_0(w)\big)^{-1}\Lambda B_0(w) &=& \big(B_0(0)\big)^{-1}\Lambda B_0(0) +2\nu w\Big[\big(B_0(0)\big)^{-1}
\Lambda B_0(0),\sigma_3\Big]\\
&&+4\nu^2 w^2\Big[\big(B_0(0)\big)^{-1}\Lambda B_0(0),\sigma_3\Big]\sigma_3+O\big(w^3\big),\ \ w\rightarrow 0.
\end{eqnarray*}
This in turn leads to
\begin{eqnarray*}
	&&R_-(z)-I =\frac{i}{2sz}\bigg[\begin{pmatrix}
	v & u\\
	-u & -v\\
	\end{pmatrix} -\big(B_0(z)\big)^{-1}\begin{pmatrix}
	v & ue^{-2\pi i\nu}\\
	-ue^{2\pi i\nu} & -v\\
	\end{pmatrix}B_0(z)\bigg]\\
	&&-\frac{1}{4s^2z^2}\begin{pmatrix}
	v & u\\
	-u & -v\\
	\end{pmatrix}\bigg[\begin{pmatrix}
	v & u\\
	-u & -v\\
	\end{pmatrix} -\big(B_0(z)\big)^{-1}\begin{pmatrix}
	v & ue^{-2\pi i\nu}\\
	-ue^{2\pi i\nu} & -v\\
	\end{pmatrix}B_0(z)\bigg]\\
	&&+\frac{1}{8s^2z^2}\bigg[\begin{pmatrix}
	u^2-v^2 & 2(u_x+uv)\\
	2(u_x+uv) & u^2-v^2\\
	\end{pmatrix}-\big(B_0(z)\big)^{-1}\\
	&&\times\begin{pmatrix}
	u^2-v^2 & 2(u_x+uv)\\
	2(u_x+uv) & u^2-v^2\\
	\end{pmatrix}B_0(z)\bigg]+\frac{\nu}{s^2z}\begin{pmatrix}
	u^2 & -u_x\\
	u_x & -u^2\\
	\end{pmatrix}+O\big(s^{-3}\big),\ \ s\rightarrow\infty
\end{eqnarray*}
for any $z\in\Sigma_R$. Back to \eqref{X1exact}, one starts with
\begin{eqnarray*}
	&&I_1 = \int\limits_{\Sigma_R}\big(R_-(w)-I\big)\big(G_R(w)-I\big)dw = \frac{2\pi i\nu}{s^2}\begin{pmatrix}
	-u^2 & -uv\\
	uv & u^2 \\
	\end{pmatrix}\\
	&&+\frac{i(-2\pi i)\nu^2}{s^3}\begin{pmatrix}
	2uu_x & -2u^3+v(u_x+uv)+\frac{u}{2}(u^2-v^2)\\
	2u^3-v(u_x+uv)-\frac{u}{2}(u^2-v^2) & -2uu_x\\
	\end{pmatrix}\\
	&&+O\big(s^{-4}\big),\ \ s\rightarrow\infty
\end{eqnarray*}
and moves on to 
\begin{eqnarray*}
	I_2&=&\int\limits_{\Sigma_R}\big(G_R(w)-I\big)dw=\frac{(-2\pi i)i}{2s}\begin{pmatrix}
	v & u\\
	-u & -v\\
	\end{pmatrix}+\frac{(-2\pi i)\nu}{s^2}\begin{pmatrix}
	0 & -(u_x+uv)\\
	u_x+uv & 0\\
	\end{pmatrix}\\
	&&+\frac{i(-2\pi i)\nu^2}{s^3}\begin{pmatrix}
	0 & -(\frac{u}{2}(u^2+v^2)+vu_x+xu)\\
	\frac{u}{2}(u^2+v^2)+vu_x+xu & 0\\
	\end{pmatrix}\\
	&&+\frac{(-2\pi i)i}{4s^3+xs}\begin{pmatrix}
	-\nu^2 & \sqrt{-\nu^2}\cos\sigma\\
	-\sqrt{-\nu^2}\cos\sigma & \nu^2\\
	\end{pmatrix} + O\big(s^{-4}\big)
\end{eqnarray*}
with
\begin{equation}\label{sigmadef}
	\sigma = \sigma(s,x,\gamma) = \frac{8}{3}s^3+2xs +\frac{\ln|1-\gamma|}{\pi}\ln\big(16s^3+4xs\big)-\textnormal{arg}\frac{\Gamma(1-\nu)}{\Gamma(\nu)}.
\end{equation}
Together
\begin{eqnarray*}
	&&I_1+I_2 = \frac{(-2\pi i)i}{2s}\begin{pmatrix}
	v & u\\
	-u & -v\\
	\end{pmatrix} +\frac{(-2\pi i)\nu}{s^2}\begin{pmatrix}
	u^2 & -u_x\\
	u_x & -u^2\\
	\end{pmatrix}\\
	&&+\frac{(-2\pi i)i\nu^2}{s^3}\begin{pmatrix}
	2uu_x & -u_{xx}\\
	u_{xx} & -2uu_x\\
	\end{pmatrix}+\frac{(-2\pi i)i}{4s^3+xs}\begin{pmatrix}
	-\nu^2 & \sqrt{-\nu^2}\cos\sigma\\
	-\sqrt{-\nu^2}\cos\sigma & \nu^2\\
	\end{pmatrix}+O\big(s^{-4}\big),
\end{eqnarray*}
where we used that $u=u(x,\gamma)$ solves the second Painlev\'e equation. The previous line leads us to the following expansion for $X_1$
\begin{eqnarray}\label{X1second}
	&&X_1 = -2\nu\sigma_3 s+\frac{i}{2}\begin{pmatrix}
	v & u\\
	-u & -v\\
	\end{pmatrix}+\frac{\nu}{s}\begin{pmatrix}
	u^2 & -u_x\\
	u_x & -u^2\\
	\end{pmatrix} +\frac{i\nu^2}{s^2}\begin{pmatrix}
	2uu_x & -u_{xx}\\
	u_{xx}& -2uu_x\\
	\end{pmatrix}\nonumber\\
	&&+\frac{i}{4s^2+x}\begin{pmatrix}
	-\nu^2 & \sqrt{-\nu^2}\cos\sigma\\
	-\sqrt{-\nu^2}\cos\sigma & \nu^2\\
	\end{pmatrix} + O\big(s^{-3}\big),\ \ s\rightarrow\infty.
\end{eqnarray}
Secondly
\begin{equation}\label{X2first}
	X_2 = 2\nu^2s^2 I-\frac{i\nu s^2}{\pi}\big(I_1+I_2\big)\sigma_3 +\frac{is^2}{2\pi}\int\limits_{\Sigma_R}R_-(w)\big(G_R(w)-I\big)wdw.
\end{equation}
We need to compute
\begin{eqnarray*}
	I_3 &=& \int\limits_{\Sigma_R}\big(R_-(w)-I\big)\big(G_R(w)-I\big)wdw\\
	&=&\frac{(-2\pi i)i\nu}{2s^3}\begin{pmatrix}
	u^2v+uu_x& \frac{u}{2}(u^2-v^2)\\
	\frac{u}{2}(u^2-v^2) & u^2v+uu_x\\
	\end{pmatrix} +O\big(s^{-4}\big)
\end{eqnarray*}
as well as
\begin{eqnarray*}
	I_4 &=& \int\limits_{\Sigma_R}\big(G_{R}(w)-I\big)wdw=\frac{(-2\pi i)}{8s^2}\begin{pmatrix}
	u^2-v^2 & 2(u_x+uv)\\
	2(u_x+uv) & u^2-v^2\\
	\end{pmatrix}\\
	&&+\frac{(-2\pi i)i\nu}{4s^3}\begin{pmatrix}
	0 & u(u^2+v^2)+2(vu_x+xu)\\
	u(u^2+v^2)+2(vu_x+xu) & 0\\
	\end{pmatrix}\\
	&&+\frac{2\pi i}{4s^3+xs}\begin{pmatrix}
	0 & \sqrt{-\nu^2}\sin\sigma\\
	-\sqrt{-\nu^2}\sin\sigma & 0\\
	\end{pmatrix} + O\big(s^{-4}\big),
\end{eqnarray*}
to obtain
\begin{eqnarray*}
	I_3+I_4 &=&\frac{(-2\pi i)}{8s^2}\begin{pmatrix}
	u^2-v^2 & 2(u_x+uv)\\
	2(u_x+uv) & u^2-v^2\\
	\end{pmatrix}\\
	&&+ \frac{(-2\pi i)i\nu}{2s^3}\begin{pmatrix}
	u^2v+uu_x & u^3+vu_x+xu\\
	u^3+vu_x+xu & u^2v+uu_x\\
	\end{pmatrix}\\
	&&+\frac{2\pi i}{4s^3+xs}\begin{pmatrix}
	0 & \sqrt{-\nu^2}\sin\sigma\\
	\sqrt{-\nu^2}\sin\sigma & 0\\
	\end{pmatrix} + O\big(s^{-4}\big),\ \ s\rightarrow\infty.
\end{eqnarray*}
Adding together in \eqref{X2first}
\begin{eqnarray}\label{X2second}
	X_2 &=& 2\nu^2s^2I-i\nu s\begin{pmatrix}
	v & -u\\
	-u & v\\
	\end{pmatrix}-2\nu^2\begin{pmatrix}
	u^2 & u_x\\
	u_x & u^2\\
	\end{pmatrix} +\frac{1}{8}\begin{pmatrix}
	u^2-v^2 & 2(u_x+uv)\\
	2(u_x+uv) & u^2-v^2\\
	\end{pmatrix}\nonumber\\
	&&-\frac{2i\nu^3}{s}\begin{pmatrix}
	2uu_x & xu+2u^3\\
	xu+2u^3 & 2uu_x\\
	\end{pmatrix} +\frac{i\nu}{2s}\begin{pmatrix}
	u^2v+uu_x & u^3+vu_x+xu\\
	u^3+vu_x+xu & u^2v+uu_x\\
	\end{pmatrix}\nonumber\\
	&&+\frac{2i\nu s}{4s^2+x}\begin{pmatrix}
	\nu^2 & \sqrt{-\nu^2}\cos\sigma\\
	\sqrt{-\nu^2}\cos\sigma & \nu^2\\
	\end{pmatrix}-\frac{s}{4s^2+x}\begin{pmatrix}
	0 & \sqrt{-\nu^2}\sin\sigma\\
	\sqrt{-\nu^2}\sin\sigma & 0\\
	\end{pmatrix}\nonumber\\
	&&+O\big(s^{-2}\big),\ \ s\rightarrow\infty.
\end{eqnarray}
Finally the computation of $X_3$
\begin{eqnarray}\label{X3first}
	X_3 &=& -\frac{2\nu}{3}s^3(1+2\nu^2)\sigma_3 + \frac{i\nu^2s^3}{\pi}\big(I_1+I_2\big) - \frac{i\nu s^3}{\pi}\big(I_3+I_4\big)\sigma_3\nonumber\\ &&+\frac{is^3}{2\pi}\int\limits_{\Sigma_R}R_-(w)\big(G_R(w)-I\big)w^2dw.
\end{eqnarray}
Since
\begin{equation*}
	I_5 = \int\limits_{\Sigma_R}\big(R_-(w)-I\big)\big(G_R(w)-I\big)w^2dw=O\big(s^{-4}\big)
\end{equation*}
and
\begin{eqnarray*}
	I_6 &=& \int\limits_{\Sigma_R}R_-(w)\big(G_R(w)-I\big)w^2dw=\frac{(-2\pi i)i}{4s^3+xs}\begin{pmatrix}
	-\nu^2 & \sqrt{-\nu^2}\cos\sigma\\
	-\sqrt{-\nu^2}\cos\sigma & \nu^2\\
	\end{pmatrix}\\
	&&+\frac{(-2\pi i)i}{48s^3}\begin{pmatrix}
	-(v^3-3vu^2+2(xv-uu_x)) & -3(u(u^2+v^2)+2(vu_x+xu))\\
	3(u(u^2+v^2)+2(vu_x+xu))& v^3-3vu^2+2(xv-uu_x)\\
	\end{pmatrix}\\
	&&+O\big(s^{-4}\big)
\end{eqnarray*}
we add together and obtain in \eqref{X3first}, using the previous results
\begin{eqnarray*}
	&&X_3 =-\frac{2\nu}{3}s^3(1+2\nu^2)\sigma_3+i\nu^2 s^2\begin{pmatrix}
	v & u\\
	-u & -v\\
	\end{pmatrix}+2\nu^3 s\begin{pmatrix}
	u^2 & -u_x\\
	u_x & -u^2\\
	\end{pmatrix}\\
	&&-\frac{\nu s}{4}\begin{pmatrix}
	u^2-v^2 & -2(u_x+uv)\\
	2(u_x+uv) & -(u^2-v^2)\\
	\end{pmatrix}+2i\nu^4\begin{pmatrix}
	2uu_x & -u_{xx}\\
	u_{xx} & -2uu_x\\
	\end{pmatrix}\\
	&&+\frac{2i\nu^2s^2}{4s^2+x}\begin{pmatrix}
	-\nu^2 & \sqrt{-\nu^2}\cos\sigma\\
	-\sqrt{-\nu^2}\cos\sigma & \nu^2\\
	\end{pmatrix}+\frac{is^2}{4s^2+x}\begin{pmatrix}
	-\nu^2 & \sqrt{-\nu^2}\cos\sigma\\
	-\sqrt{-\nu^2}\cos\sigma & \nu^2\\
	\end{pmatrix}\\
	&&+\frac{2\nu s^2}{4s^2+x}\begin{pmatrix}
	0 & -\sqrt{-\nu^2}\sin\sigma\\
	\sqrt{-\nu^2}\sin\sigma & 0\\
	\end{pmatrix}-i\nu^2\begin{pmatrix}
	u^2v+uu_x & -(u^3+vu_x+xu)\\
	u^3+vu_x+xu & -(u^2v+uu_x)\\
	\end{pmatrix}\\
	&&+\frac{i}{48}\begin{pmatrix}
	-(v^3-3vu^2+2(xv-uu_x)) & -3(u(u^2+v^2)+2(vu_x+xu))\\
	3(u(u^2+v^2)+2(vu_x+xu))& v^3-3vu^2+2(xv-uu_x)\\
	\end{pmatrix}+O\big(s^{-1}\big).
\end{eqnarray*}
With the given information at hand, \eqref{bcddef} and \eqref{efdef} lead to 
\begin{equation}\label{bpara}
	b = 2X_1^{12} = iu-\frac{2\nu}{s}u_x-\frac{2i\nu^2u_{xx}}{s^2}+\frac{2i\sqrt{-\nu^2}}{4s^2+x}\cos\sigma+O\big(s^{-3}\big)
\end{equation}
as well as
\begin{equation}\label{cpara}
	c = 2X_1^{21} = -b+O\big(s^{-3}\big),\ \ s\rightarrow\infty
\end{equation}
and
\begin{eqnarray}\label{dpara}
	d &=& ix+8iX_1^{12}X_1^{21}\\
	&=& ix +2iu^2-\frac{8\nu}{s}uu_x-\frac{8i\nu^2}{s^2}\big((u_x)^2+xu^2+2u^4\big)+\frac{8iu\sqrt{-\nu^2}}{4s^2+x}\cos\sigma+O\big(s^{-3}\big).\nonumber
\end{eqnarray}
Furthermore
\begin{equation*}
	e = 8i\Big(X_1^{12}X_1^{22}-X_2^{12}\Big)=-2iu_x + \frac{4\nu}{s}(xu+2u^3)+\frac{8is\sqrt{-\nu^2}}{4s^2+x}\sin\sigma+O\big(s^{-2}\big)
\end{equation*}
and
\begin{equation}\label{fpara}
	f = -e +O\big(s^{-2}\big)
\end{equation}
where we made use of the following identities, see \eqref{X1second} and \eqref{X2second}
\begin{equation}\label{symmiden}
	X_1^{21} = -X_1^{12}+O\big(s^{-3}\big),\ \ X_1^{11} = -X_1^{22}+O\big(s^{-3}\big),\ \ X_2^{21} = X_2^{12}+O\big(s^{-2}\big),\ \ s\rightarrow\infty.
\end{equation}
We have now derived enough information to evaluate the first terms listed in Proposition \ref{prop3}. Put 
\begin{eqnarray*}
	\mathcal{P}_1(s,x,\gamma) &=& -8i\Big((X_3)_{\gamma}+(X_1)_{\gamma}(X_1^2-X_2)-(X_2)_{\gamma}X_1\Big)^{11}-2d\Big((X_1)_{\gamma}\Big)^{11}\\
	&&+4ib\Big((X_2)_{\gamma}-(X_1)_{\gamma}X_1\Big)^{21}-4ic\Big((X_2)_{\gamma}-(X_1)_{\gamma}X_1\Big)^{12}\\
	&&-e\Big((X_1)_{\gamma}\Big)^{21}-f\Big((X_1)_{\gamma}\Big)^{12}
\end{eqnarray*}
and notice that
\begin{equation*}
	\bigg(\frac{\partial}{\partial\gamma}X_k\bigg)^{ij} = \frac{\partial}{\partial\gamma}\Big(X_k^{ij}\Big),\ \ i,j,k=1,2.
\end{equation*}
From \eqref{fpara} and \eqref{symmiden} we obtain therefore
\begin{equation*}
	-e\Big((X_1)_{\gamma}\Big)^{21} = -f\Big((X_1)_{\gamma}\Big)^{12}+O\big(s^{-2}\big)
\end{equation*}
and via \eqref{bpara} and \eqref{symmiden}
\begin{equation*}
	4ib\Big((X_2)_{\gamma}\Big)^{21} = -4ic\Big((X_2)_{\gamma}\Big)^{12}+O\bigg(\frac{\ln s}{s^2}\bigg).
\end{equation*}
Since also
\begin{eqnarray*}
	-4ib\Big(\big(X_1\big)_{\gamma}X_1\Big)^{21} &=&  -4ib\Big(\big(X_1^{21}\big)_{\gamma}X_1^{11}+\big(X_1^{22}\big)_{\gamma}X_1^{21}\Big)\nonumber\\
	&=&4ic\Big(\big(X_1^{12}\big)_{\gamma}X_1^{22}+\big(X_1^{11}\big)_{\gamma}X_1^{12}\Big)+O\bigg(\frac{\ln s}{s^2}\bigg)\nonumber\\
	&=& 4ic\Big(\big(X_1\big)_{\gamma}X_1\Big)^{12}+O\bigg(\frac{\ln s}{s^2}\bigg)
\end{eqnarray*}
as $s\rightarrow\infty$ uniformly on any compact subset of the set \eqref{excset1}, we can simplify the expression for $\mathcal{P}_1(s,x,\gamma)$ asymptotically
\begin{eqnarray}\label{P1def}
	\mathcal{P}_1(s,x,\gamma)&=&-8i\Big(\big(X_3\big)_{\gamma}+\big(X_1\big)_{\gamma}(X_1^2-X_2)-\big(X_2\big)_{\gamma}X_1\Big)^{11}-2d\Big(\big(X_1\big)_{\gamma}\Big)^{11}\nonumber\\
	&&+8ib\Big(\big(X_2\big)_{\gamma}-\big(X_1\big)_{\gamma}X_1\Big)^{21}-2e\Big(\big(X_1\big)_{\gamma}\Big)^{21}+O\bigg(\frac{\ln s}{s^2}\bigg).
\end{eqnarray}
Next from \eqref{X1second}
\begin{eqnarray*}
	X_1^2 &=& \bigg[4\nu^2 s^2 -2i\nu vs-4\nu^2u^2+\frac{u^2-v^2}{4}-\frac{8i\nu^3}{s}uu_x+\frac{4i\nu^3 s}{4s^2+x}+\frac{i\nu}{s}(vu^2+uu_x)\bigg]I\\
	&&+O\big(s^{-2}\big)
\end{eqnarray*}
with $I$ denoting the $2\times 2$ identity matrix. Thus 
\begin{eqnarray*}
	&&\Big(\big(X_1\big)_{\gamma}X_1^2\Big)^{11}=\Big[-2\nu_{\gamma}s+\frac{i}{2}v_{\gamma}+\frac{(\nu u^2)_{\gamma}}{s}+\frac{i}{s^2}\big(2\nu^2uu_x\big)_{\gamma}-\frac{i}{4s^2+x}\big(\nu^2\big)_{\gamma}\\
	&&+O\bigg(\frac{\ln s}{s^3}\bigg)\Big]\Big[4\nu^2 s^2 -2i\nu vs-4\nu^2u^2+\frac{u^2-v^2}{4}-\frac{8i\nu^3}{s}uu_x+\frac{4i\nu^3 s}{4s^2+x}\\
	&&+\frac{i\nu}{s}(vu^2+uu_x)+O\big(s^{-2}\big)\Big]\\
	&=&-8\nu^2\nu_{\gamma}s^3+4i\nu\nu_{\gamma}vs^2+2i\nu^2s^2v_{\gamma}+8\nu^2\nu_{\gamma}u^2s-\nu_{\gamma}s\frac{u^2-v^2}{2}+\nu svv_{\gamma}\\
	&&+4\nu^2s\big(\nu u^2\big)_{\gamma}+16i\nu_{\gamma}\nu^3uu_x-\frac{8i\nu^3\nu_{\gamma}s^2}{4s^2+x}-2i\nu\nu_{\gamma}\big(vu^2+uu_x\big)-2i\nu^2u^2v_{\gamma}\\
	&&+\frac{i}{2}v_{\gamma}\frac{u^2-v^2}{4}-2i\nu v\big(\nu u^2\big)_{\gamma}+4i\nu^2\big(2\nu^2 uu_x\big)_{\gamma}-\frac{4i\nu^2 s^2}{4s^2+x}\big(\nu^2\big)_{\gamma}+O\bigg(\frac{\ln s}{s}\bigg).
\end{eqnarray*}
Moving on, we use
\begin{equation*}
	\Big(\big(X_1\big)_{\gamma}X_2 +\big(X_2\big)_{\gamma}X_1\Big)^{11} = \big(X_1^{11}X_2^{11}\big)_{\gamma} +\big(X_1^{12}\big)_{\gamma}X_2^{21}+\big(X_2^{12}\big)_{\gamma}X_1^{21}
\end{equation*}
and obtain
\begin{eqnarray*}
	&&\Big(\big(X_1\big)_{\gamma}X_2 +\big(X_2\big)_{\gamma}X_1\Big)^{11}=\bigg[-4\nu^3s^3+3i\nu^2s^2v +6\nu^3u^2s-3i\nu^2u^2v+12i\nu^4uu_x\\
	&&-i\nu^2uu_x-\frac{6i\nu^4s^2}{4s^2+x}-\nu s\frac{u^2-3v^2}{4}+\frac{iv}{2}\frac{u^2-v^2}{8}+O\big(s^{-1}\big)\bigg]_{\gamma}-\frac{\nu s}{2}uu_{\gamma}+\frac{s}{2}u\big(\nu u\big)_{\gamma}\\
	&&-i\nu^2u_xu_{\gamma}+\frac{i}{2}u_{\gamma}\frac{u_x+uv}{4}-i\nu u\big(\nu u_x\big)_{\gamma}+i\nu u_x\big(\nu u\big)_{\gamma}+iu\big(\nu^2 u_x\big)_{\gamma}-\frac{i}{2}u\frac{(u_x+uv)_{\gamma}}{4}\\
	&& + O\big(s^{-1}\big).
\end{eqnarray*} 
Furthermore
\begin{eqnarray*}
	\big(X_1\big)_{\gamma}X_1 &=& 4\nu\nu_{\gamma}s^2I-i\nu_{\gamma}s\begin{pmatrix}
	v & u\\
	u & v\\
	\end{pmatrix}-i\nu s\begin{pmatrix}
	v_{\gamma} & -u_{\gamma}\\
	-u_{\gamma} & v_{\gamma}\\
	\end{pmatrix}-2\nu\nu_{\gamma}\begin{pmatrix}
	u^2& -u_x\\
	-u_x & u^2\\
	\end{pmatrix}\\
	&&-\frac{1}{4}\begin{pmatrix}
	vv_{\gamma} - uu_{\gamma} & uv_{\gamma}-vu_{\gamma}\\
	uv_{\gamma} -vu_{\gamma} & vv_{\gamma}-uu_{\gamma}\\
	\end{pmatrix}-2\nu\begin{pmatrix}
	(\nu u^2)_{\gamma} & (\nu u_x)_{\gamma}\\
	(\nu u_x)_{\gamma} & (\nu u^2)_{\gamma}\\
	\end{pmatrix}+O\big(s^{-1}\big)
\end{eqnarray*}
and thus
\begin{eqnarray*}
&&\Big(\big(X_2\big)_{\gamma}-\big(X_1\big)_{\gamma}X_1\Big)^{21} = is(\nu u)_{\gamma}+i\nu_{\gamma}su-i\nu su_{\gamma}-2(\nu^2u_x)_{\gamma}+\frac{(u_x+uv)_{\gamma}}{4}\\
&&-2\nu\nu_{\gamma}u_x+\frac{1}{4}(uv_{\gamma}-vu_{\gamma})+2\nu(\nu u_x)_{\gamma}+O\big(s^{-1}\big),
\end{eqnarray*}
which implies
\begin{eqnarray*}
	&&b\Big(\big(X_2\big)_{\gamma}-\big(X_1\big)_{\gamma}X_1\Big)^{21} =-su(\nu u)_{\gamma}-su^2\nu_{\gamma}+s\nu uu_{\gamma}-2iu(\nu^2 u_x)_{\gamma}+iu\frac{(u_x+uv)_{\gamma}}{4}\\
	&&-2i\nu\nu_{\gamma}uu_x+iu\frac{uv_{\gamma}-vu_{\gamma}}{4}+2i\nu u(\nu u_x)_{\gamma}-2i\nu u_x(\nu u)_{\gamma}-2i\nu\nu_{\gamma}uu_x\\
	&&+2i\nu^2u_xu_{\gamma}+O\big(s^{-1}\big).
\end{eqnarray*}
Finally
\begin{equation*}
	d\Big(\big(X_1\big)_{\gamma}\Big)^{11} = -2i\nu_{\gamma}sx-4i\nu_{\gamma}u^2s-\frac{xv_{\gamma}}{2}-u^2v_{\gamma}+16\nu\nu_{\gamma}uu_x+O\big(s^{-1}\big)
\end{equation*}
and
\begin{equation*}
	e\Big(\big(X_1\big)_{\gamma}\Big)^{21} = -u_xu_{\gamma}+O\big(s^{-1}\big).
\end{equation*}
Let us write
\begin{equation*}
	\mathcal{P}_1(s,x,\gamma) = s^3\mathcal{P}_1^{(3)}(x,\gamma)+s^2\mathcal{P}_1^{(2)}(x,\gamma) +s\mathcal{P}_1^{(1)}(x,\gamma) +\mathcal{P}_1^{(0)}(x,\gamma)+O\bigg(\frac{\ln s}{s}\bigg)
\end{equation*}
where $\mathcal{P}_1^{(i)}(x,\gamma)$ are independent of $s$. Since
\begin{equation*}
	\nu\big|_{\gamma=0} = 0
\end{equation*}
we get from \eqref{P1def} and the previous computations 
\begin{eqnarray*}			
	\int\limits_0^{\gamma}\mathcal{P}_1^{(3)}(x,t)dt&=&\int\limits_0^{\gamma}\Bigg[-8i\bigg(-\frac{2\nu}{3}(1+2\nu^2)\bigg)_{t}
-8i\big(-8\nu^2\nu_{t}\big)+8i\big(-4\nu^3\big)_{t}\Bigg]dt\\
&=&\frac{16}{3}i\nu.
\end{eqnarray*}
Next
\begin{eqnarray*}
	\int\limits_0^{\gamma}\mathcal{P}_1^{(2)}(x,t)dt &=&\int\limits_0^{\gamma}\Bigg[-8i\big(i\nu^2 v\big)_{t}-8i\big(4i\nu\nu_{t}v+2i\nu^2v_{t}\big)+8i\big(3i\nu^2 v\big)_t\Bigg]dt\\
	&=&0.
\end{eqnarray*}
and
\begin{eqnarray*}
	&&\int\limits_0^{\gamma}\mathcal{P}_1^{(1)}(x,t)dt =\int\limits_0^{\gamma}\Bigg[-8i\bigg(2t^3u^2-\frac{t}{4}(u^2-v^2)\bigg)_t-8i\bigg(8\nu^2\nu_tu^2-\nu_t\frac{u^2-v^2}{2}\\
	&&+\nu vv_t+4\nu^2(\nu u^2)_t\bigg)+8i\bigg(6\nu^3u^2-\nu\frac{u^2-3v^2}{4}\bigg)_t +8i\bigg(-\frac{\nu}{2}uu_t+\frac{u}{2}(\nu u)_t\bigg)\\
	&&+8i\bigg(iu\big(i(\nu u)_t+i\nu_tu-i\nu u_t\big)\bigg)-2\bigg(-2i\nu_tx-4i\nu_tu^2\bigg)\Bigg]dt\\
	&=&4i\nu x.
\end{eqnarray*}
We conclude the computations for $\mathcal{P}_1(s,x,\gamma)$ by evaluating
\begin{eqnarray*}
	&&\int\limits_0^{\gamma}\mathcal{P}_1^{(0)}(x,t)dt =\int\limits_0^{\gamma}\Bigg[-8i\bigg(4i\nu^4 uu_x-\frac{2i\nu^4 s^2}{4s^2+x}-\frac{is^2\nu^2}{4s^2+x}-i\nu^2(u^2v+uu_x)\\
	&&-\frac{i}{48}(v^3-3vu^2+2(xv-uu_x))\bigg)_t-8i\bigg(16i\nu^3\nu_{t}uu_x-\frac{8i\nu^3\nu_{t}s^2}{4s^2+x}-2i\nu\nu_{t}(vu^2+uu_x)
\end{eqnarray*}
\begin{eqnarray*}
	&&-2i\nu^2u^2v_{t}+iv_{t}\frac{u^2-v^2}{8}-2i\nu v(\nu u^2)_{t}+8i\nu^2(\nu^2uu_x)_{t}-\frac{4i\nu^2s^2}{4s^2+x}(\nu^2)_{t}\bigg)\\
	&&+8i\bigg(-3i\nu^2u^2v+12i\nu^4uu_x-i\nu^2uu_x-\frac{6i\nu^4s^2}{4s^2+x}+iv\frac{u^2-v^2}{16}\bigg)_t\\
	&&+8i\bigg(-i\nu^2u_xu_{t}+iu_{t}\frac{u_x+uv}{8}-i\nu u(\nu u_x)_{t}+i\nu u_x(\nu u)_{t}+iu(\nu^2u_x)_{t}\\
	&&-iu\frac{(u_x+uv)_{t}}{8}\bigg)-2\bigg(-\frac{xv_{t}}{2}-u^2v_{t}+16\nu\nu_{t}uu_x\bigg)+8i\bigg(-2iu(\nu^2u_x)_{t}\\
	&&+iu\frac{(u_x+uv)_{t}}{4}-2i\nu\nu_{t}uu_x+iu\frac{uv_{t}-vu_{t}}{4}+2i\nu u(\nu u_x)_{t}-2i\nu u_x(\nu u)_{t}\\
	&&-2i\nu\nu_{t}uu_x+2i\nu^2u_xu_{t}\bigg)-2\bigg(-u_xu_{t}\bigg)\bigg]dt\\
	&=&-2\nu^2+\frac{2}{3}(xv-uu_x)+2\int_0^{\gamma}u_xu_{t}dt.
\end{eqnarray*}
The following observation is useful
\begin{prop}\label{prop6} Let $u=u(x,\gamma),\gamma<1$ denote the Ablowitz-Segur solution of the second Painlev\'e equation $u_{xx}=xu+2u^3$ as described in Theorem \ref{theo1}. Put
\begin{equation*}
	F(x,\gamma) = \frac{2}{3}\big(xv(x,\gamma)-u(x,\gamma)u_x(x,\gamma)\big)+2\int\limits_0^{\gamma}u_x(x,t)u_t(x,t)dt,
\end{equation*}
then
\begin{equation*}
	F(x,\gamma) = -\int\limits_x^{\infty}(y-x)u^2(y,\gamma)dy.
\end{equation*}
\end{prop}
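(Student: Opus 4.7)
The plan is to show that $F$ and the right-hand side $G(x,\gamma):=-\int_x^\infty(y-x)u^2(y,\gamma)\,dy$ satisfy the same second-order ODE in $x$, and then match boundary conditions at $+\infty$ using the exponential decay of the Ablowitz-Segur transcendent. First I would differentiate $G$ directly: the boundary term at $y=x$ vanishes because of the $(y-x)$ factor, so $G_x=\int_x^\infty u^2\,dy$ and $G_{xx}=-u^2$.

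The next step is to verify that $F_x=v$. The key identity here is $v_x=-u^2$, which follows instantly from Painlev\'e II:
\begin{equation*}
 v_x=2u_x(u_{xx}-xu-2u^3)-u^2=-u^2.
\end{equation*}
Using this and $uu_{xx}=xu^2+2u^4$, the derivative of the explicit part $\tfrac{2}{3}(xv-uu_x)$ reduces to $-2xu^2-2u^4$. For the integral term, I would interchange $\partial_x$ with $\int_0^\gamma$ and recognize that the integrand becomes an exact $t$-derivative: since $u_{xx}u_t=\partial_t(\tfrac{1}{2}xu^2+\tfrac{1}{2}u^4)$ and $u_xu_{xt}=\partial_t(\tfrac{1}{2}u_x^2)$, the $t$-integration collapses, leaving $xu^2+u^4+u_x^2$ after the Ablowitz-Segur boundary condition $u(x,0)=0$ kills the contribution at $t=0$. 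Adding both pieces yields $F_x=u_x^2-xu^2-u^4=v$, whence $F_{xx}=v_x=-u^2=G_{xx}$.

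It remains to check boundary behaviour at $x\to+\infty$. Because $u(x,\gamma)\sim\gamma\operatorname{Ai}(x)$ together with all its $x$- and $t$-derivatives decay like $e^{-\frac{2}{3}x^{3/2}}$, each constituent of $F$ — namely $xv$, $uu_x$, and $\int_0^\gamma u_xu_t\,dt$ — tends to $0$, and so does $F_x=v$. The same holds for $G$ and $G_x$ by trivial estimates on the remainder integral. From $F_{xx}-G_{xx}\equiv 0$ we get $F-G=a(\gamma)x+b(\gamma)$, and the two limit conditions $F_x-G_x\to 0$ and $F-G\to 0$ force $a=b=0$, completing the proof.

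The routine analytic ingredients (smoothness of $u(x,\gamma)$ in both variables, legality of differentiation under the integral, and the exponential tail bounds) are standard consequences of the Riemann-Hilbert representation of the Ablowitz-Segur solution for $\gamma<1$; the only mildly delicate point is the decay of $\int_0^\gamma u_x(x,t)u_t(x,t)\,dt$, which requires uniformity of the $\operatorname{Ai}$-type bounds on $u_t(x,t)$ for $t\in[0,\gamma]$. Given that $u_t(x,t)$ solves the linearization of Painlev\'e II along the family and inherits the same $e^{-\frac{2}{3}x^{3/2}}$ decay from $\operatorname{Ai}$, this is not a genuine obstacle, merely a verification step.
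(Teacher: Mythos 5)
Your argument is correct and is essentially the paper's own proof: both hinge on the computation $F_x=v$ via the Painlev\'e II equation (with the $t$-integral collapsing because the integrand is an exact $t$-derivative and $u(x,0)\equiv 0$), the identity $v_x=-u^2$, and matching at $x\to+\infty$ using the exponential decay of the Ablowitz--Segur solution. Your detour through the second derivative ($F_{xx}=G_{xx}=-u^2$, hence $F-G$ affine) is an immaterial repackaging of the paper's single integration of $F_x=v$.
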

\begin{proof}
Using the differential equation for $u$ and integration by parts, we obtain
\begin{equation*}
	\frac{\partial}{\partial x}F(x,\gamma) = v(x,\gamma) = \big(u_x(x,\gamma)\big)^2-xu^2(x,\gamma)-u^4(x,\gamma)
\end{equation*}
and hence after integration
\begin{equation}\label{prop6eq}
	F(x,\gamma) = -\int\limits_x^{\infty}(y-x)u^2(y,\gamma)dy +C(\gamma)
\end{equation}
with a constant $C$, only depending on $\gamma$. Since $u$ decays exponentially fast as $x\rightarrow+\infty$, the same limit on both sides of \eqref{prop6eq} gives us the stated identity.
\end{proof}
At this point we summarize the previous computations. As $s\rightarrow\infty$ uniformly on any compact subset of \eqref{excset1}
\begin{equation}\label{firstsummary}
	\int\limits_0^{\gamma} \mathcal{P}_1(s,x,t)dt = i\nu\bigg(\frac{16}{3}s^3+4sx\bigg)-\int\limits_x^{\infty}(y-x)u^2(y,\gamma)dy +2(i\nu)^2+O\bigg(\frac{\ln s}{s}\bigg)
\end{equation}
To move further ahead let us define
\begin{equation*}
		\mathcal{P}_2(s,x,\gamma) = \big(A_{11}-A_{22}\big)\hat{H}_{11}(s)+A_{12}\hat{H}_{21}(s)+A_{21}\hat{H}_{12}(s),\ \hat{H}(s) = \frac{\partial\check{X}}{\partial\gamma}(s)\big(\check{X}(s)\big)^{-1}
\end{equation*}
with (compare \eqref{ABdef})
\begin{equation*}
	A = \frac{\gamma}{2\pi i}\check{X}(s)\begin{pmatrix}
	-1 & 1\\
	-1 & 1\\
	\end{pmatrix}\big(\check{X}(s)\big)^{-1}.
\end{equation*}
Since 
\begin{equation*}
	A_{11} = -\frac{\gamma}{2\pi i}\big(\check{X}_{11}(s)+\check{X}_{12}(s)\big)\big(\check{X}_{21}(s)+\check{X}_{22}(s)\big)
\end{equation*}
we can now use the identities derived in section \ref{sec16} for $\check{X}_{ij}(\pm s)$. With
\begin{eqnarray}\label{Rplusminus1}
	R(\pm 1) &=& I+\frac{1}{2\pi i}\int\limits_{\Sigma_R}R_-(w)\big(G_R(w)-I\big)\frac{dw}{w\mp 1}\\
	&=&I+\frac{1}{2\pi i}\int\limits_{C_0}\big(G_R(w)-I\big)\frac{dw}{w\mp 1}+O\big(s^{-2}\big)
	=I\pm\frac{i}{2s}\begin{pmatrix}
	v & u\\
	-u & -v\\
	\end{pmatrix}+O\big(s^{-2}\big)\nonumber
\end{eqnarray}
and the classical identity
\begin{equation*}
	\Gamma(z)\Gamma(1-z) = \frac{\pi}{\sin\pi z},\ \ z\in\mathbb{C}\backslash\mathbb{Z}
\end{equation*}
one concludes
\begin{equation*}
	A_{11} = \nu+O\big(s^{-1}\big),\ \ A_{22}=-\nu+O\big(s^{-1}\big)
\end{equation*}
uniformly on any compact subset of the set \eqref{excset1}. Also
\begin{equation}\label{A12}
	A_{12} = \nu(16s^3+4xs)^{-2\nu}e^{2s^3\vartheta(1)}\frac{\Gamma(\nu)}{\Gamma(-\nu)} +O\big(s^{-1}\big)
\end{equation}
and
\begin{equation}\label{A21}
	A_{21} = -\nu(16s^3+4xs)^{2\nu}e^{-2s^3\vartheta(1)}\frac{\Gamma(-\nu)}{\Gamma(\nu)} +O\big(s^{-1}\big).
\end{equation}
Next we use \eqref{Rplusminus1} to simplify the identities for $\check{X}_{ij}(s)$ obtained in section \ref{sec16}
\begin{eqnarray*}
	\check{X}_{11}(s) &=& (16s^3+4xs)^{-\nu}e^{i\frac{\pi}{2}\nu}e^{s^3\vartheta(1)}\Big(c_0(-\nu)+c_1(-\nu)\bigg(\ln(16s^3+4xs)-i\frac{\pi}{2}\bigg)\Big)\\
	&&+O\bigg(\frac{\ln s}{s}\bigg)\\
	\check{X}_{12}(s)&=&-(16s^3+4xs)^{-\nu}e^{i\frac{\pi}{2}\nu}e^{s^3\vartheta(1)}\Big(c_0(1+\nu)\frac{\Gamma(1+\nu)}{\Gamma(-\nu)}+c_1(-\nu)\bigg(\ln(16s^3+4xs)\\
	&&+i\frac{\pi}{2}\bigg)\Big)+O\bigg(\frac{\ln s}{s}\bigg)\\
	\check{X}_{21}(s)&=&-(16s^3+4xs)^{\nu}e^{i\frac{\pi}{2}\nu}e^{-s^3\vartheta(1)}\Big(c_0(1-\nu)\frac{\Gamma(1-\nu)}{\Gamma(\nu)}+c_1(\nu)\bigg(\ln(16s^3+4xs)\\
	&&-i\frac{\pi}{2}\bigg)\Big)+O\bigg(\frac{\ln s}{s}\bigg)
\end{eqnarray*}
\begin{eqnarray*}
	\check{X}_{22}(s)&=&(16s^3+4xs)^{\nu}e^{i\frac{\pi}{2}\nu}e^{-s^3\vartheta(1)}\Big(c_0(\nu)+c_1(\nu)\bigg(\ln(16s^3+4xs)+i\frac{\pi}{2}\bigg)\Big)\\
	&&+O\bigg(\frac{\ln s}{s}\bigg).
\end{eqnarray*}
Combined with \eqref{A12} and \eqref{A21}, we deduce the following asymptotics for $\mathcal{P}_2(s,x,\gamma)$
\begin{eqnarray*}
	&&\mathcal{P}_2(s,x,\gamma)=2\nu e^{i\pi\nu}\Big(c_0(\nu)+c_1(\nu)\bigg(\ln(16s^3+4xs)+i\frac{\pi}{2}\bigg)\Big)\Bigg[\nu_{\gamma}\bigg(i\frac{\pi}{2}-\ln(16s^3+4xs)\bigg)\\
	&&\times\Big(c_0(-\nu)+c_1(-\nu)\bigg(\ln(16s^3+4xs)-i\frac{\pi}{2}\bigg)\Big)+\Big(c_0(-\nu)+c_1(-\nu)\bigg(\ln(16s^3+4xs)\\
	&&-i\frac{\pi}{2}\bigg)\Big)_{\gamma}\Bigg]-2\nu e^{i\pi\nu}\Big(c_0(1-\nu)\frac{\Gamma(1-\nu)}{\Gamma(\nu)}+c_1(\nu)\bigg(\ln(16s^3+4xs)-i\frac{\pi}{2}\bigg)\Big)\Bigg[\nu_{\gamma}\bigg(i\frac{\pi}{2}\\
	&&-\ln(16s^3+4xs)\bigg)\Big(c_0(1+\nu)\frac{\Gamma(1+\nu)}{\Gamma(-\nu)}+c_1(-\nu)\bigg(\ln(16s^3+4xs)+i\frac{\pi}{2}\bigg)\Big)\\
	&&+\Big(c_0(1+\nu)\frac{\Gamma(1+\nu)}{\Gamma(-\nu)}+c_1(-\nu)\bigg(\ln(16s^3+4xs)+i\frac{\pi}{2}\bigg)\Big)_{\gamma}\Bigg]-\nu e^{i\pi\nu}\frac{\Gamma(\nu)}{\Gamma(-\nu)}\\
	&&\times\Big(c_0(\nu)+c_1(\nu)\bigg(\ln(16s^3+4xs)+i\frac{\pi}{2}\bigg)\Big)\Big(c_0(1-\nu)\frac{\Gamma(1-\nu)}{\Gamma(\nu)}+c_1(\nu)\bigg(\ln(16s^3+4xs)\\
	&&-i\frac{\pi}{2}\bigg)\Big)_{\gamma}+\nu e^{i\pi\nu}\frac{\Gamma(\nu)}{\Gamma(-\nu)}\Big(c_0(\nu)+c_1(\nu)\bigg(\ln(16s^3+4xs)+i\frac{\pi}{2}\bigg)\Big)_{\gamma}\\
	&&\times\Big(c_0(1-\nu)\frac{\Gamma(1-\nu)}{\Gamma(\nu)}+c_1(\nu)\bigg(\ln(16s^3+4xs)-i\frac{\pi}{2}\bigg)\Big)\\
	&&+\nu e^{i\pi\nu}\frac{\Gamma(-\nu)}{\Gamma(\nu)}\Big(c_0(-\nu)+c_1(-\nu)\bigg(\ln(16s^3+4xs)-i\frac{\pi}{2}\bigg)\Big)\Big(c_0(1+\nu)\frac{\Gamma(1+\nu)}{\Gamma(-\nu)}+c_1(-\nu)\\
	&&\times\bigg(\ln(16s^3+4xs)+i\frac{\pi}{2}\bigg)\Big)_{\gamma}-\nu e^{i\pi\nu}\frac{\Gamma(-\nu)}{\Gamma(\nu)}\Big(c_0(-\nu)+c_1(-\nu)\bigg(\ln(16s^3+4xs)-i\frac{\pi}{2}\bigg)\Big)_{\gamma}\\
	&&\times \Big(c_0(1+\nu)\frac{\Gamma(1+\nu)}{\Gamma(-\nu)}+c_1(-\nu)\bigg(\ln(16s^3+4xs)+i\frac{\pi}{2}\bigg)\Big)+O\bigg(\frac{(\ln s)^3}{s}\bigg),\ \ s\rightarrow\infty.
\end{eqnarray*}
What is left in the identitiy stated in Proposition \ref{prop3} is the term
\begin{equation*}
	\mathcal{P}_3(s,x,\gamma) = \big(B_{11}-B_{22}\big)\tilde{H}_{11}(-s)+B_{12}\tilde{H}_{21}(-s)+B_{21}\tilde{H}_{12}(-s),
\end{equation*}
with 
\begin{equation*}
	B_{11} = \nu+O\big(s^{-1}\big),\hspace{0.5cm} B_{12}=\nu(16s^3+4xs)^{2\nu}e^{-2s^3\vartheta(1)}\frac{\Gamma(-\nu)}{\Gamma(\nu)}+O\big(s^{-1}\big)
\end{equation*}
and
\begin{equation*}
	B_{21} = -\nu(16s^3+4xs)^{-2\nu}e^{2s^3\vartheta(1)}\frac{\Gamma(\nu)}{\Gamma(-\nu)}+O\big(s^{-1}\big),\ \ s\rightarrow\infty
\end{equation*}
which, also here, holds uniformly on any compact subset of \eqref{excset1}. Again \eqref{Rplusminus1} allows us to simplify the idenitites for $\check{X}_{ij}(-s)$ obtained in section \ref{sec16} and we are lead to the following asymptotics for $\mathcal{P}_3(s,x,\gamma)$
\begin{eqnarray*}
	&&\mathcal{P}_3(s,x,\gamma)=2\nu e^{i\pi\nu}\Big(c_0(-\nu)+c_1(-\nu)\bigg(\ln(16s^3+4xs)-i\frac{\pi}{2}\bigg)\Big)\Bigg[\nu_{\gamma}\bigg(i\frac{\pi}{2}\\
	&&+\ln(16s^3+4xs)\bigg)\Big(c_0(\nu)+c_1(\nu)\bigg(\ln(16s^3+4xs)+i\frac{\pi}{2}\bigg)\Big)+\Big(c_0(\nu)+c_1(\nu)\\
	&&\times\bigg(\ln(16s^3+4xs)+i\frac{\pi}{2}\bigg)\Big)_{\gamma}\Bigg]-2\nu e^{i\pi\nu}\Big(c_0(1+\nu)\frac{\Gamma(1+\nu)}{\Gamma(-\nu)}+c_1(-\nu)\\
	&&\times\bigg(\ln(16s^3+4xs)+i\frac{\pi}{2}\bigg)\Big)\Bigg[\nu_{\gamma}\bigg(i\frac{\pi}{2}+\ln(16s^3+4xs)\bigg)\Big(c_0(1-\nu)\frac{\Gamma(1-\nu)}{\Gamma(\nu)}\\
	&&+c_1(\nu)\bigg(\ln(16s^3+4xs)-i\frac{\pi}{2}\bigg)\Big)+\Big(c_0(1-\nu)\frac{\Gamma(1-\nu)}{\Gamma(\nu)}\\
	&&+c_1(\nu)\bigg(\ln(16s^3+4xs)-i\frac{\pi}{2}\bigg)\Big)_{\gamma}\Bigg]-\nu e^{i\pi\nu}\frac{\Gamma(-\nu)}{\Gamma(\nu)}\Big(c_0(-\nu)+c_1(-\nu)\\
	&&\times\bigg(\ln(16s^3+4xs)-i\frac{\pi}{2}\bigg)\Big)\Big(c_0(1+\nu)\frac{\Gamma(1+\nu)}{\Gamma(-\nu)}+c_1(-\nu)\bigg(\ln(16s^3+4xs)+i\frac{\pi}{2}\bigg)\Big)_{\gamma}\\
	&&+\nu e^{i\pi\nu}\frac{\Gamma(-\nu)}{\Gamma(\nu)}\Big(c_0(-\nu)+c_1(-\nu)\bigg(\ln(16s^3+4xs)-i\frac{\pi}{2}\bigg)\Big)_{\gamma}\Big(c_0(1+\nu)\frac{\Gamma(1+\nu)}{\Gamma(-\nu)}\\
	&&+c_1(-\nu)\bigg(\ln(16s^3+4xs)+i\frac{\pi}{2}\bigg)\Big)+\nu e^{i\pi\nu}\frac{\Gamma(\nu)}{\Gamma(-\nu)}\Big(c_0(\nu)+c_1(\nu)\bigg(\ln(16s^3+4xs)\\
	&&+i\frac{\pi}{2}\bigg)\Big)\Big(c_0(1-\nu)\frac{\Gamma(1-\nu)}{\Gamma(\nu)}+c_1(\nu)\bigg(\ln(16s^3+4xs)-i\frac{\pi}{2}\bigg)\Big)_{\gamma}\\
	&&-\nu e^{i\pi\nu}\frac{\Gamma(\nu)}{\Gamma(-\nu)}\Big(c_0(\nu)+c_1(\nu)\bigg(\ln(16s^3+4xs)+i\frac{\pi}{2}\bigg)\Big)_{\gamma}\\
	&&\times\Big(c_0(1-\nu)\frac{\Gamma(1-\nu)}{\Gamma(\nu)}+c_1(\nu)\bigg(\ln(16s^3+4xs)-i\frac{\pi}{2}\bigg)\Big)+O\bigg(\frac{(\ln s)^3}{s}\bigg),\ s\rightarrow\infty.
\end{eqnarray*}
The two expansions for $\mathcal{P}_2(s,x,\gamma)$ and $\mathcal{P}_3(s,x,\gamma)$ combined together allow us now to evaluate
\begin{equation}\label{P23diff}
	\int\limits_0^{\gamma}\big(\mathcal{P}_2(s,x,t)-\mathcal{P}_3(s,x,t)\big)dt.
\end{equation}
For this evaluation it is important to recall the definitions of $c_0(\nu)$ and $c_1(\nu)$ 
\begin{equation*}
	c_0(\nu)=-\frac{1}{\Gamma(\nu)}\big(\psi(\nu)+2\gamma_E\big),\ \ c_1(\nu)=-\frac{1}{\Gamma(\nu)}
\end{equation*}
as well as the functional equation of the Digamma function (see e.g. \cite{BE})
\begin{equation*}
	\psi(z) = \psi(z+1)-\frac{1}{z}=\psi(1-z)-\pi\cot\pi z,\ \ z\in\mathbb{C}\backslash\{0,-1,-2,\ldots\}.
\end{equation*}
It implies
\begin{equation*}
	c_0(\nu)c_1(-\nu)+c_1(\nu)c_0(-\nu)-c_0(1+\nu)\frac{\Gamma(1+\nu)}{\Gamma(-\nu)}c_1(\nu)-c_1(-\nu)c_0(1-\nu)\frac{\Gamma(1-\nu)}{\Gamma(\nu)} = 0
\end{equation*}
and shows therefore that all terms of $O\big((\ln s)^2\big)$ in \eqref{P23diff} vanish. The remaining terms of $O(\ln s)$ and $O(1)$ can be computed in a similar way, we obtain
\begin{eqnarray}\label{P23asy}
	&&\int_0^{\gamma}\Big(\mathcal{P}_2(s,x,t)-\mathcal{P}_3(s,x,t)\Big)dt=6(i\nu)^2\ln s+8(i\nu)^2\ln 2\\
	&& +2\int\limits_0^{\gamma}\nu(t)\big(\ln\Gamma(\nu(t))-\ln\Gamma(-\nu(t))\big)_tdt+O\bigg(\frac{(\ln s)^3}{s}\bigg)\nonumber
\end{eqnarray}
as $s\rightarrow\infty$ uniformly on any compact subset of the set \eqref{excset1}. The latter statement combined with \eqref{firstsummary} implies \eqref{theo1result} with an error term of 
\begin{equation*}
	O\bigg(\frac{(\ln s)^3}{s}\bigg).
\end{equation*}
In order to improve this error term, we use Proposition \ref{prop1} and in particular the identities for $R(\pm s,\pm s)$ derived in section \ref{sec16}. With \eqref{Rplusminus1} after simplification
\begin{equation*}
	R(s,s) = -i\nu(8s^2+2x)-\frac{3(i\nu)^2}{s}+O\big(s^{-2}\big)
\end{equation*}
and also
\begin{equation*}
	R(-s,-s) = -i\nu(8s^2+2x) -\frac{3(i\nu)^2}{s}+O\big(s^{-2}\big)
\end{equation*}
which implies via \eqref{sidentity}
\begin{equation*}
	\frac{\partial}{\partial s}\ln\det(I-\gamma K_{\textnormal{csin}}) = i\nu(16s^3+4x)+\frac{6(i\nu)^2}{s}+O\big(s^{-2}\big),\ \ s\rightarrow\infty
\end{equation*}
uniformly on any compact subset of the set \eqref{excset1}. Integrating the latter equation with respect to $s$ and comparing with \eqref{P23asy}, we complete the proof of Theorem \ref{theo1}.

.


\section{Asymptotics of $\ln\det(I-\gamma K_{\textnormal{cin}})$ - proof of theorem \ref{theo2}}\label{sec18}

The statement on the asymptotic distribution of the zeros of $\det(I-\gamma K_{\textnormal{csin}}),\gamma>1$ will follow from an asymptotic expansion for the latter quantity which we derive from Proposition \ref{prop1} and \ref{prop2}. To this end we trace back the transformations
\begin{equation*}
	X(\lambda)\mapsto T(z)\mapsto S(z)\mapsto R(z)\mapsto P(z) \mapsto Q(z)
\end{equation*}
and use the identities \eqref{RSSplus},\eqref{RSSminus}, which were derived independently of the choice of $\gamma\in\mathbb{R}$. First the residue conditions \eqref{res1} and \eqref{res2} show explicitly that $R(z)$ is analytic at $z=1$
\begin{equation}\label{R1sing}
	R(1) = \Big(Q(1)+(I+B)Q'(1)\Big)\begin{pmatrix}
	1 & 0\\
	p & 0\\
	\end{pmatrix} +(I+B)Q(1)\begin{pmatrix}
	0 & 0\\
	\nu_0 p\frac{3+\frac{x}{4s^2}}{1+\frac{x}{4s^2}} & \frac{1}{2}\\
	\end{pmatrix}
\end{equation}
where $\nu_0$ and $p$ were introduced earlier as
\begin{equation*}
	\nu_0 = \frac{1}{2\pi i}\ln|1-\gamma| = \frac{1}{2\pi i}\ln(\gamma-1),\hspace{0.5cm} p = i\frac{\Gamma(1-\nu)}{\Gamma(\nu)}e^{-2s^3\vartheta(1)}\hat{\beta}_r^2(1).
\end{equation*}
Furthermore
\begin{eqnarray}\label{Rprime1}
	R'(1) &=& \big(Q(1)+(I+B)Q'(1)\big)\begin{pmatrix}
	0 & 0 \\
	\nu_0p\frac{3+\frac{x}{4s^2}}{1+\frac{x}{4s^2}} & \frac{1}{2}\\
	\end{pmatrix} +\bigg(Q'(1)+(I+B)\frac{Q''(1)}{2}\bigg)\begin{pmatrix}
	1 & 0 \\
	p & 0\\
	\end{pmatrix}\nonumber\\
	&&+(I+B)Q(1)\begin{pmatrix}
	0 & 0 \\
	\nu_0p\kappa(s,x) & -\frac{1}{4}\\
	\end{pmatrix}
\end{eqnarray}
where
\begin{equation*}
	\kappa(s,x) = \frac{1}{2}\bigg(\frac{10}{3}\Big(1+\frac{x}{4s^2}\Big)+\Big(\nu_0-\frac{1}{2}\Big)\Big(3+\frac{x}{4s^2}\Big)^2\bigg)\Big(1+\frac{x}{4s^2}\Big)^{-2}.
\end{equation*}
Similarly
\begin{equation}\label{R-1sing}
	R(-1)=\Big(Q(-1)-(I-B)Q'(-1)\Big)\begin{pmatrix}
	0 & p\\
	0 & 1\\
	\end{pmatrix}-(I-B)Q(-1)\begin{pmatrix}
	-\frac{1}{2} & -\nu_0p\frac{3+\frac{x}{4s^2}}{1+\frac{x}{4s^2}}\\
	0 & 0 \\
	\end{pmatrix}
\end{equation}
and
\begin{eqnarray}\label{chidprime-1}
	&&R'(-1) =\Big(Q(-1)-(I-B)Q'(-1)\Big)\begin{pmatrix}
	-\frac{1}{2} & -\nu_0p\frac{3+\frac{x}{4s^2}}{1+\frac{x}{4s^2}}\\
	0 & 0 \\
	\end{pmatrix}\\
	&&+\bigg(Q'(-1)-(I-B)\frac{Q''(-1)}{2}\bigg)\begin{pmatrix}
	0 & p\\
	0 & 1\\
	\end{pmatrix}-(I-B)Q(-1)\begin{pmatrix}
	-\frac{1}{4} & \nu_0p\kappa(s,x)\\
	0 & 0\\
	\end{pmatrix}.\nonumber
\end{eqnarray}
To evaluate $Q(\pm 1)$, we iterate. First for any $z\in\Sigma_R$ from \eqref{esti7}, \eqref{esti8} and residue theorem
\begin{eqnarray*}
	Q_-(z)&=&I+\frac{1}{2\pi i}\int\limits_{\Sigma_R}Q_-(w)\big(G_Q(w)-I\big)\frac{dw}{w-z_-}\\
	&=& I+\frac{i}{2sz}\Bigg(\begin{pmatrix}
	v & -u\\
	u & -v\\
	\end{pmatrix}
	-\begin{pmatrix}
	\frac{1}{z-1} & 0\\
	0 & \frac{1}{z+1}\\
	\end{pmatrix}\big(B_0(z)\big)^{-1}\\
	&&\times\begin{pmatrix}
	v & ue^{-2\pi i\nu}\\
	-ue^{2\pi i\nu} & -v\\
	\end{pmatrix}B_0(z)\begin{pmatrix}
	z-1 & 0\\
	0 & z+1\\
	\end{pmatrix}\Bigg) +O\big(s^{-2}\big),
\end{eqnarray*}
where we used that
\begin{eqnarray*}
	&&\begin{pmatrix}
	\frac{1}{w-1} & 0 \\
	0 & \frac{1}{w+1}\\
	\end{pmatrix}\big(B_0(w)\big)^{-1}\begin{pmatrix}
	v & ue^{-2\pi i\nu}\\
	-ue^{2\pi i\nu} & -v\\
	\end{pmatrix}B_0(w)\begin{pmatrix}
	w-1 & 0\\
	0 & w+1\\
	\end{pmatrix}\\
	&=&\begin{pmatrix}
	v & -u\\
	u & -v\\
	\end{pmatrix} +4w\begin{pmatrix}
		0 & u\\
		u & 0\\
		\end{pmatrix}\nu_0 +O\big(w^2\big),\hspace{0.5cm}w\rightarrow 0.
\end{eqnarray*}
Thus
\begin{eqnarray*}
	&&Q(\pm 1) = I+\frac{1}{2\pi i}\int\limits_{\Sigma_R}\big(Q_-(w)-I\big)\big(G_Q(w)-I\big)\frac{dw}{w\mp 1} +\frac{1}{2\pi i}\int\limits_{\Sigma_R}\big(G_Q(w)-I\big)\frac{dw}{w\mp 1}\\
	&=&I\pm\frac{i}{2s}\begin{pmatrix}
	v & -u\\
	u & -v\\
	\end{pmatrix}\mp\frac{\nu_0}{s^2}\begin{pmatrix}
	-u^2 & -u_x\\
	u_x & u^2\\
	\end{pmatrix}
	+\frac{1}{8s^2}\begin{pmatrix}
	u^2-v^2 & -2(u_x+uv)\\
	-2(u_x+uv) & u^2-v^2\\
	\end{pmatrix}\\
	&&+O\big(s^{-3}\big),\hspace{0.5cm} s\rightarrow\infty,
\end{eqnarray*}
uniformly on any compact subset of the set \eqref{excset2}, and similarly 
\begin{eqnarray*}
	Q'(\pm 1) &=&-\frac{i}{2s}\begin{pmatrix}
	v & -u\\
	u & -v\\
	\end{pmatrix}+\frac{\nu_0}{s^2}\begin{pmatrix}
	-u^2 & -u_x\\
	u_x & u^2\\
	\end{pmatrix}\mp\frac{1}{4s^2}\begin{pmatrix}
	u^2-v^2 & -2(u_x+uv)\\
	-2(u_x+uv) & u^2-v^2\\
	\end{pmatrix}\\
	&&+O\big(s^{-3}\big).
\end{eqnarray*}
Since
\begin{equation*}
	p = i\frac{\Gamma(1-\nu)}{\Gamma(\nu)}e^{-2s^3\vartheta(1)}\hat{\beta}_r^2(1) = ie^{-i\sigma},\hspace{0.5cm} \bar{p}=p^{-1}
\end{equation*}
with $\sigma =\sigma(s,x,\gamma)$ as in \eqref{sigmadef}, we obtain for the matrix
\begin{equation*}
	\Theta = \Bigg(Q(1)\binom{1}{p},Q(-1)\binom{p}{1}\Bigg),
\end{equation*}
which appears in \eqref{Bmatrix}, that
\begin{eqnarray}\label{bdeterminant}	
\det \Theta&=&-2ip\Big(\cos\sigma-\frac{v}{s}\sin\sigma+\frac{u}{s}\nonumber\\
&&\hspace{1.5cm}+\frac{2i\nu_0}{s^2}\big(u_x+u^2\sin\sigma\big)+\frac{u^2-v^2}{2s^2}\cos\sigma+O\big(s^{-3}\big)\Big),
\end{eqnarray}
Since we agreeded that $s$ stays away from the small neigbhorhood of the points $\{s_n\}$ defined by
\begin{equation}\label{exceptset}
	\cos\sigma(s_n,x,\gamma)=0
\end{equation}
we see that for all sufficiently large $s$ lying outside of the zero set of the latter transcendental equation, the stated determinant is non zero. Back to \eqref{Bmatrix}, this implies
\begin{eqnarray*}
	B&=&-\frac{2p}{\det \Theta}\Bigg[\begin{pmatrix}
	\sin\sigma & -1\\
	1 & -\sin\sigma\\
	\end{pmatrix}+\frac{\cos\sigma}{s}\begin{pmatrix}
	v & -u\\
	u & -v\\
	\end{pmatrix}\\
&&+\frac{1}{s^2}\bigg\{2i\nu_0\cos\sigma\begin{pmatrix}
	-u^2 & -u_x\\
	u_x & u^2\\
	\end{pmatrix}+\frac{1}{2}\begin{pmatrix}
	-v^2\sin\sigma &-u^2\\
	u^2 & v^2\sin\sigma\\
	\end{pmatrix}+\frac{u_x}{2}\begin{pmatrix}
	-1 & \sin\sigma\\
	-\sin\sigma & 1\\
	\end{pmatrix}\\
	&&+uv\sin\sigma\begin{pmatrix}
	0 & 1\\
	-1 & 0 \\
	\end{pmatrix}\bigg\}+O\big(s^{-3}\big)\Bigg],\hspace{0.5cm}s\rightarrow\infty
\end{eqnarray*}
and with
\begin{eqnarray*}
	-\frac{2p}{\det \Theta}&=&-\frac{i}{\cos\sigma}\Bigg[1+\frac{1}{s}\bigg(v\tan\sigma -\frac{u}{\cos\sigma}\bigg)+\frac{1}{s^2}\bigg(v^2\tan^2\sigma-\frac{2uv\sin\sigma}{\cos^2\sigma}+\frac{u^2}{\cos^2\sigma}\\
	&&-\frac{u^2-v^2}{2}-2i\nu_0u^2\tan\sigma-\frac{2i\nu_0 u_x}{\cos\sigma}\bigg)+O\big(s^{-3}\big)\Bigg]
\end{eqnarray*}
we obtain in turn
\begin{eqnarray*}
	B&=&-i\Bigg[\frac{1}{\cos\sigma}\begin{pmatrix}
	\sin\sigma & -1\\
	1 & -\sin\sigma\\
	\end{pmatrix}+\frac{1}{s}\bigg\{\frac{v\sin\sigma-u}{\cos^2\sigma}\begin{pmatrix}
	\sin\sigma & -1\\
	1 & -\sin\sigma\\
	\end{pmatrix}+\begin{pmatrix}
	v & -u\\
	u & -v\\
	\end{pmatrix}\bigg\}\\
	&&+\frac{1}{s^2}\bigg\{\frac{1}{2\cos\sigma}\begin{pmatrix}
	2v^2\sin\sigma-u^2\sin\sigma-u_x& -v^2+u_x\sin\sigma\\
	v^2-u_x\sin\sigma & -2v^2\sin\sigma+u^2\sin\sigma+u_x\\
	\end{pmatrix}\\
	&&-\frac{2i\nu_0(u^2\sin\sigma+u_x)}{\cos^2\sigma}\begin{pmatrix}
	\sin\sigma & -1\\
	1 & -\sin\sigma\\
	\end{pmatrix}+\frac{(v\sin\sigma-u)^2}{\cos^3\sigma}\begin{pmatrix}
	\sin\sigma & -1\\
	1 & -\sin\sigma\\
	\end{pmatrix}\\
	&&+\begin{pmatrix}
	-uv\cos\sigma -2i\nu_0u^2 & u^2\cos\sigma-2i\nu_0u_x\\
	-u^2\cos\sigma +2i\nu_0u_x & uv\cos\sigma+2i\nu_0u^2\\
	\end{pmatrix}\bigg\}+O\big(s^{-3}\big)\Bigg],\hspace{0.5cm}s\rightarrow\infty
\end{eqnarray*}
and all expansions are uniformly on any compact subset of the set \eqref{excset2}. Let us go back to \eqref{RSSplus} and \eqref{RSSminus}. Since $\nu = \nu_0+\frac{1}{2}$, we notice
\begin{eqnarray*}
	R(s,s)&=& -i\nu_0(8s^2+2x) -i(4s^2+x)\\
	&&+ip(16s^2+4x)\Big[R_{12}'(1)R_{22}(1)-R_{22}'(1)R_{12}(1)\Big]+O\big(s^{-1}\big)
\end{eqnarray*}
and similarly
\begin{eqnarray*}
	R(-s,-s)&=&-i\nu_0(8s^2+2x)-i(4s^2+x)+ip(16s^2+4x)\\
	&&\times\Big[R_{11}'(-1)R_{21}(-1)-R_{21}'(-1)R_{11}(-1)\Big]+O\big(s^{-1}\big)
\end{eqnarray*}
Next
\begin{eqnarray*}
	R_{12}'(1) &=& \frac{1}{2}\big(Q(1)+(I+B)Q'(1)\big)_{12}-\frac{1}{4}\big((I+B)Q(1)\big)_{12}\\
	R_{22}(1) &=&\frac{1}{2}\big((I+B)Q(1)\big)_{22}\\
	R_{22}'(1)&=& \frac{1}{2}\big(Q(1)+(I+B)Q'(1)\big)_{22}-\frac{1}{4}\big((I+B)Q(1)\big)_{22}\\
	R_{12}(1) &=&\frac{1}{2}\big((I+B)Q(1)\big)_{12}
\end{eqnarray*}
and therefore
\begin{eqnarray*}
	&&R'_{12}(1)R_{22}(1)-R_{22}'(1)R_{12}(1) = \frac{1}{4}\Big[\big(Q(1)+(I+B)Q'(1)\big)_{12}\big((I+B)Q(1)\big)_{22}\\
	&&\hspace{0.5cm}-\big(Q(1)+(I+B)Q'(1)\big)_{22}\big((I+B)Q(1)\big)_{12}\Big].
\end{eqnarray*}
Now combine the previously derived information on $Q(1),Q'(1)$ and $B$ to derive
\begin{equation*}
	ip(16s^2+4x)\Big[R'_{12}(1)R_{22}(1)-R_{22}'(1)R_{12}(1)\Big]\\
	=(4s^2+x)\big(i+\tan\sigma\big)+\alpha_++O\big(s^{-1}\big)
\end{equation*}
with a function $\alpha_+=\alpha_+(s,x,\gamma)$ such that
\begin{equation*}
	\int\alpha_+(s,x,\gamma)ds = O(\ln s),\ \ s\rightarrow\infty.
\end{equation*}
Following the same computations for $R(-s,-s)$
\begin{equation}\label{r-s-sfinal}
	R(-s,-s)=-i\nu_0(8s^2+2x)+(4s^2+x)\tan\sigma +\alpha_-+O\big(s^{-1}\big).
\end{equation}
where
\begin{equation*}
	\int\alpha_-(s,x,\gamma)ds = O(\ln s)
\end{equation*}
and together from Proposition \ref{prop1}
\begin{equation}\label{Sderivsing}
\frac{\partial}{\partial s}\ln\det(I-\gamma K_{\textnormal{csin}}) = i\nu_0(16s^2+4x)-(8s^2+2x)\tan \sigma -\alpha_+-\alpha_- +O\big(s^{-1}\big).
\end{equation}
Opposed to the latter equation we now recall Proposition \ref{prop2} and evaluate the logarithmic $x$-derivative. For $\gamma>1$
\begin{eqnarray*}
	X_1&=&\lim_{\lambda\rightarrow\infty}\Big(\lambda\big(X(\lambda)e^{-i(\frac{4}{3}\lambda^3+x\lambda)\sigma_3}-I\big)\Big)\\
		&=&-2\nu s\sigma_3+s(\sigma_3+B)+\frac{is}{2\pi}\int\limits_{\Sigma_R}Q_-(w)(w)\big(G_Q(w)-I\big)dw
\end{eqnarray*}
where the expansion for $B$ has already been computed. From this and residue theorem
\begin{eqnarray*}
	X_1&=& -2\nu s\sigma_3+s\sigma_3-\frac{is}{\cos\sigma}\begin{pmatrix}
	\sin\sigma & -1\\
	1 & -\sin\sigma\\
	\end{pmatrix}-\frac{i}{2}\begin{pmatrix}
	v & -u\\
	u & -v\\
	\end{pmatrix}\\
	&&-\frac{i(v\sin\sigma-u)}{\cos^2\sigma}\begin{pmatrix}
	\sin\sigma & -1\\
	1 & -\sin\sigma\\
	\end{pmatrix}+O\big(s^{-1}\big),
\end{eqnarray*}
hence
\begin{equation}\label{xsingfinal}
	\frac{\partial}{\partial x}\ln\det(I-\gamma K_{\textnormal{csin}})=4i\nu_0s+v-2s\tan\sigma-\frac{2v}{\cos^2\sigma}+\frac{2u\sin\sigma}{\cos^2\sigma}+O\big(s^{-1}\big).
\end{equation}
Integrating both identities \eqref{Sderivsing}, \eqref{xsingfinal} and comparing the result, we conclude for $s\rightarrow\infty$ outside the zero set \eqref{exceptset}
\begin{eqnarray}\label{final}
	\ln\det(I-\gamma K_{\textnormal{csin}})&=&i\nu_0\bigg(\frac{16}{3}s^3+4sx\bigg)+\ln|\cos\sigma(s,x,\gamma)|+\chi_1\ln s\nonumber\\
	&&-\int\limits_x^{\infty}(y-x)u^2(y,\gamma)dy +\chi_2+O\big(s^{-1}\big),
\end{eqnarray}
with real-valued constants $\chi_i$, solely depending on $\gamma$ and the error term is uniform on any compact subset of the set \eqref{excset2}. The given expansion \eqref{final} verifies the claim on the asymptotic distribution of the zeros of the Fredholm determinant as given in Theorem \ref{theo2}.

\begin{remark} We want to emphasize that our proof of Theorem \ref{theo2} produces in fact an entire asymptotic series for $\det(I-\gamma K_{\textnormal{csin}})$ in case $\gamma>1$, similarly to what we obtained in Theorem \ref{theo1}. However due to the increased amount of computations within the Riemann-Hilbert analysis, we chose not to compute the constant $\chi_1(\gamma)$.
\end{remark}

\end{document}